\newtheorem{observation}{Observation}
\newcounter{claimcounter}
\numberwithin{claimcounter}{theorem}
\newenvironment{myclaim}{\refstepcounter{claimcounter}{\par\noindent{\bf Claim \theclaimcounter:}}}{\par}
\newenvironment{customcase}[1]
	{\innercustomcase}
	{\endinnercustomcase}
\newcommand{\scr}[1]{\mathcal{#1}}
\newcommand{\textscr}[1]{\mathcal{#1}}
\begin{document}

\title{LP Rounding and Combinatorial Algorithms for Minimizing Active and
Busy Time
\thanks{This work has been supported by NSF Grants
CCF-1217890 and CCF-0937865.  A preliminary version of this
paper appeared in ACM Symposium on Parallelism in Algorithms
and Architectures (SPAA 2014).}}

\author{Jessica Chang \and
	Samir Khuller \and
	Koyel Mukherjee}
\institute{University of Maryland, College Park \newline
\email{\{jschang,samir,koyelm\}@cs.umd.edu}}
\maketitle

\begin{abstract}
We consider fundamental scheduling problems motivated by 
energy issues. In this framework, we are given a set of jobs, each with
a release time, deadline and required processing length. 
The jobs need to be scheduled on a machine 
so that at most $g$ jobs are active at any given time.
The duration for which a machine is active (i.e., ``on'') 
is referred to as its \emph{active time}. 
The goal is to find a feasible schedule for all jobs, 
minimizing the total active time.
When preemption is allowed at integer time points, 
we show that a minimal feasible schedule already
yields a 3-approximation (and this bound is tight)
and we further improve this to a 2-approximation via
LP rounding techniques.
Our second contribution is for the non-preemptive
version of this problem. However, since even asking if 
a feasible schedule on one machine exists is NP-hard, we
allow for an unbounded number of virtual machines, each
having capacity of $g$. This problem is known as the 
\emph{busy time} problem in the literature and a 4-approximation
is known for this problem. We develop a new combinatorial algorithm
that gives a $3$-approximation. 
Furthermore, we consider the preemptive busy time problem,
giving a simple and exact greedy algorithm 
when unbounded parallelism is allowed, i.e., $g$ is unbounded.
For arbitrary $g$, this yields an algorithm that is $2$-approximate.
\end{abstract}


\section{Introduction}
\label{sec:intro}

Scheduling jobs on multiple parallel or batch machines has received
extensive attention in the computer science and 
operations research communities for decades.  
For the most part, these studies have focused primarily 
on ``job-related'' metrics such as minimizing makespan, 
total completion time, flow time, tardiness
and maximizing throughput under various deadline constraints.
Despite this rich history, some of the most environmentally
(not to mention, financially) costly scheduling
problems are those driven by a pressing need to
reduce energy consumption and power costs, e.g. at data centers.
In general, this need is not addressed by the traditional 
scheduling objectives.  Toward that end, our work is most 
concerned with minimization of the total time that a machine 
is on \cite{Chang12,IPDPS09,FSTTCS10,mertzios2012} 
to schedule a collection of jobs. 
This measure was recently introduced in an effort to 
understand energy-related 
problems in cloud computing contexts, and the 
busy/active time models cleanly 
capture many central issues in this space.  
Furthermore, it has connections to several key problems
in optical network design, perhaps most notably
in the minimization of the fiber costs of Optical 
Add Drop Multiplexers (OADMs) \cite{IPDPS09}.  
The application of busy time models to optical network design 
has been extensively outlined in the literature 
\cite{IPDPS09,EuropAr08,ISAAC05,WZ03}. 

With the widespread adoption of data centers and cloud computing,
recent progress in virtualization has facilitated the consolidation of 
multiple virtual machines (VMs) into fewer hosts. 
As a consequence, many computers can be shut off, 
resulting in substantial power savings.
Today, products such as Citrix XenServer an VMware Distributed Resource
Scheduler (DRS) offer VM consolidation as a feature. 
In this sense, minimizing busy time (described next) 
is closely related to the basic problem of mapping VMs to physical hosts.

In the active time model, 
the input is a set $\scr{J}$ of $n$ jobs $J_1, \ldots, J_n$ that
needs to be scheduled on one machine.  Each job $J_j$ has
release time $r_j$, deadline $d_j$ and length $p_j$.
We assume that time is \textit{slotted} (to be defined formally)
and that all job parameters are integral.
At each time slot, we have to decide if 
the machine is ``on'' or ``off''. When the machine
is on at time $[t, t+1)$, time slot $t$ is active, 
and we may schedule one unit of up to $g$ distinct jobs in it,
as long as we satisfy the release time and deadline 
constraints for those jobs. When the machine is off at time $[t, t+1)$,
no jobs may be scheduled on it at that time.
Each job $J_j$ has to be scheduled in at least $p_j$ active slots between $r_j$ and $d_j$.  The goal is to schedule all the jobs while minimizing 
the {\em active time} of the machine, 
i.e., the total duration that the machine is on.
%
In the special case where jobs have unit
length ($p_j=1$ for all $j$), there is a fast exact algorithm due to 
Chang, Gabow and Khuller~\cite{Chang12}.
However, in general, the exact complexity of 
minimizing active time remains open. In this work, 
we demonstrate that any \textit{minimal feasible solution}
has active time within 3 of the optimal active time.
We also show that this bound is tight. We then further improve the
approximation ratio by considering a natural IP formulation and
rounding a solution of its LP relaxation to obtain 
a solution that is within twice the integer optimum 
(again this bound is tight). 
We note that if the on/off decisions are given, 
then determining feasibility is straightforward by reducing the
problem to a flow computation.  
We conjecture that the active time problem itself is likely NP-hard.
%

We next consider a slight variant of the active time problem, called the
{\em busy time} problem. 
Again, the input is a set $\scr{J}$ of $n$ jobs $J_1, \ldots J_n$ that
need to be scheduled.  Each job $J_j$ has release time 
$r_j$, deadline $d_j$ and length $p_j$.  In the busy time problem,
we want to assign and non-preemptively schedule the jobs 
over a set of identical machines.  (A job is non-preemptive means
that once it starts, it must continue to be processed without interruption
until it completes.  In other words, if job $J_j$ starts at time
$s_j$, then it finishes at time $s_j + p_j$.)
We want to partition jobs into groups so that at most $g$ jobs 
are running simultaneously on a given machine. 
Each group will be scheduled on its own machine.

A machine is {\em busy} at time $t$ means that 
there is at least one job running on it at $t$; 
otherwise the machine is {\em idle} at $t$. The amount of time 
during which a machine $M$ is busy is called its 
{\em busy time}, denoted {\em busy(M)}. 
The objective is to find a feasible 
schedule of all the jobs on the machines (partitioning jobs into
groups) to minimize the cumulative busy time over all the 
machines.  We call this the \textit{busy time problem}, consistent
with the literature~\cite{IPDPS09,FSTTCS10}.
That the schedule has access to an unbounded number of machines 
is motivated by the case when each group is a virtual machine.

A well-studied special case of this model is one in which each job $J_j$ 
is \textit{rigid}, i.e.  $d_j = p_j + r_j$.  
Here, there is no question about when each job must start.
Jobs of this form are called {\em interval jobs}. 
(Jobs that are not interval jobs are called \textit{flexible} jobs.)
The busy time problem for interval jobs is $NP$-hard~\cite{WZ03} 
even when $g=2$. 
Thus, we will look for approximation 
algorithms. What makes this special case 
particularly central is that one can convert an instance 
of flexible jobs to an instance of interval jobs in polynomial time,
by solving a dynamic program with unbounded $g$ \cite{FSTTCS10}.
The dynamic program's solution fixes the positions of the jobs to minimize
their ``shadow'' (projection on the time-axis, formally defined
in Section 4). The shadow of this
solution with $g=\infty$ is the smallest possible of
any solution to the original problem and can lower bound
the optimal solution for bounded $g$.  Then, we adjust the release
times and deadlines to artificially fix the position of each job
to where it was scheduled in the solution for unbounded $g$.
This creates an instance of interval jobs. 
We then run an approximation for interval jobs on this
instance.  Figure~\ref{fig1} shows a set of jobs 
and the corresponding packing
that yields an optimal solution, i.e., minimizing busy time.

\begin{figure}[htbp]
	\centering
	\includegraphics[width=0.75\textwidth]{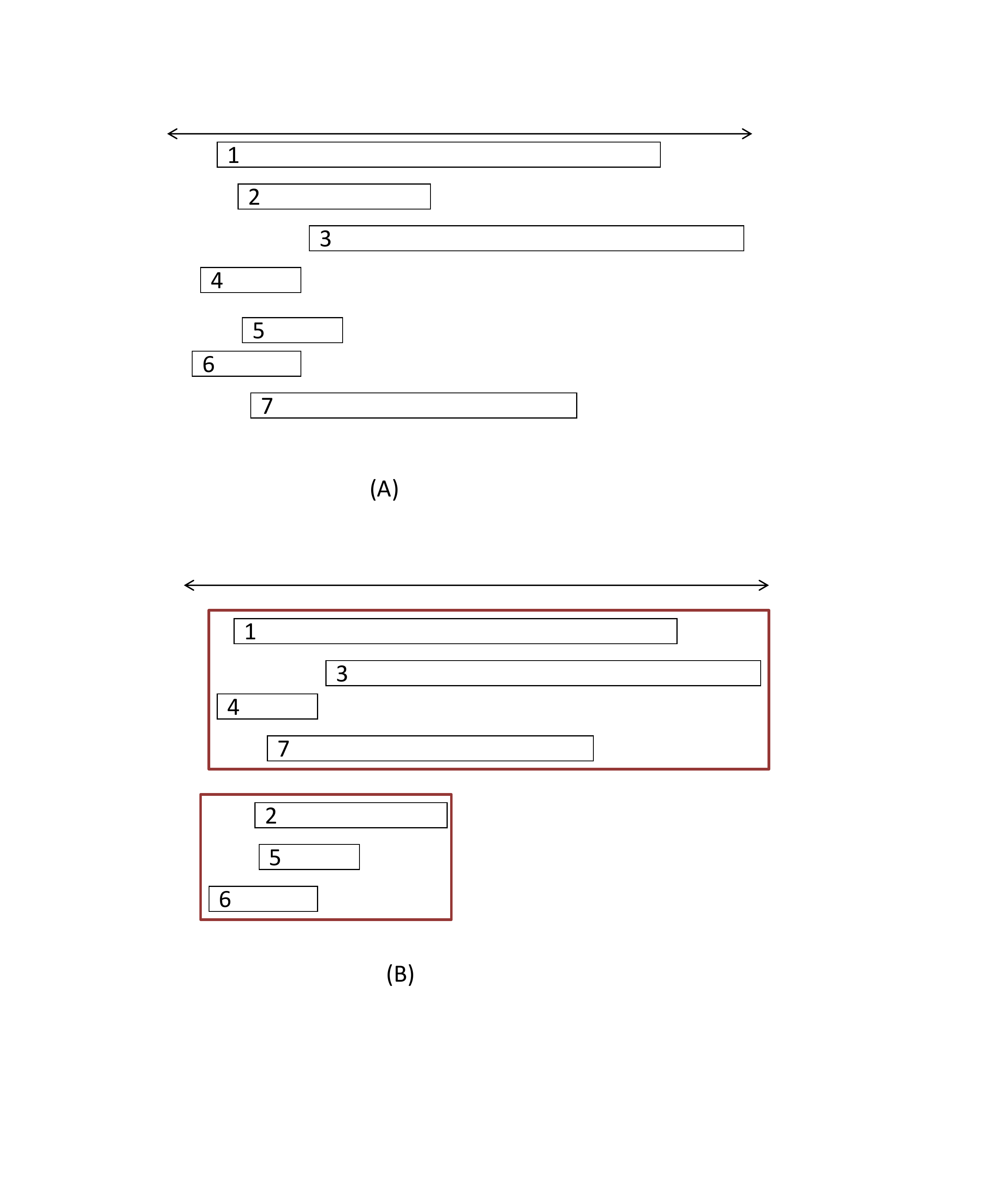}
	\caption{(A) Collection of interval jobs with unit demand, 
numbered arbitrarily. (B) Optimal packing of the jobs on two 
machines with $g=3$ minimizing total busy time.} 
\label{fig1}
\vspace{-10pt}
\end{figure}

Busy time scheduling in this form was first studied by 
Flammini et al. \cite{IPDPS09}. They prove that a simple 
greedy algorithm \textsc{FirstFit} for interval jobs is
$4$-approximate.
It considers jobs 
in non-increasing order by length, greedily packing each 
job in the first group in which it fits.  
In the same paper, they highlight an instance on which the cost of  
\textsc{FirstFit}  is three times that of the optimal solution.
Closing this gap would be very interesting\footnote{In 
an attempt to improve approximation guarantees, 
Flammini et al.~\cite{IPDPS09} consider two special cases.
The first case pertains to ``proper intervals'', where no job's 
interval is strictly contained in that of another.
For instances of this type, they show that the greedy algorithm
ordering jobs by release times is actually 2-approximate.
The second special case involve instances whose corresponding interval
graph is a clique - in other words, there exists a time $t$
such that each interval $[r_j,d_j)$ contains $t$.
In this case, a greedy algorithm also yields a 2-approximation.
As with proper intervals, it is not obvious that 
minimizing busy time on clique instances is NP-hard.
However, when the interval jobs are both proper 
{\em and} form a clique, a very simple dynamic 
program gives an optimal solution \cite{mertzios2012}. }. 

However, unknown to Flammini et al, earlier work by Alicherry and Bhatia
\cite{ESA03} and Kumar and Rudra \cite{FSTTCS05} already considered
a problem in the context of wavelength assignment. Their algorithms
immediately yield two different 2-approximations for 
the busy time problem with interval jobs (see the Appendix).

Khandekar et al. \cite{FSTTCS10} consider the generalization 
in which each job has an associated width or ``demand'' on
its machine.
For any set of jobs assigned to the same machine, the cumulative
demand of the active ones can be at most $g$ at any time.  
The authors apply \textsc{FirstFit} ideas to this problem
and obtain a 5-approximation.
The main idea involves partitioning 
jobs into those of ``narrow'' and ``wide'' demand.
Each wide job is assigned to its own machine, while
\textsc{FirstFit} is applied to the set of narrow jobs.
In addition, the authors give improved bounds for
special cases of busy-time scheduling with jobs of unit
demand.  When the interval jobs form a clique, they provide
a PTAS.  They also give an
exact algorithm when the intervals of the jobs are laminar, i.e.
two jobs' intervals intersect only if one interval is contained
in the other. However, we note that for the case of unit width
jobs, the same approach gives a 4-approximation for flexible
jobs, by solving a dynamic program for $g=\infty$. 
It turns out that the methods of Kumar and Rudra~\cite{FSTTCS05}
and Alicherry and Bhatia~\cite{ESA03} can be similarly
extended to also give 4-approximations.  There are
examples demonstrating that this analysis is tight.
We break this barrier with a different approach, obtaining
a 3-approximation.

\subsection{Problem Definition}
In this section we formally define the notions of \emph{active time} and 
\emph{busy time}. Both models are motivated 
by the total amount of time that a machine is actively working. 

\subsubsection{Active Time.}
The input consists of a set of jobs $\textscr{J}$, where each job $j$ 
has a release time $r_j$, a deadline $d_j$, and a length $p_j$. 
We let slot $t$ denote the unit of time $[t-1,t)$. 
Since time is slotted,
job $j$ can start as early at $r_j$ and as late as $d_j -p_j$.
For example, if a unit-length job has release time 1 and deadline 2,
it can be scheduled in slot $t=2$, but not in slot $t=1$.  
Equivalently, it can have a start time of 1, but not a start
time of zero.
The set of slots $\{r_j+1,\ldots,d_j\}$ comprise job $j$'s
\textit{window}.
We sometimes abuse notation and let 
$[r_j,d_j)$ refer to the slots in $j$'s window.
Then $p_j$ units of job $j$ must be scheduled in its window,
but not necessarily in consecutive slots (in other words
jobs can be considered to be a \emph{chain} of  
$p_j$ unit jobs), with identical release times and deadlines
and the restriction that in any time slot, 
at most one of these unit jobs can be scheduled. 
The running times of the algorithms are  polynomial in
$n$ and $P (=\sum_{j \in \textscr{J}}{p_j})$. 
We have access to a single machine
that is either active (`on') or not at any point of time.
The machine can process only $g$ jobs at any time instant. 
Since there is a single machine, we simply refer to the time 
axis henceforth in place of the machine. 
Assume without loss of generality that the
earliest release time of any job $j \in \textscr{J}$ is $0$ and
denote by $T$ 
the latest relevant time slot, i.e., $T = \max_j d_j$.
Then, it will be convenient to let $\textscr{T}$ refer to
the set of time slots $\{1,\ldots,T\}$.

When preemption is not allowed, determining whether
there exists a feasible solution for
non-unit length jobs becomes strongly NP-hard, by a reduction
from 3-PARTITION, even for the special case
when the windows of all the jobs are identical.

\subsubsection{Busy Time.}
Given that even determining the feasibility for
the non-preemptive problem is
hard in the active time model, we consider a relaxation of the model.
The key difference between busy time and active time is that while
active time assumes access to a single machine,
busy time can open an unbounded number of machines if necessary.
(One can think of each machine as a virtual machine.)
As in the active time problem, there is a set of jobs,
$\textscr{J}$, where each job $j$
has a release time $r_j$, a deadline $d_j$,
and a length $p_j$, and each machine has capacity $g$.
There is no restriction
on the integrality of the release times or deadlines.
The jobs need to be partitioned into groups so that when
each group is scheduled non-preemptively on its own machine,
at most $g$ jobs are running simultaneously on a given machine. 
We say that a machine is {\em busy}
at time $t$ if there is at least one job running on the machine
at $t$; otherwise the machine is {\em idle}. The time intervals
during which a machine $M$ is processing at least one job is called its 
{\em busy time}, denoted as {\em busy(M)}. 
The goal is to partition the jobs onto machines so that no
machine is working on more than $g$ jobs at a time, and
the cumulative busy time over all machines is minimized.
We will call this the \textit{busy time problem}.
Note that every instance is feasible in the busy time model.

\subsection{Our Results}
For the active time problem when we are allowed preemption at
integer time points and time is slotted, we first show that considering 
any minimal feasible solution gives us a 3-approximation.  A minimal feasible
solution can be found by starting with a feasible solution and making slots 
inactive in any order, as long as the instance remains feasible (we will explain later
how to test feasibility given a set of active slots).
We then consider
a natural IP formulation for this problem and show that considering
an LP relaxation allows us to convert a fractional schedule to an
integral schedule by paying a factor of 2. As a by product, this 
yields a 2-approximation. We note that the integrality gap
of 2 is tight~\cite{Chang12}.
 
%
Since the busy time problem for interval jobs 
is NP-hard \cite{WZ03}, the focus in this paper is
the development of a polynomial-time algorithm \textsc{GreedyTracking}
with a worst-case approximation guarantee of 3, improving the
previous bounds of 4 (as mentioned earlier, there seem to be
several different routes to arrive at this bound). 
As before we use the dynamic program to first solve 
the problem for unbounded $g$ \cite{FSTTCS10}
and then reduce the problem to the case of interval jobs.
The central idea 
is to iteratively identify a set of disjoint jobs;
we call such a set a ``track''.  Then, the subset of jobs assigned to
a particular machine is the union of $g$ such tracks; we call 
the set of jobs assigned to the same machine a \textit{bundle} of jobs.
Intuitively, this approach is less myopic than \textsc{FirstFit},
which schedules jobs one at a time.
We also construct examples where \textsc{GreedyTracking}
yields a solution twice that of the optimum for interval jobs.

One important consequence of \textsc{GreedyTracking} is an
improved bound for the busy time problem on flexible jobs.
Similar to Khandekar et al.~\cite{FSTTCS10}, 
we first solve the problem assuming
unbounded machine capacity to get a solution that minimizes the
projection of the jobs onto the time-axis.  Then, we 
can map the original instance to one of interval jobs,
forcing each job to be done exactly as it was in the
unbounded capacity solution.  We prove that in total,
this approach has busy time within 3 times that of 
the optimal solution.  In addition, we explore the 
preemptive version of the problem and provide a greedy 
$2$-approximation.

\subsection{Related Work}
While in both active time and busy time models, 
we assign jobs to machines where up to $g$ jobs can run concurrently,
the key difference between the two models is that
the former model operates on a single machine,
while the latter assumes access to an unbounded number of machines.
In the active time model, when jobs are unit in length, 
Chang, Gabow and Khuller~\cite{Chang12} present a fast
linear time greedy algorithm.  When
the release times and deadlines can be real numbers, they
give an $O(n^7)$ dynamic program to solve it; this result
has since been improved to an $O(n^3)$-time algorithm
in the work of Koehler and Khuller~\cite{KK13}.
In fact, their result holds even for a finite
number of machines.  Chang, Gabow and Khuller~\cite{Chang12} 
also consider generalizations to the
case where jobs can be scheduled in a union of time intervals
(in contrast to the usual single release time and deadline).
Under this generalization,
once the capacity constraints exceeds two, minimizing
active time becomes NP-hard via a reduction from 3-EXACT-COVER.

Mertzios et al.~\cite{mertzios2012} consider a dual 
problem to busy time minimization: the \emph{resource 
allocation maximization version}.
Here, the goal is to maximize the number of jobs 
scheduled without violating a budget constraint given
in terms of busy time and the parallelism constraint. 
They show that the maximization version is 
NP-hard whenever the (busy time) minimization problem is NP-hard. 
They give a $6$-approximation algorithm for clique instances and a 
polynomial time algorithm for proper clique instances for the maximization 
problem. 

The online version of both the busy time minimization and resource 
allocation maximization was considered by Shalom et 
al.~\cite{shalom2012online}. They prove a lower bound of $g$ where $g$ 
is the parallelism parameter, for any deterministic algorithm for 
general instances and give an $O(g)$-competitive algorithm. Then they 
consider special cases, and show a lower bound of $2$ and an upper 
bound of $(1+\phi)$ 
for a one-sided clique instances (a special case of laminar cliques),
where $\phi$ is the golden ratio. They 
also show that the bounds increase by a factor of $2$ for clique 
instances. For the maximization version of the problem with parallelism 
$g$ and busy time budget $T$, they show that any deterministic algorithm 
cannot be more than $gT$ competitive. They give a $4.5$-competitive 
algorithm for one-sided clique instances. 

Flammini et al.~\cite{flammini2011optimizing} consider the problem of 
optimizing the cost of regenerators that need to 
be placed on light paths in optical networks, after 
every $d$ nodes, to regenerate the signal. They show that the 
$4$-approximation algorithm for minimizing busy time~\cite{IPDPS09} solves 
this problem for a path topology and $d=1$ and extend it to ring and tree 
topologies for general $d$. 

Faigle et al.~\cite{faigle1996randomized} consider the online problem of 
maximizing ``busy time'' but their objective function is totally different 
from ours. Their setting consists of a single machine and no parallelism. 
Their objective is to maximize the total length of intervals 
scheduled as they arrive online, such that at a given time, at most one 
interval job has been scheduled on the machine. They give a randomized 
online algorithm for this problem.


\section{Active time scheduling of preemptive jobs}

\begin{definition}
A job $j$ is said to be \emph{live} at 
slot $t$ if $t\in [r_j+1 , d_j]$.
\end{definition}

\begin{definition}
A slot is \emph{active} if at least one job is 
scheduled in it. It is \emph{inactive} otherwise. 
\end{definition}

\begin{definition}
An active slot is \emph{full} if there are $g$ jobs 
assigned to it. It is \emph{non-full} otherwise. 
\end{definition}

A feasible solution $\sigma$ is specified by a set 
of active time slots $\textscr{A} \subseteq \textscr{T}$, 
and a mapping or assignment of jobs to time slots in 
$\textscr{A}$, such that at most 
$g$ jobs are scheduled in any slot in $\textscr{A}$, 
at most one unit of any job $j$ is scheduled in 
any time slot in $\textscr{A}$ and every job $j$ 
has been assigned to $p_j$ active slots within its window. 
Once the set $\textscr{A}$ of active 
slots has been determined, a feasible integral assignment can be 
found by computing a max-flow computation on the following
graph.

Define $G_{\mathbf{feas}}$ to be the
flow network whose vertex set is a source $s$, a
sink $v$ and a bipartite subgraph $(\scr{J},\scr{T})$
where $\scr{J}$ contains a node $x_j$ for every job $J_j$
and $\scr{T}$ contains a node for every timeslot, from 
1 to $T$.  For each node $x_j \in \scr{J}$,
there is an edge from $s$ to $x_j$ with capacity
$p_j$.  For any job $j$ that is feasible in slot $t$, 
there is an edge of capacity one between $x_j$ and
$t$'s node in $\scr{T}$.  Finally, there exists an
edge of capacity $g$ between each {\em active} slot node of $\scr{T}$
and the sink (see Figure~\ref{fig:feasflow}). Nodes 
$t_i$ corresponding to inactive slots $i$
may be deleted; alternatively, the capacity of
edge $(t_i,v)$ can be set to $0$.
An active time instance has a feasible schedule 
if and only if the maximum flow that can be sent on 
the corresponding graph $G_{\mathbf{feas}}$
has value $P = \sum_{j = 1}^n p_j$.  (Note that since
capacities are integral, the maximum flow is
also integral without loss of generality.)


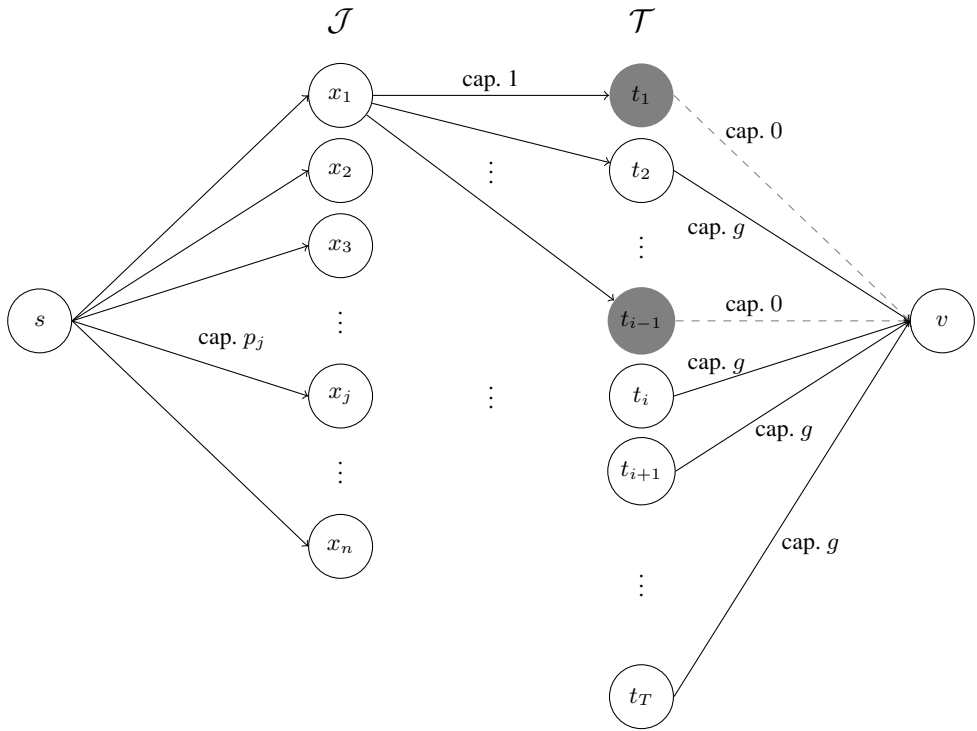
\begin{figure}
\centering
\begin{tikzpicture}[auto]
\tikzstyle{circ}=[draw, circle, minimum size=.85cm]
\tikzstyle{empty_circ}=[fill=gray, circle, minimum size=.85cm]

\node[circ] (s) at (-1,4) {$s$};
\node[circ] (t) at (11,4) {$v$};

\node (J) at (3,8) {\large $\mathcal{J}$};
\node[circ] (j1) at (3,7) {$x_1$};
\node[circ] (j2) at (3,6) {$x_2$};
\node[circ] (j3) at (3,5) {$x_3$};
\node[rotate=90] at (3,4) {\ldots};
\node[circ] (j) at (3,3) {$x_j$};
\node[rotate=90] at (3,2) {\ldots};
\node[circ] (n) at (3,1) {$x_n$};

\node (T) at (7,8) {\large $\mathcal{T}$};
\node[empty_circ] (t1) at (7,7) {$t_1$};
\node[circ] (t2) at (7,6) {$t_2$};
\node[rotate=90] at (7,5) {\ldots};
\node[empty_circ] (1i) at (7,4) {$t_{i-1}$};
\node[circ] (i) at (7,3) {$t_{i}$};
\node[circ] (i1) at (7,2) {$t_{i+1}$};
\node[rotate=90] at (7,0.5) {\ldots};
\node[circ] (tT) at (7,-1) {$t_T$};

\draw [->] (s.east) -- (j1.west);
\draw [->] (s.east) -- (j2.west);
\draw [->] (s.east) -- (j3.west);
\draw [->] (s.east) to node {cap. $p_j$} (j.west);
\draw [->] (s.east) -- (n.west);

\draw [->] (j1) to (t1);
\node at (5,7.2) {cap. 1};
\draw [->] (j1) to (t2);
\node[rotate=90] at (5,6) {\ldots};
\draw [->] (j1) to (1i);

\node[rotate=90] at (5,3) {\ldots};

\draw [->, gray, dashed] (t1.east) -- (t.west);
\node at (8.5,6.5) {cap. $0$};

\draw [->] (t2.east) to (t.west);
\node at (8,5.2) {cap. $g$};

\draw [->, gray, dashed] (1i.east) to (t.west);
\node at (8.5,4.2) {cap. $0$};

\draw [->] (i.east) to (t.west);
\node at (8,3.4) {cap. $g$};

\draw [->] (i1.east) to (t.west);
\node at (8.9,2.5) {cap. $g$};

\draw [->] (tT.east) to (t.west);
\node at (9.25,1) {cap. $g$};

\end{tikzpicture}
\caption{Flow network $G_{\mathbf{feas}}$. 
Nodes in $\mathcal{T}$ corresponding to inactive
slots are gray.  An integral flow of value 
$\sum_j p_j$ corresponds to a feasible schedule.}
\label{fig:feasflow}
\end{figure}

The cost of a feasible solution $\sigma$ is the 
number of active slots in the solution, denoted by $|\textscr{A}|$.  
Let $\textscr{A}_f$ denote the set of active slots that are full, and let 
$\textscr{A}_n$ denote the set of active slots which are non-full. 
Therefore, $|\textscr{A}| = |\textscr{A}_f| + |\textscr{A}_n|$. 

\begin{definition}
A minimal feasible solution is one for which closing any active slot 
renders the remaining active slots an infeasible solution.
In other words, no proper subset of active slots can feasibly
satisfy the entire job set. 
\end{definition}

Given a feasible solution, one can easily find a minimal feasible solution by closing
slots (in any order) and checking if a feasible solution still exists.

\begin{definition}
A non-full-rigid job is one which is scheduled for 
one unit in every non-full slot where it is live. 
\end{definition}

\begin{lemma}
For any minimal feasible solution $\sigma$, there 
exists another solution $\sigma'$ of same cost, 
where every active slot that is non-full, has at least one non-full-rigid 
job scheduled in it. 
\end{lemma}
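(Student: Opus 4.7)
The plan is to construct $\sigma'$ by choosing, among all feasible schedules that use $\sigma$'s active set $\scr{A}$, one that is extremal with respect to a suitable potential, and then exploiting the minimality of $\sigma$ in a local exchange argument. Any such schedule has the same number of active slots and hence the same cost as $\sigma$, and since minimality of $\scr{A}$ is a property of the job set alone, closing any slot of $\scr{A}$ renders the instance infeasible regardless of which on-$\scr{A}$ schedule we currently hold.

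Concretely, I take $\sigma'$ to be a feasible schedule on $\scr{A}$ that maximizes the number of non-full slots containing at least one non-full-rigid job, breaking ties by maximizing the total number of (job, non-full live slot) assignment incidences. I then suppose for contradiction that some $t \in \scr{A}_n$ in $\sigma'$ contains no non-full-rigid job. For each job $j$ scheduled at $t$, non-rigidity supplies a non-full slot $t_j \ne t$ where $j$ is live but unscheduled; since $t_j$ has at most $g-1$ jobs, it has room to absorb $j$. I consider the auxiliary bipartite escape problem of moving each job at $t$ to one of its candidate escape slots, subject to the residual capacity $g$ minus the current load at every escape slot. If a simultaneous escape exists, moving each job of $t$ to a distinct escape slot empties $t$, so $\scr{A}\setminus\{t\}$ supports a feasible schedule, directly contradicting minimality. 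Otherwise, a Hall-type obstruction identifies a subset $S$ of jobs at $t$ whose combined escape capacity is tight; performing the maximum feasible escape for $S$ necessarily saturates at least one escape slot to full, which removes it from the non-full set and shrinks the non-full live set of every job it previously supported. A careful choice of this move renders some job $j^* \in S$ non-full-rigid in the modified schedule, strictly increasing the primary potential and contradicting the extremality of $\sigma'$.

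The main obstacle is ensuring that the Hall-type swap strictly improves the potential without destroying any previously rigid coverage elsewhere. I would restrict attention to a minimal witnessing tight set in the residual escape graph, analogous to taking a shortest augmenting path in bipartite matching, so that all structural changes are localized to one tight component; non-full slots outside that component retain their rigid jobs unchanged. Within the component, the saturation of an escape slot strictly shrinks the set of unscheduled non-full live slots for some job in $S$, and the swap can be arranged so that this set becomes empty, yielding the needed rigid job covering $t$. Careful bookkeeping of which slots saturate and how the non-full/full partition reshifts, together with the secondary tie-breaker on total non-full incidences to rule out cycling, is the technical heart of the argument.
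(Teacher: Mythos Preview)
Your extremal-plus-Hall plan can be made to work, but it is far more elaborate than necessary, and the obstacles you flag as the ``technical heart'' turn out not to arise.

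The paper's proof is a direct greedy. Fix a non-full slot $t$ with no non-full-rigid job and repeatedly move any job $j$ currently at $t$ to any non-full active slot in $j$'s window where $j$ is not yet scheduled; such a slot exists precisely because $j$ is not non-full-rigid. After at most $g-1$ such moves either $t$ is empty, so $\mathcal{A}\setminus\{t\}$ supports a feasible schedule and minimality is contradicted, or some remaining job at $t$ has no escape slot, which by definition makes it non-full-rigid in the current schedule. Then proceed to the next bad non-full slot. There is no Hall condition, no extremal potential, no tight-set analysis.

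Your worries about destroying previously established rigidity are unfounded in both approaches, and seeing why also explains why the paper's simple iteration terminates. A job that is already non-full-rigid has no escape slot whatsoever, so it is never selected to move; and across the whole process only the slot currently being drained ever loses jobs, so no slot transitions from full back to non-full. Consequently, once a non-full slot acquires a non-full-rigid witness, that witness stays put and stays rigid. The same observation shows that in your framework the primary potential would strictly increase after the Hall-blocked swap (every escape slot of the stranded $j^\star$ became full, and no non-full-rigid job is live at $t$ to be disturbed), so the secondary tie-breaker and the minimal-tight-set refinement are unnecessary. The paper simply reaches the same empty-or-stuck dichotomy one job at a time rather than through a simultaneous matching.
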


\begin{proof}
Consider any non-full slot in a minimal feasible solution $\sigma$, which does not have any non-full-rigid job 
scheduled in it. Move any job in that slot to any other (non-full, active) slot that it may be 
scheduled in, and where it is not already scheduled\footnote{Such an action might make the other
slot a full slot, and change the status of a job which could become non-full-rigid.}. There must at least one such slot, otherwise
this would be a non-full-rigid job. Continue this process for as long as possible. Note that in moving 
these jobs, we are not increasing the cost of the solution, as we are only moving jobs to already active slots. 
If we can do this until there are no jobs scheduled in this slot, then we would have found a smaller cost solution, 
violating our assumption of minimal feasibility. Otherwise, there must be at least one job left in that slot, 
which cannot be moved to any other active slots. This can only happen if all the slots in the window of this job 
are either full, or inactive, or non-full where one unit of this job has been scheduled, thus making 
this a non-full rigid job. 

Continue this process until for each non-full slot, 
there is at least one non-full-rigid job scheduled. 
\qed
\end{proof}

\begin{lemma}
\label{lemma:partition}
There exists a minimal set $\textscr{J^*}$ of 
non-full-rigid jobs such that
\begin{enumerate} 
\item at least one of these jobs is scheduled in every non-full slot, and
\item no jobs $J_j$ and $J_{j'}$ exist in $\textscr{J^*}$ such that 
	$J_j$'s window is contained in $J_{j'}$s window, and
\item at every time slot, at most two of the jobs in 
	$\textscr{J^*}$ are live. 
\end{enumerate}

\end{lemma}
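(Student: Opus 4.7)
The plan is to construct $\textscr{J}^*$ greedily by sweeping non-full slots from left to right. Let $s_1$ be the earliest non-full slot; by the previous lemma, at least one non-full-rigid job is scheduled in $s_1$, so pick one with the latest deadline and call it $J_{i_1}$. Inductively, once $J_{i_1},\ldots,J_{i_k}$ have been chosen, let $s_{k+1}$ be the earliest non-full slot with $s_{k+1}>d_{J_{i_k}}$, and let $J_{i_{k+1}}$ be a non-full-rigid job scheduled in $s_{k+1}$ with the largest deadline. Stop when every non-full slot lies in the window of some selected job. Property~(1) is immediate from this construction.

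For property~(2), I would verify that the selected jobs, in the order of selection, have strictly increasing deadlines and strictly increasing release times. The deadlines strictly increase because $d_{J_{i_k}}<s_{k+1}\le d_{J_{i_{k+1}}}$. For release times, if we had $r_{J_{i_{k+1}}}\le r_{J_{i_k}}$, then $J_{i_{k+1}}$'s window would also contain $s_k$ (using also $d_{J_{i_{k+1}}}>d_{J_{i_k}}\ge s_k$); since $J_{i_{k+1}}$ is non-full-rigid and $s_k$ is non-full, $J_{i_{k+1}}$ would have been a candidate at step $k$ with a larger deadline than $J_{i_k}$, contradicting the greedy choice. Strictly monotone endpoints in the same direction immediately rule out containment of any selected window inside another.

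The main obstacle is property~(3), and this is where the value of the greedy rule is really exploited. Suppose for contradiction that three selected jobs $J_{i_a},J_{i_b},J_{i_c}$ with $a<b<c$ are simultaneously live at some slot $t$, so $r_{J_{i_c}}+1\le t\le d_{J_{i_a}}$. I would show that $J_{i_c}$'s window also contains the slot $s_b$ for which $J_{i_b}$ was selected. For the left endpoint, $r_{J_{i_c}}+1\le t\le d_{J_{i_a}}\le d_{J_{i_{b-1}}}<s_b$, where the last inequality holds because $s_b$ is by definition the first non-full slot past $d_{J_{i_{b-1}}}$. For the right endpoint, $s_b\le d_{J_{i_b}}<d_{J_{i_c}}$ by the strict deadline monotonicity established for~(2). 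Since $J_{i_c}$ is non-full-rigid and $s_b$ is non-full, $J_{i_c}$ is scheduled in $s_b$, so it was a candidate at step $b$ with deadline $d_{J_{i_c}}>d_{J_{i_b}}$, contradicting the greedy choice of $J_{i_b}$. Hence no three selected jobs are live at any common slot, establishing~(3).
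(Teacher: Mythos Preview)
Your proof is correct and takes a genuinely different route from the paper's. The paper argues by \emph{pruning}: it starts with any covering family of non-full-rigid jobs, deletes nested ones to obtain~(2), and then, for~(3), repeatedly locates the first slot where three or more of the remaining jobs are live and shows one can discard all but the one with the smallest deadline and the one with the largest deadline, using the non-full-rigid property to argue that the intermediate jobs are redundant for covering the subsequent non-full slots. Your proof instead \emph{constructs} $\textscr{J}^*$ directly with a left-to-right greedy sweep using the latest-deadline rule, and derives all three properties from the greedy invariant. Your approach is cleaner: the monotonicity of release times and deadlines falls out immediately, and the contradiction for~(3) is a single exchange argument (if three chosen jobs overlap, the last one was already a strictly better candidate when the middle one was picked). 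The paper's argument, by contrast, has to reason carefully about what happens to coverage between $d_{j_1}$ and the next non-full slot $t'$ as it discards jobs. What the paper's approach buys is that it makes transparent \emph{why} any covering set can be thinned down to one with the desired structure, whereas your argument shows directly that one particular canonical set has it; for the purposes of the subsequent $3$-approximation bound, either suffices.
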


Let $OPT$ be the cost of the optimal solution.
This lemma allows us to charge the cost of the non-full slots 
to the quantity $|\textscr{J^*}| \le 2OPT$.
In addition, any optimal solution must have cost at least
the sum of job lengths divided by capacity $g$.
The cost incurred by the full slots is trivially no more than
this latter quantity.  Thus,
once we bound the cost of the non-full slots
by $2OPT$, the final
approximation ratio of three follows (see Theorem~\ref{main:thm1}).

\begin{proof}
Consider a set $\textscr{J^*}$ of non-full-rigid 
jobs that are covering all the non-full slots. 
Suppose it contains a pair of non-full-rigid jobs 
$j$ and $j'$, such that the $[r_j, d_j) \subseteq [r_{j'}, d_{j'})$. 
One unit of $j'$ must be scheduled in every non-full 
slot in the window of $j'$. However, this 
also includes the non-full slots in the window of $j$, 
hence we can discard $j$ from $\textscr{J^*}$ without loss.  
We repeat this with every pair of non-full-rigid jobs in $\textscr{J^*}$, such that the window of one 
is contained within the window of another, until there exists no such pair. 

Now consider the first time slot $t$ where $3$ or more jobs of $\textscr{J^*}$ are live. Let these jobs be 
numbered according to their deadlines ($j_1, j_2, j_3, \ldots. j_{\ell}, \ \ell\geq 3$). 
By definition, the deadline of all of these jobs 
must be at least $t$ since they are all 
live at $t$. Moreover, they are all non-full-rigid, 
by their membership in $\textscr{J^*}$, 
which means they are scheduled in every non-full active slot 
in their window.  Since $\textscr{J^*}$ is minimal, 
no job window is contained within another, 
hence none of the jobs $j_2, \ldots, j_{\ell}$ have 
release time earlier than that of $j_1$. Therefore, all non-full slots before the deadline 
of $j_1$ must be charging either $j_1$ or some other job with an earlier release time. Consequently,  
discarding any of the jobs $j_2, \ldots, j_{\ell}$ will not affect the charging of these slots. 
  
Let $t'$ be the first non-full active slot after the deadline of $j_1$; 
then $t'$ must charge one of $j_2, j_3, \ldots, j_{\ell}$.  
Among these, all jobs which have a deadline earlier than $t'$, 
can be discarded from 
$\textscr{J^*}$, without any loss, since no non-full 
slot needs to charge it. 
Hence, let us assume that all of these jobs 
$j_2, j_3,\ldots, j_{\ell}$ are live at $t'$. 
However, all of them being non-full-rigid, and $t'$ 
being non-full and active, all of them must have 
one unit scheduled in $t'$. Therefore, if we discard 
all of the jobs $j_2, \ldots, j_{\ell-1}$ and 
keep $j_{\ell}$ alone, that would be enough since it can be charged 
all the non-full slots between $t'$ and its deadline $d_{\ell}$. 
Hence, after discarding these intermediate jobs from 
$\textscr{J^*}$, there would be 
only two jobs $j_1$ and $j_{\ell}$ left which overlap at $t$. 

Repeat this for the next slot $t''$ where $3$ or more 
jobs of $\textscr{J^*}$ are live, until there 
are no such time slots left. 
\qed
\end{proof}

The cost of the non-full slots of the minimal feasible solution $\sigma'$ is $|\textscr{A}_n| \leq \sum_{j \in \textscr{J^*}}{p_j}$. 

\begin{theorem}
\label{main:thm1}
The cost of any minimal feasible solution is at most $3OPT$.
\end{theorem}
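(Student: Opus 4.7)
The plan is to split the cost of any minimal feasible solution into its full and non-full components and bound each separately against $OPT$.

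First, for the full slots, I would observe that each full slot processes exactly $g$ units of work. Since the total work processed across all active slots is $P = \sum_j p_j$, we have $|\mathcal{A}_f| \cdot g \le P$. On the other hand, any feasible schedule (including the optimum) can process at most $g$ units per active slot, so $OPT \ge P/g$. Combining these yields $|\mathcal{A}_f| \le OPT$ immediately.

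Next, for the non-full slots, I would invoke Lemma~\ref{lemma:partition} to produce the set $\mathcal{J}^*$ of non-full-rigid jobs that covers every non-full active slot with at most two of its members live at any given time. Since every non-full slot charges at least one unit to some $j \in \mathcal{J}^*$, we get the bound $|\mathcal{A}_n| \le \sum_{j \in \mathcal{J}^*} p_j$ already noted just before the theorem statement. The key remaining step is to bound $\sum_{j \in \mathcal{J}^*} p_j$ against $2 \cdot OPT$. For this I would argue as follows: the optimal schedule must itself process every job $j \in \mathcal{J}^*$ in full within its window, so the cumulative processing of $\mathcal{J}^*$ is distributed over $OPT$'s active slots; but because at most two jobs of $\mathcal{J}^*$ are live in any slot, at most two units of work from $\mathcal{J}^*$ can appear in any single active slot of the optimum. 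Therefore $\sum_{j \in \mathcal{J}^*} p_j \le 2 \cdot OPT$, which gives $|\mathcal{A}_n| \le 2 \cdot OPT$.

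Putting the two bounds together yields $|\mathcal{A}| = |\mathcal{A}_f| + |\mathcal{A}_n| \le OPT + 2\cdot OPT = 3\cdot OPT$, which is exactly the claim of the theorem. The main conceptual step, already handled by Lemma~\ref{lemma:partition}, is establishing that the non-full-rigid jobs charging the non-full slots can be thinned out so that at most two are simultaneously live; once that structural property is in hand, the $2$-live bound immediately translates (via the optimum's own feasibility) into the needed $2\cdot OPT$ charge, and the rest of the argument is just bookkeeping. No case analysis or further structural work should be required beyond these two clean inequalities.
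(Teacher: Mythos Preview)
Your proposal is correct and follows essentially the same approach as the paper. The paper's proof differs only in presentation: rather than directly observing that at most two units of $\mathcal{J}^*$-work can land in any slot of the optimum, it explicitly partitions $\mathcal{J}^*$ into two sets $\mathcal{J}_1,\mathcal{J}_2$ of pairwise window-disjoint jobs (possible precisely because at most two are live at once) and bounds each $\sum_{j\in\mathcal{J}_i} p_j$ by $OPT$; your direct counting argument is an equivalent and slightly cleaner way to extract the same $2\cdot OPT$ bound.
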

\begin{proof}
It follows from Lemma \ref{lemma:partition} that $\textscr{J^*}$ can be partitioned into two job sets $\textscr{J}_1$ and $\textscr{J}_2$, 
such that the jobs in each set have windows disjoint from one another. 
Therefore the sum of the processing times of 
the jobs in each such partition is 
a lower bound on the cost of any optimal solution. 
Hence, the cost of of the non-full slots is 
$|\textscr{A}_n| \leq \sum_{j \in \textscr{J^*}}{p_j} \leq \sum_{j\in \textscr{J}_1}{p_j} + \sum_{j'\in \textscr{J}_2}{p_{j'}} \leq 2OPT$. 
Furthermore, the full slots charge once to $OPT$, since they have a mass of $g$ scheduled in them, which is a lower bound on 
$OPT$. $|\textscr{A}_f| \leq \frac{\sum_{j \in \textscr{J}}{p_j}}{g} \leq OPT$. 
Therefore, in total the cost of any minimal feasible solution $cost(\sigma) = cost(\sigma') = |\textscr{A}|= 
|\textscr{A}_f| +|\textscr{A}_n| \leq 3OPT$. 
This proves the theorem.
\qed
\end{proof}

The above bound is asymptotically tight as demonstrated
by the following example. 
There are two jobs each of length $g$.  One has
window $[0,2g)$ and the other has window $[g,3g)$. 
Also, there are $g-2$ rigid jobs, each of length $g-2$, with windows
$[g+1, 2g-1)$. Finally, there are $g-2$ unit-length jobs with windows 
$[g+1, 2g)$ and another $g-2$ unit-length jobs with windows $[g, 2g-1)$. 
See Figure~\ref{fig:activetimeexample}.
An optimal solution schedules the two longest jobs in $[g, 2g)$, 
one set of $g-2$ unit-length jobs at time slot $g$, 
and the other set of unit-length jobs at time slot $2g-1$,
for an active time of $g$.
However, a minimal feasible solution could schedule the 
two sets of $g-2$ unit-length jobs in the window $[g+1,2g-1)$, 
with the rigid jobs of length $g-2$. Now, the two longest 
jobs cannot fit anywhere in the window $[g+1, 2g-1)$, 
since these slots are full.
The minimal feasible solution must put these jobs somewhere; 
one feasible way 
would be to pack one of the longest 
jobs from $[1,g+1)$ and the other one from $[2g-1,3g-1)$. 
The total cost of this minimal feasible solution
would then be $3g-2$, which approaches
to $3OPT$ as $g$ increases.

\begin{figure}[htbp]
	\centering
	\includegraphics[width=0.75\textwidth]{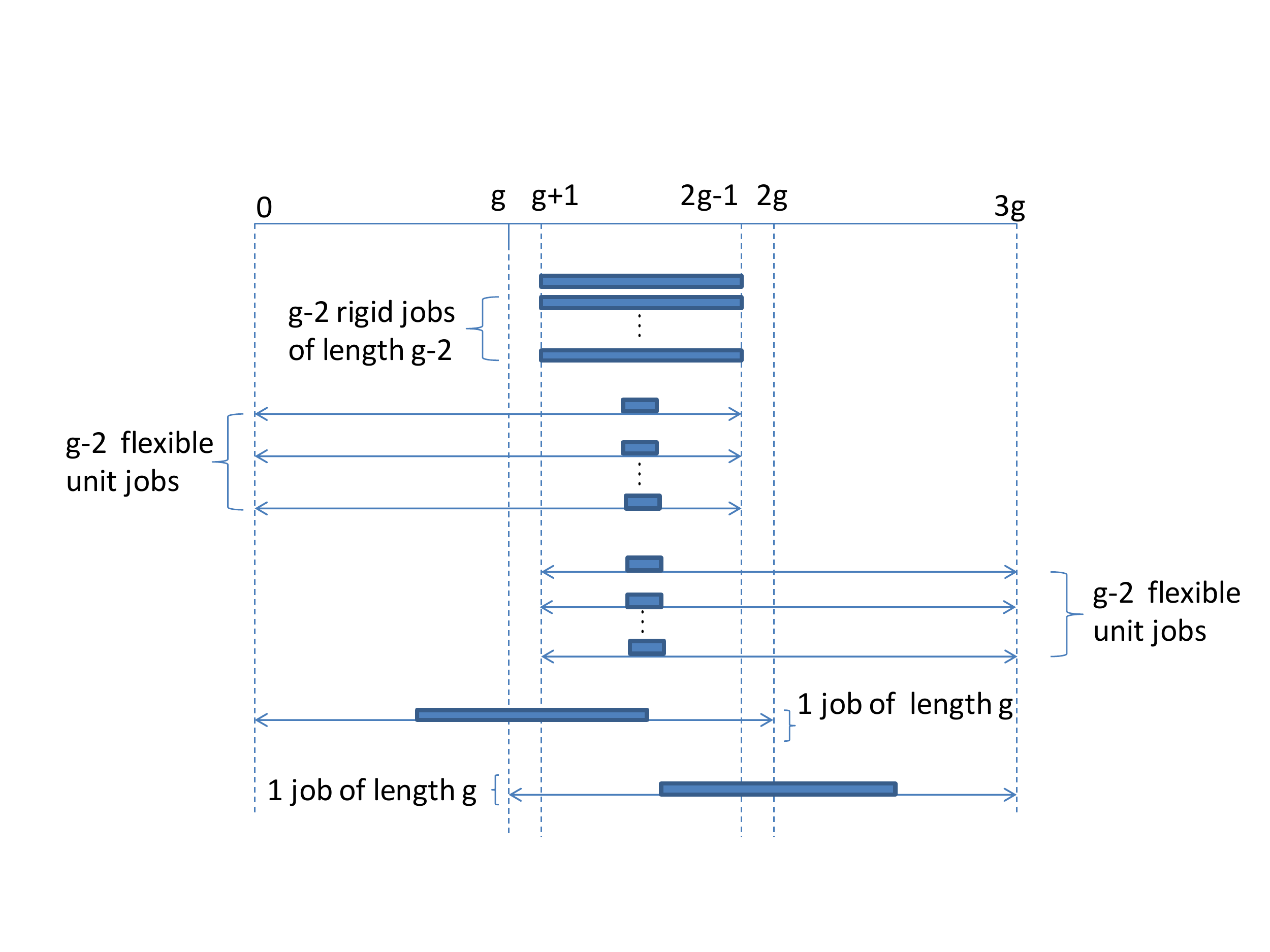}
	\caption{An instance where a minimal feasible solution 
		has active time almost 3 times the optimal solution.}
	\label{fig:activetimeexample}
\end{figure}

%
%
\section{A $2$-approximation LP rounding algorithm} 
In this section, we give a 2-approximate LP-rounding algorithm 
for the active time problem with non-unit length jobs,
where preemption is allowed at integral boundaries. 
Consider the following integer program for this problem. 

\begin{equation*}
\begin{array}{lrll}
IP: \text{min.} & \displaystyle\sum\limits_{t \in \mathcal{T}} y_t 
    & & \text{s.t.}\\[2.0ex]
& x_{t,j} &\le y_t \hspace{1cm} &\forall \ t \in \textscr{T}, j \in 
    \textscr{J}\\[2.0ex]
& \displaystyle\sum\limits_{j \in \textscr{J}} x_{t,j} 
    &\le g \cdot y_t & \forall \ t \in \textscr{T} \\[2.0ex]
& \displaystyle\sum\limits_{t \in \textscr{T}} x_{t,j} 
    &\ge p_j & \forall \ j \in \textscr{J}\\[2.0ex]
& y_t &\in \{0,1\} & \forall \ t \in \textscr{T}\\[2.0ex]
& x_{t,j} &\in \{0,1\} & \forall \ t \in \textscr{T}, j \in \textscr{J}\\[2.0ex]& x_{t,j} &= 0 & \forall \ t \notin \{r_j+1,\ldots,d_j\}
\end{array}
\end{equation*}

In the integer program, the indicator variables $y_t$
denote whether slot $t$ is active (open).  
The assignment variables $x_{t,j}$ 
specify whether any unit of job $j$ is assigned to slot $t$.
The first set of inequalities ensures that a unit of any job
can be scheduled in a time slot only if that slot is active.
Without this constraint, the LP relaxation of $IP$ has an
unbounded integrality gap. This constraint,
along with the range specification on the 
indicator $y$ variables, also ensures
that at most one unit of a job can be assigned to a single slot.
The second set of inequalities ensures that at most
$g$ units of jobs can be assigned to an active slot.
The third set of inequalities ensures that $p_j$ units
of a job $j$ get assigned to active slots.
The remaining constraints are range specifiers for the indicator and
assignment variables.
We relax the integer requirements of $y_t$ and $x_{t,j}$
to get the LP relaxation $LP1$. In $LP1$, the objective function
and all the constraint inequalities remain unchanged from the integer program $IP$,
except the range specifiers at the end.  Those get modified
as follows: $0\leq y_t\leq 1$ for every slot $t$,
and $x_{t,j} \geq 0$ for every job $j$ and slot $t$.
As in the IP, $x_{t,j}$ is forced to 0 for slots $t$ not 
in job $j$'s window $\{r_j+1, \ldots, d_j\}$.
Henceforth, we focus on $LP1$.


We first solve $LP1$ to optimality. Since any
integral optimal solution is a feasible LP solution,
the optimal LP solution is a lower bound on the cost of
any optimal solution. Our goal is to round it to a feasible
integral solution that is within twice the cost of
the optimal LP solution.  However, before we do the rounding,
we preprocess the optimal LP solution so that it has
a certain structure without increasing the cost of the solution.
Then we round this solution to obtain an integral feasible solution.
We use the following notation to denote the LP solution.
A slot $t$ with $y_t = 1$ is said to be \textit{fully open}, 
a slot with $1 > y_t \geq \frac{1}{2}$
is said to be \textit{half open}, 
a slot with $\frac{1}{2}> y_t >0$ is said to be \textit{barely open} and
a slot with $y_t = 0$ is said to be \textit{closed}.

In the rounding, our goal will be to find a set of slots
to open integrally, such that there exists a feasible fractional assignment for the jobs in the
integrally open slots. As described earlier, given a set of integrally open slots with
feasible fractional assignment of jobs for the active time problem, 
an integral assignment can be found at the end of the rounding 
procedure, via a maximum flow computation. 
Since the capacities are integral and integrality of
flow, the flow computation returns an integral assignment, without
loss of generality. 

\subsection{Preprocessing: Creating a Right-Shifted Solution}
First, consider the set of distinct deadlines
$\textscr{D} = \{t_{d_1}, t_{d_2}, \ldots, t_{d_{\ell}}\}$,
sorted in increasing order. 
We will process the LP solution sequentially according
to this order.
Denote by $\textscr{J}_i$ the set of jobs with deadline $t_{d_i}$,
and define a dummy deadline $t_{d_0}\leq t_{d_1}$ to be
the earliest slot $t$ where $y_t >0 $ in the optimal solution $LP1$.
Note that $t_{d_0}$ does not correspond to an actual deadline
(and hence, $\mathcal{J}_0 = \emptyset$), but is defined
simply for ease of notation.
If $t_{d_0} < t_{d_1}$, we add $t_{d_0}$ to $\mathcal{D}$,

Next, we consider the slots open between successive deadlines in the
optimal solution for $LP1$.
$Y_i$ is the sum of $y_t$ over the slots numbered
$\{t_{d_{i-1}}+1, \ldots, t_{d_i}\}$ for all $i\geq 1$.
For the rest of the paper, $[q]$ is
shorthand notation for $\{1, 2, \ldots, q\}$.

\begin{definition}
$Y_i = \sum_{t\in \{t_{d_{i-1}}+1, \ldots, t_{d_i}\}}{y_t}, \ \forall\ i \in [\ell]$,
where $\ell$ is the number of distinct deadlines in $\mathcal{J}$, and $Y_0 = 0$.
\end{definition}

By definition, the cost of the LP solution is:
$\sum_{t\in \textscr{T}}{y_t} = \sum_{t_{d_i} \in \mathcal{D}}{Y_i}$.
Now, we modify the optimal LP solution as follows to
create a right-shifted structure.  Intuitively, we want
to push open slots to the ``right'' (i.e., delay them to later time
slots) as much as possible without violating feasibility, or changing $Y_i$.

%
For all $i\in [\ell]$,
we open the slots $\{t_{d_i}-\lfloor Y_i \rfloor+1, \ldots, t_{d_i}\}$
integrally, and the slot $t_{d_i}-\lfloor Y_i \rfloor$
partially, up to
$Y_i- \lfloor Y_i \rfloor$, if $Y_i - \lfloor Y_i \rfloor > 0$,
and close all slots 
$\{t_{d_{i-1}}+1, \ldots, t_{d_i}-\lfloor Y_i \rfloor - 1\}$.

Using the notation stated earlier, slots
$\{t_{d_i}-\lfloor Y_i \rfloor+1, \ldots, t_{d_i}\}$ are fully open, 
and the slot $t_{d_i}-\lfloor Y_i \rfloor$ is either closed 
if $Y_i- \lfloor Y_i \rfloor = 0$,
barely open if $\frac{1}{2}> Y_i- \lfloor Y_i \rfloor > 0$, 
or half open if $1> Y_i- \lfloor Y_i \rfloor \geq \frac{1}{2}$.
Refer to the Figure \ref{fig:rightshifted} for an example.

\begin{figure}
    \centering
        \includegraphics[width=0.75\textwidth]{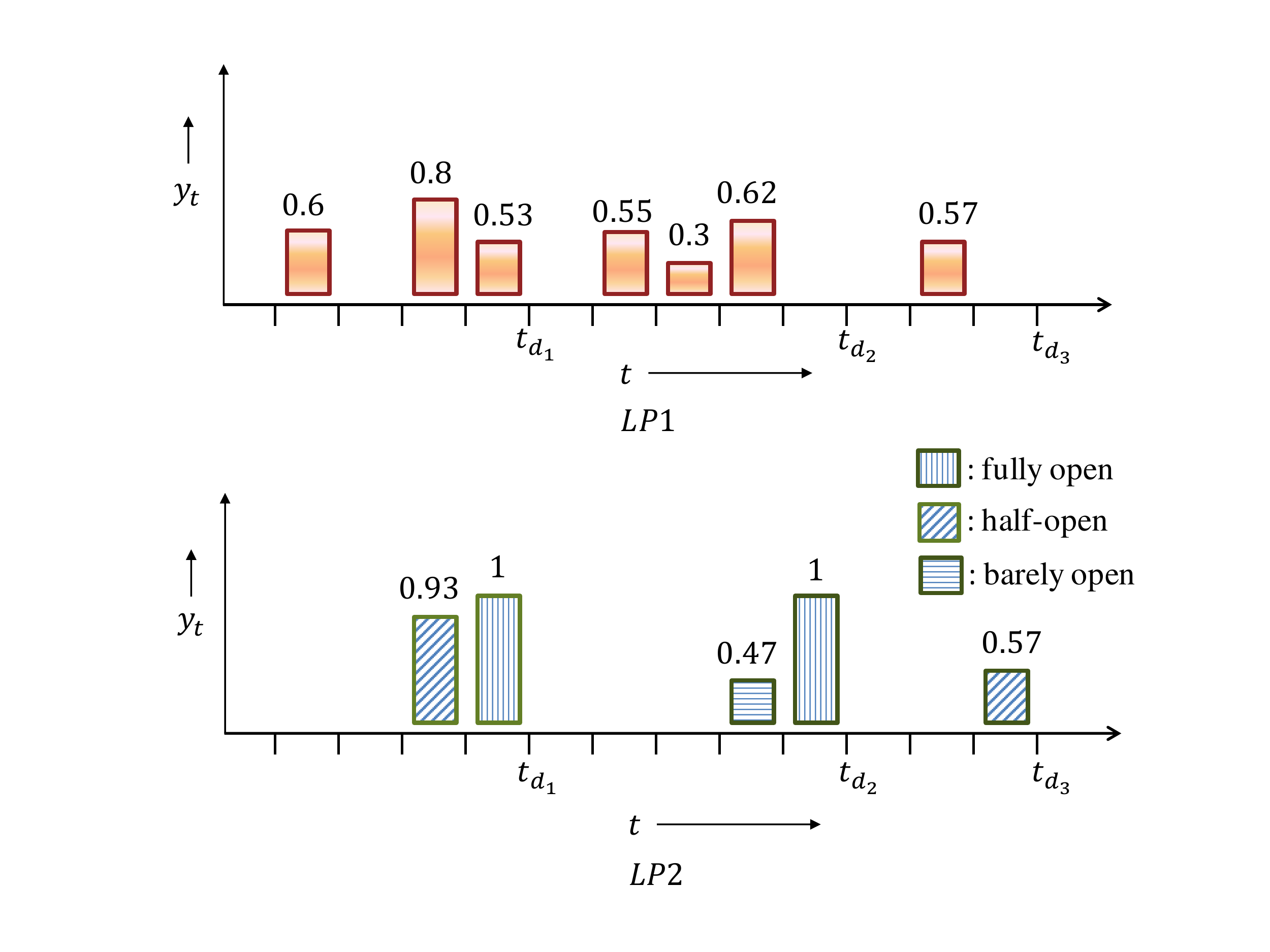}
        \caption{Figure showing the right shifted solution $LP2$ corresponding to $LP1$.}
    \label{fig:rightshifted}
\end{figure}

We now prove that there exists a feasible fractional 
assignment of all the jobs in the right-shifted LP solution.
\begin{lemma}
All jobs in $\mathcal{J}$ can be feasibly fractionally assigned in the right-shifted LP solution.
\end{lemma}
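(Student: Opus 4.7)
My plan is to construct a feasible fractional $x^{LP2}$ from the optimal LP1 assignment, working interval-by-interval between consecutive (augmented) deadlines. Let $I_i := (t_{d_{i-1}}, t_{d_i}]$ and $\alpha_j^{(i)} := \sum_{t \in I_i} x_{t,j}^{LP1}$ be the LP1 mass of job $j$ falling in $I_i$; note that $\sum_i \alpha_j^{(i)} = p_j$. Showing that each $\alpha_j^{(i)}$ can be placed on LP2's right-shifted open slots of $I_i$ will yield the global assignment.

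The argument rests on two observations about $I_i$. First, any job with $\alpha_j^{(i)} > 0$ must satisfy $d_j \geq t_{d_i}$, because $\mathcal{D}$ contains no deadline strictly inside $I_i$; hence the right-shifted open slots of $I_i$ lie within every relevant job's deadline. Second, within $I_i$, $LP2$ packs its entire mass $Y_i$ into the $\lceil Y_i \rceil$ rightmost slots, so for any $a \in (t_{d_{i-1}}, t_{d_i}]$ its capacity on $[a, t_{d_i}]$ is $g \cdot \min(Y_i, t_{d_i} - a + 1)$. This upper bounds the LP1 capacity on the same range, since $\sum_{t=a}^{t_{d_i}} y_t^{LP1} \leq Y_i$ trivially and $\leq t_{d_i} - a + 1$ because each $y_t \leq 1$.

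Combining, we obtain the local Hall-type inequality: for every $a \in (t_{d_{i-1}}, t_{d_i}]$,
$\sum_{j : r_j + 1 \geq a} \alpha_j^{(i)} \leq g \sum_{t=a}^{t_{d_i}} y_t^{LP1} \leq g \sum_{t=a}^{t_{d_i}} y_t^{LP2}$,
where the first step uses feasibility of $x^{LP1}$ together with $x_{t,j}^{LP1} = 0$ for $t < r_j + 1$. Inside $I_i$, the resulting subproblem is a single-deadline scheduling problem (all relevant jobs share the right endpoint $t_{d_i}$), so this interval condition is sufficient for fractional feasibility by a straightforward max-flow/min-cut argument. Thus a valid fractional assignment exists inside each $I_i$, and gluing over $i$ produces the desired $x^{LP2}$.

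The main obstacle to be careful about is the last step---rigorously justifying that the interval Hall condition above suffices for fractional feasibility in the presence of the per-edge cap $x_{t,j} \leq y_t^{LP2}$ (not just the aggregate slot cap $g y_t^{LP2}$). This is standard for interval-window scheduling with a common deadline, but may warrant an explicit certificate: for instance, a direct greedy filling the rightmost open slots in $I_i$ first with the jobs having the latest release times, or the cut-based verification noted above. Once this technical point is handled, the interval-by-interval construction immediately produces the required feasible fractional assignment.
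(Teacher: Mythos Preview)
Your decomposition into the intervals $I_i$ is sound, and the observation that every job contributing mass to $I_i$ has deadline at least $t_{d_i}$ is exactly right. The suffix-domination inequality $\sum_{t\ge a} y_t^{LP1}\le \sum_{t\ge a} y_t^{LP2}$ is also correct. The gap is in the last implication: the interval Hall condition you derive is \emph{not} sufficient for fractional feasibility once the per-edge cap $x_{t,j}\le y_t^{LP2}$ is present, and this is not merely a matter of writing out a routine certificate.

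Here is a concrete obstruction. Take $g=2$, two slots with $y_1^{LP2}=y_2^{LP2}=1$, and three jobs with $(\alpha,a)$ equal to $(2,1)$, $(1,2)$, $(1,2)$. Your Hall inequality holds at both thresholds ($4\le 4$ and $2\le 2$), and each job individually fits in its available $y$-mass, yet no feasible fractional assignment exists: jobs $2,3$ must each take one full unit at slot $2$, saturating it, and then job $1$ can receive at most $y_1=1<2$. So neither the Hall bound nor the ``greedy fill from the right'' you suggest suffices in general. The reason your setting is still feasible is the extra structure you have not used: the $\alpha_j^{(i)}$ come from an actual feasible LP1 assignment, which rules out instances like the one above. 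To close the argument along your lines you must invoke LP1 feasibility for \emph{every} cut, not just the suffix cuts. Concretely, for any job set $A$ the LP2-minimizing slot set $B^\star$ is a suffix $[t^\star,t_{d_i}]$; writing $S_a,S'_a$ for the LP1/LP2 suffix sums and $n^-=|\{j\in A:a_j<t^\star\}|<g$, the cut value is $(g-n^-)S'_{t^\star}+\sum_{j\in A:a_j<t^\star}S'_{a_j}$, which dominates the corresponding LP1 expression termwise by suffix domination, and that in turn bounds $\sum_{j\in A}\alpha_j^{(i)}$ by LP1 feasibility. This repairs your argument, but it is a genuinely stronger step than the Hall inequality you wrote down.

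By contrast, the paper avoids all of this: it simply sweeps each interval $I_i$ right-to-left, repeatedly shifting as much $y$-mass (and the accompanying $x$-mass) as possible from slot $t-1$ to slot $t$, checking that every LP constraint is preserved at each micro-step. This is less conceptual than your Hall/min-cut route but requires no feasibility characterization at all, and it directly produces the right-shifted assignment rather than merely certifying its existence.
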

\begin{proof}
For any $t_{d_i}> t_{d_0}$, 
$Y_i$ is unchanged in the right-shifted LP solution.
Consider the following processing of the original LP solution for
a pair of slots $t_{d_i}$ and $t_{d_i}-1$ for some $i>0$.
If $y_{t_{d_i}}<1$ in the original LP solution,
let $\delta = y_{t_{d_i}} + y_{t_{d_i}-1} -1$. 
If $\delta\leq 0$, move the job assignments as is
from $y_{t_{d_i}-1}$ to $y_{t_{d_i}}$, 
and update the $x$ variables of the respective jobs
moved to reflect the new slot they are assigned to. In other words,
$x_{t_{d_i},j}$ is incremented by $x_{t_{d_i}-1,j}$ and $x_{t_{d_i}-1, j}$
is reduced to $0$. At the same time, also increment the $y_{t_{d_i}}$
by $y_{t_{d_i}-1}$ and decrement $y_{t_{d_i}-1}$ to $0$. 
This does not violate LP feasibility.

If $\delta >0$, increment $y_{t_{d_i}}$ by $1 - y_{t_{d_i}}$, 
and decrement $y_{t_{d_i}-1}$ by
$1 - y_{t_{d_i}}$. The new values are $y'_{t_{d_i}} = 1$ 
and $y'_{t_{d_i}-1} = \delta$.
For every job $j$ with a positive assignment to $t_{d_i}-1$,
decrement $x_{t_{d_i}-1,j}$ by
$\delta' = \max (0, x_{t_{d_i}-1,j}-\delta)$, and increment $x_{t_{d_i},j}$ by $\delta'$.
By this transformation, for every job $j$, the updated $x_{t_{d_i}-1, j} \leq y'_{t_{d_i}-1}$,
and $x_{t_{d_i},j}\leq y'_{t_{d_i}}$. Moreover, the total mass of jobs transferred to $t_{d_i}$
is at most $\sum_{j\in \mathcal{J}}\left({x_{t_{d_i}-1,j}-\delta}\right) \leq g (1 - y_{t_{d_1}})$.
Hence, the total mass of jobs in $t_{d_i}$ is at most $g$ and at $t_{d_i}-1$ is at most
$g\delta$.
Again, this maintains LP feasibility.

Now, repeat this process for the updated $y_{t_{d_i}-1}$ 
and $y_{t_{d_i}-2}$.
Continue this till the pair of adjacent slots $t_{d_{i-1}+1}$ 
and $t_{d_{i-1}+2}$ are
processed. We end up with a feasible right-shifted LP solution
for time slots $\{t_{d_{i-1}+1}, \ldots, t_{d_i}\}$. 
Repeat this for all $i \in [\ell]$.
We get a feasible fractional right-shifted LP solution.
\qed
\end{proof}

We can solve a feasibility LP with the $y_t$ for any $t$ pre-set, as dictated by the right-shifting process. The right-shifted LP
is described below.

%
\begin{equation}
\begin{array}{rrll}
LP2: & x_{t,j} &\leq y_t & \forall \ t \in \textscr{T}, j \in \textscr{J}\\[5pt]
 & \sum_{j\in \textscr{J}}{x_{t,j}} &\leq g \cdot y_t \ \ \ & \forall \ t \in \textscr{T}\\[5pt]
& \sum_{t\in \textscr{T}}{x_{t,j}} &\geq p_j & \forall \ j \in \textscr{J}\\[5pt]
& x_{t,j} &\geq 0 & \forall \ t \in \textscr{T}, j \in \textscr{J} \\[5pt]
& x_{t,j} &= 0 & \forall \ t \notin \{r_j+1, \ldots, d_j\}
\end{array}
\end{equation}

Henceforth, we work with this feasible, right-shifted optimal LP solution, $LP2$.

\begin{observation}
In a right-shifted fractional solution, a slot
$t$, $t_{d_{i-1}} < t < t_{d_{i}}$ for any $i \in [\ell]$,
is fully open only if slots
$\{t+1, \ldots, t_{d_{i}}\}$ are fully open.
\label{obs:rightshifted}
\end{observation}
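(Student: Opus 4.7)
The plan is to observe that this is essentially a direct consequence of the explicit construction of the right-shifted solution, so the proof amounts to carefully unpacking the definition rather than introducing any new machinery.

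First, I would recall the structure imposed by the right-shifting step between successive deadlines. By construction, for each $i \in [\ell]$, within the interval of slots $\{t_{d_{i-1}}+1, \ldots, t_{d_i}\}$, the slots $\{t_{d_i} - \lfloor Y_i \rfloor + 1, \ldots, t_{d_i}\}$ are fully open ($y_t = 1$), the single slot $t_{d_i} - \lfloor Y_i \rfloor$ is either barely open, half open, or closed (with value $Y_i - \lfloor Y_i \rfloor < 1$), and every earlier slot $\{t_{d_{i-1}}+1, \ldots, t_{d_i} - \lfloor Y_i \rfloor - 1\}$ is closed ($y_t = 0$). In particular, within each inter-deadline interval the fully open slots form a contiguous suffix ending at $t_{d_i}$.

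Next, I would take any slot $t$ with $t_{d_{i-1}} < t < t_{d_i}$ and $y_t = 1$. Since no slot in $\{t_{d_{i-1}}+1, \ldots, t_{d_i} - \lfloor Y_i \rfloor - 1\}$ is fully open, and the slot $t_{d_i} - \lfloor Y_i \rfloor$ has $y$-value strictly less than $1$, the only way $y_t = 1$ is if $t \geq t_{d_i} - \lfloor Y_i \rfloor + 1$. Equivalently, $\lfloor Y_i \rfloor \geq t_{d_i} - t + 1$, which means every slot $t' \in \{t+1, \ldots, t_{d_i}\}$ also lies in the fully open suffix $\{t_{d_i} - \lfloor Y_i \rfloor + 1, \ldots, t_{d_i}\}$, so $y_{t'} = 1$ for all such $t'$.

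This yields the observation. There is essentially no obstacle here; the only thing to be careful about is not to accidentally include $t_{d_i}$ itself in the hypothesis (the statement restricts to $t < t_{d_i}$), so we do not need to discuss what happens at a deadline slot, and the argument stays entirely within one inter-deadline block.
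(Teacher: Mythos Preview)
Your proposal is correct and takes essentially the same approach as the paper: the paper simply states that the observation ``follows from the description of the preprocessing step to convert an optimal LP solution to a right-shifted solution structure,'' and your argument is precisely the explicit unpacking of that construction. There is nothing more to add.
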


The above observation follows from the description of the
preprocessing step to convert an optimal LP solution
to a right-shifted solution structure.

\subsection{Overview of Rounding}
In this section, we will give an informal overview of the rounding
process for ease of exposition. The formal description of rounding
and the proofs are given in the following sections.

We process the set of $\ell$ distinct deadlines $\textscr{D}$,
in increasing order of time. In each iteration $i \ \in [\ell]$, we
consider the jobs in $\mathcal{J}_i$, and we will
integrally open a subset of slots, denoted
$\textscr{O}_i \subseteq \{t_{d_{i-1}}+1, \ldots, t_{d_i}\}$,
maintaining the following invariants at the end of every iteration $i$:
(i) there exists a feasible assignment of all jobs in $\bigcup_{k \in [i]}{\mathcal{J}_k}$
in the set of integrally open slots thus far $\bigcup_{k \in [i]}{\textscr{O}_k}$;
(ii) the total number of integrally open slots thus far
is at most twice the cost of LP solution thus far, i.e.,
$|\bigcup_{k \in [i]}{\textscr{O}_k}|\leq 2 \sum_{k \in [i]}{Y_k}$.

Let the cost of a slot $t$ after rounding be denoted as $y'_t$.
Note that while \emph{fully open} refers to any
slot $t$, such that $y_t=1$, \emph{integrally open} refers to any
slot $t'$ such that after rounding, $y'_{t'} = 1$.

\begin{definition}
For all $i \in [\ell]$,
the set of integrally open slots $\mathcal{O}_i = \{t \in \{t_{d_{i-1}}+1, \ldots, t_{d_i}\} \ | \ y'_t = 1\}$, where
$y'_t$ is the rounded value of $y_t$ for any slot $t$.
\end{definition}

In our rounding scheme, every fully open slot in the
right-shifted solution will also
be integrally open; however, there may be integrally open
slots that were not fully open in the fractional optimal solution.
These slots will need to be accounted for carefully.

%
The rounding algorithm is as follows.
At every iteration $i$, it first opens $\lfloor Y_i\rfloor$ slots integrally,
going backwards from $t_{d_i}$ in the right-shifted solution.
Note that these slots were fully open in the right-shifted solution,
and hence, obviously, do not charge anything extra
to the LP solution. If $Y_i - \lfloor Y_i\rfloor = 0$,
we are done with this iteration. Otherwise, if $Y_i - \lfloor Y_i \rfloor \geq \frac{1}{2}$,
there is one half-open slot $t_{d_i} - \lfloor Y_i\rfloor$,
which the rounding opens up integrally. This is fine since a half-open slot can be
integrally opened, incurring a rounded cost of $1$, and thereby,
charging their fractional LP cost at most $2$ times.
In case, $Y_i - \lfloor Y_i \rfloor < \frac{1}{2}$, the slot $t_{d_i} - \lfloor Y_i\rfloor$
is barely open.
The rounding algorithm first tries to close a barely open
slot when it is processing such a slot,
by trying to accommodate all the jobs with deadlines
up to the current one, in the current set
of integrally open slots. If no such assignment exists,
then the algorithm opens up such a slot
integrally, after accounting for its value,
by charging other fully open or half open slots.
We next describe the possible ways of charging a barely open slot that needs to be
opened by the rounding.

When a single barely open slot charges a fully open slot,
we say that the barely open slot is \textit{dependent}
on the fully open slot. Now, the barely open slot can
be opened integrally charging the fully open slot twice,
and not charging the barely open slot at all.
First, the algorithm tries to charge a barely open slot
(that needs to be opened) to the earliest fully open
slot without a dependent. Note here fully open slot refers
to slots fully open in the right-shifted LP solution,
and not the set of slots opened integrally after rounding.

If all the earlier fully open slots have dependents, the algorithm
tries to find the earliest fully open slot with a dependent, such that
the sum of the $y$'s of the barely open slot
(that is currently being processed) and the
dependent of the fully open slot add up to at least $\frac{1}{2}$.
Once the algorithm finds the earliest fully open slot with a dependent
that satisfies the above condition, it opens up the barely open slot,
since the cumulative sum of the $y$'s of the three slots is at least
$\frac{3}{2}$. We refer to such a triple of slots as a \emph{trio}. Note that
this trio or triple of slots cumulatively charges at most twice to the cumulative
$y$ (LP cost) of the trio.

Finally, if no trio is possible or all of the earlier fully open slots
are already part of trios, we find the earliest half-open slot,
such that the sum of the $y$ of the half open and the barely open slot
together is at least $1$. Note that here we require
that the half-open slot be an earlier slot that is
already processed. In fact, as a rule of thumb,
the rounding algorithm can only charge slots already processed.
We say that the barely open slot is a \textit{filler} of the
half open slot in this case, and open up the barely open slot.
Note that the two integrally open slots corresponding
to the half open slot and its filler
charge their LP cost at most twice in the process.

In the first case, where a barely open slot
is a dependent on a fully open slot, we do not charge its $y$ value
or LP cost at all; however, in both the latter cases (trio and filler),
we must charge the $y$ or LP cost of all the
slots involved including the barely open one.
A barely open slot that is charged in one iteration as a
dependent (when its $y$ value
was not charged at all) on a fully open slot
can get charged as a trio (when its $y$ value
gets charged) in a later iteration. Similarly, a half-open slot,
charging itself alone in an iteration, can later
get charged with a filler.

Refer to Figure \ref{fig:lprounding} for an example of the
charging process of a barely open slot that cannot 
be closed by flow-based assignments.

\begin{figure}
    \centering
        \includegraphics[width=0.75\textwidth]{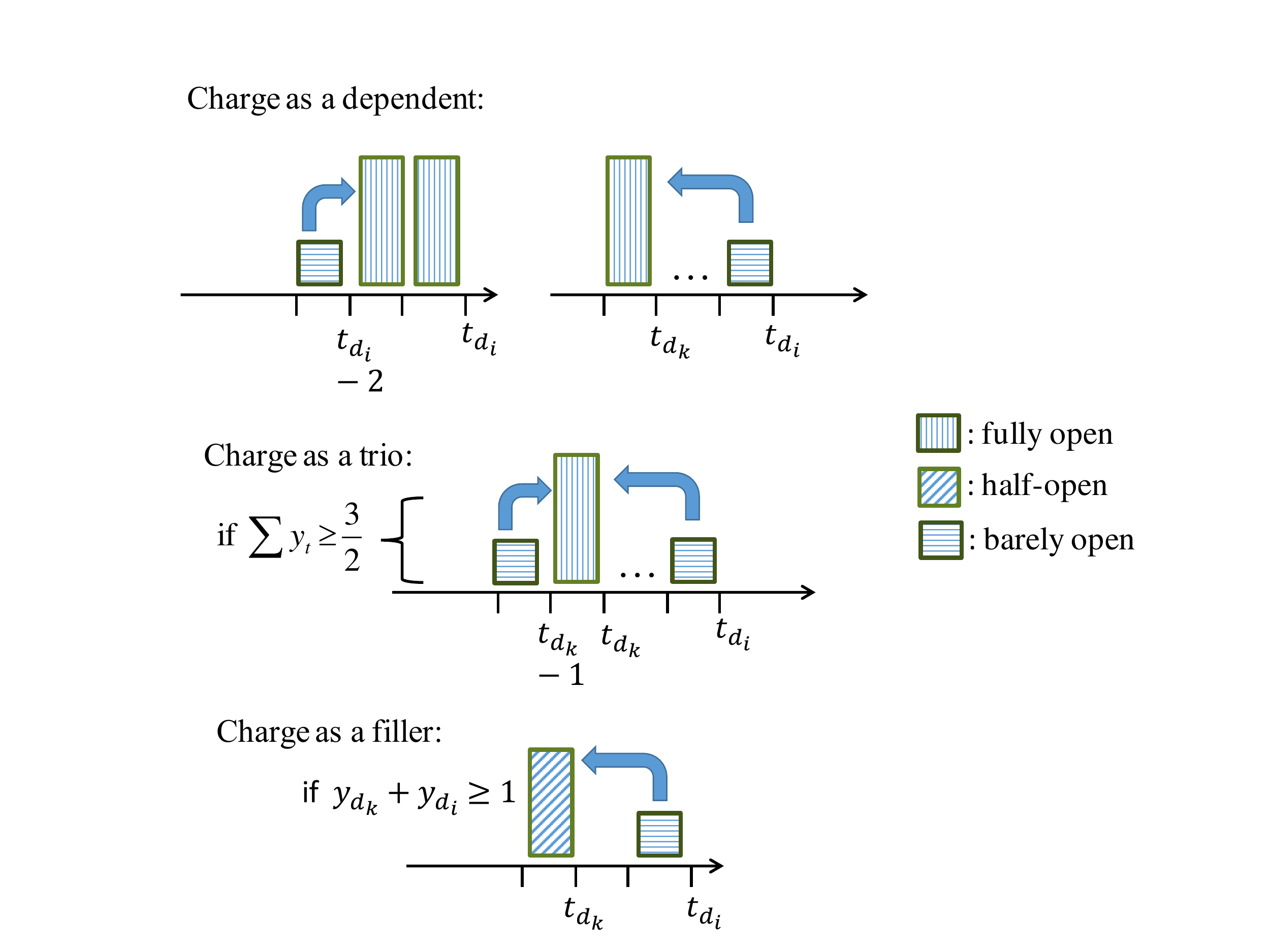}
    \caption{Figure showing the three ways of charging of a barely open slot that needs to be opened by the rounding algorithm.}
    \label{fig:lprounding}
\end{figure}

We will additionally maintain the invariant that at
every iteration, every barely open
slot that we have opened is either a dependent on a fully open slot, or
is part of a trio, or is a filler of a half-open slot. Moreover,
we ensure that every fully open slot has at most one dependent or it is
part of at most one trio, and every half-open slot has
at most one filler.

As already mentioned, in the rounding process we
sometimes close a barely open slot $t_{d_i} - k$, 
while processing a deadline $t_{d_i}$.
This will be done only if the jobs in
$\bigcup_{k \in [i]}{\textscr{J}_k}$
can be fully accommodated in the slots
$\bigcup_{k \in [i]}{\textscr{O}_k}$.
At the same time, the cost of $y_{t_{d_i}-k}$ in the
LP solution is not charged at all.  However, the LP solution
might have assigned jobs of a later deadline in this
barely open slot and hence, we might need to open up
this slot later to accommodate such jobs.
In order to do this, and account for the
the total charge on the LP solution, we create a
``proxy'' copy of the slot that we closed, and carry it over
to the next iteration. The idea behind carrying
forward this ``proxy'' slot is that, if needed,
we can come back and open up this slot in the future,
while processing the jobs of a later deadline,
and account for it without double counting.
Informally, this is a safety deposit that we can
come back and use if needed.
The $y$ cost of this proxy slot $p= t_{d_i}-k$ is
denoted as $y''_p = y_{t_{d_i}-k}$, that is,
the $y$ of the slot we have just closed.
The proxy points to the actual slot $p$
that it is a proxy for, so that we know where
we can open up a slot charging the proxy value, if needed.

In any iteration $i$, when we have a proxy (which by definition
is a barely open slot), we treat it as a regular fractionally open slot (though
there may not be any actual slot at that point). If this slot remains closed after the rounding, the proxy
gets carried over to the next iteration, whereas if it does get opened by the rounding,
the actual slot which it points to gets opened.
However, now the cost of opening it will be accounted for by the current solution.
It may also happen that the proxy from the previous iteration $i$ gets merged with a
fractionally open slot in the current iteration $i+1$. This does not affect the
feasibility of job assignments, since the jobs of a later deadline that were assigned by the
LP in some slot $t_{d_i}-k$, are also feasible to be assigned in slots $t_{d_i} < t' \leq t_{d_{i+1}}$.
This is outlined in detail in Section~\ref{sec:proxy}.
There can be at most
one proxy slot at any iteration.

We next give a detailed description of the rounding process.
\subsection{Processing $t_{d_1}$}
\label{sec:d1}
\begin{claim}
$Y_1\geq 1$.
\end{claim}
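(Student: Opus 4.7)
The plan is to obtain $Y_1 \ge 1$ as an immediate consequence of LP feasibility, using that $\mathcal{J}_1$ contains at least one job whose entire window lies at or before $t_{d_1}$. Since $Y_1$ is preserved by the right-shifting preprocessing, I can argue equivalently about LP1 or LP2; I will work with LP2.

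I would begin by noting that $\mathcal{J}_1 \neq \emptyset$: by construction $t_{d_1}$ is the smallest actual deadline in $\mathcal{D}$, so some job $J_j$ has $d_j = t_{d_1}$ and $p_j \ge 1$. Next, I would combine the LP constraints for this particular job---the coverage requirement $\sum_t x_{t,j} \ge p_j$, the window restriction $x_{t,j}=0$ for $t \notin \{r_j+1,\ldots,t_{d_1}\}$, and the linking inequality $x_{t,j} \le y_t$---to conclude
\[
1 \;\le\; p_j \;\le\; \sum_{t=r_j+1}^{t_{d_1}} x_{t,j} \;\le\; \sum_{t=r_j+1}^{t_{d_1}} y_t.
\]

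To turn this into a bound on $Y_1$, I would invoke the defining property of the dummy deadline $t_{d_0}$: since $t_{d_0}$ marks the earliest slot carrying any positive fractional opening, every slot $t \le t_{d_0}$ contributes $0$ to the sum of $y$'s. Hence $\sum_{t=r_j+1}^{t_{d_1}} y_t \le \sum_{t=t_{d_0}+1}^{t_{d_1}} y_t = Y_1$, which chains to give $Y_1 \ge 1$. There is essentially no obstacle here---the argument is a direct chase through the LP definitions---so the only thing worth being careful about is keeping the indexing around $t_{d_0}$ consistent and making sure the window $\{r_j+1,\ldots,t_{d_1}\}$ of the selected job is captured inside the summation range defining $Y_1$.
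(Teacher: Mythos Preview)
Your proposal is correct and follows essentially the same approach as the paper: the paper's one-line proof simply observes that LP feasibility forces at least one unit of any job with deadline $t_{d_1}$ to be assigned to slots at or before $t_{d_1}$, and you spell out precisely this argument via the constraints $\sum_t x_{t,j} \ge p_j \ge 1$, $x_{t,j} \le y_t$, and the window restriction. Your care with the indexing around $t_{d_0}$ is appropriate, though note that the paper's own definition of $t_{d_0}$ (as the \emph{first} slot with $y_t>0$) makes the boundary slightly delicate; the intended reading is that $Y_1$ captures all positive $y$-mass up through $t_{d_1}$, which is what your chain of inequalities needs.
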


\begin{proof}
This is obvious as a feasible LP solution would have
assigned at least one unit of the jobs with deadline
$t_{d_1}$ in slots $t\leq t_{d_1}$.
\qed
\end{proof}

Slots $\{t_{d_1} - \lfloor Y_1 \rfloor +1, \ldots, t_{d_1}\}$
are fully opened in the right-shifted optimal LP solution.
We keep these slots open, i.e., $y'_t = y_t = 1$ for these slots.
In the following, we outline how we deal with the slot
$t_{d_1}- \lfloor Y_1 \rfloor$, if $Y_1 - \lfloor Y_1 \rfloor > 0$.

\begin{customcase}{1}
$Y_1 - \lfloor Y_1 \rfloor \geq \frac{1}{2}$.
\end{customcase}

In this case, the slot $t_{d_1}-\lfloor Y_1 \rfloor$
is half open. We open it fully in the rounding process,
in other words, set $y'_{t_{d_1}-\lfloor Y_1 \rfloor} = 1$,
and charge the cost of fully opening it to
$y_{t_{d_1}- \lfloor Y_1 \rfloor}$. Therefore, the LP cost
$y_{t_{d_1}- \lfloor Y_1 \rfloor}$ is charged
at most twice by this process.

\begin{customcase}{2}
$0<Y_1 - \lfloor Y_1 \rfloor < \frac{1}{2}$.
\end{customcase}

In this case, the slot $t_{d_1}-\lfloor Y_1 \rfloor$ is barely open.
The rounding algorithm first checks if a
feasible assignment of $\textscr{J}_1$ exists in the
slots $\{t_{d_1} - \lfloor Y_1 \rfloor +1, \ldots, t_{d_1}\}$
(the fully open slots), using the maximum-flow
construction described earlier.
If such an assignment exists, the slot $t_{d_1}-
\lfloor Y_1 \rfloor$ is closed, that is,
$y'_{t_{d_1}-\lfloor Y_1 \rfloor} = 0$, and
we proceed to the next iteration (processing $t_{d_2}$),
after passing over a proxy slot with $y= Y_1 - \lfloor Y_1\rfloor$
to iteration $2$. Note that $y_{t_{d_1}-\lfloor Y_1 \rfloor}$
is not charged at all so far by the rounding process.
We denote the proxy as $y''_p = y_{t_{d_1}-\lfloor Y_1 \rfloor}$,
where $p = t_{d_1}-\lfloor Y_1 \rfloor$.

Otherwise, if a feasible assignment does not exist,
the rounding algorithm opens it fully, that is,
$y'_{t_{d_1}-\lfloor Y_1 \rfloor} = 1$; its cost needs
to be accounted for. In order to account for the cost
of opening it, we consider the slot to be dependent
on $t_{d_1}-\lfloor Y_1 \rfloor + 1$, the earliest
fully open slot without a dependent.  We charge
the cost of this slot to $y_{t_{d_1}-\lfloor Y_1 \rfloor + 1}$.
By this process,
$y_{t_{d_1}-\lfloor Y_1 \rfloor + 1}$ is charged at most twice.
We are guaranteed to find a fully open slot
on which to make it dependent since $Y_1 > 1$ in this case.

We next argue that there exists a feasible assignment
of all jobs with deadline $t_{d_1}$ in the slots opened
by the above rounding algorithm up to deadline $t_{d_1}$. This will form the base case of the
argument for the general case that we will prove by induction.

\begin{lemma}
\label{lemma:basecase}
There exists a feasible integral assignment of all jobs with deadline $t_{d_1}$
in the slots opened by the rounding algorithm up to deadline $t_{d_1}$.
Moreover, $|\mathcal{O}_1| \leq 2 Y_1$.
\end{lemma}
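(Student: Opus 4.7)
The plan is to prove Lemma~\ref{lemma:basecase} by a case analysis on the fractional part $Y_1 - \lfloor Y_1\rfloor$, mirroring the cases already delineated in Section~\ref{sec:d1}. The counting bound $|\mathcal{O}_1|\leq 2Y_1$ reduces to arithmetic in each case, while feasibility of the integral assignment follows either directly from the flow-based test performed by the rounding in Case~2 or from leveraging the right-shifted LP2 fractional assignment together with integrality of max flow on $G_{\mathbf{feas}}$.

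For the counting bound I would split into four subcases. When $Y_1 - \lfloor Y_1\rfloor = 0$, no additional slot is considered and $|\mathcal{O}_1| = \lfloor Y_1\rfloor = Y_1$. When $Y_1 - \lfloor Y_1\rfloor \geq 1/2$ (Case~1), the half-open slot is opened integrally, giving $|\mathcal{O}_1| = \lfloor Y_1\rfloor + 1$; since $Y_1 \geq \lfloor Y_1\rfloor + 1/2$, we have $2Y_1 \geq 2\lfloor Y_1\rfloor + 1 \geq \lfloor Y_1\rfloor + 1$. When $0 < Y_1 - \lfloor Y_1\rfloor < 1/2$ (Case~2) and the max-flow feasibility check on the fully open slots succeeds, the barely open slot is closed and $|\mathcal{O}_1| = \lfloor Y_1\rfloor \leq Y_1$. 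The delicate subcase is when the check fails and we must open the barely open slot as a dependent: then $|\mathcal{O}_1| = \lfloor Y_1\rfloor + 1$. Here I would invoke the preliminary claim $Y_1 \geq 1$; combined with $Y_1 < \lfloor Y_1\rfloor + 1/2$ this forces $\lfloor Y_1\rfloor \geq 1$, which both supplies a parent fully open slot for the dependency and yields $2Y_1 \geq 2\lfloor Y_1\rfloor \geq \lfloor Y_1\rfloor + 1$.

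For feasibility, in the subcase where the barely open slot is closed, the max-flow test on $G_{\mathbf{feas}}$ restricted to $\mathcal{J}_1$ and $\mathcal{O}_1$ returns value $\sum_{j\in\mathcal{J}_1} p_j$ by construction; since all capacities are integral, the max flow itself is integral, producing the desired assignment. In every other subcase, the integral capacities $g\cdot y'_t = g$ of slots in $\mathcal{O}_1$ dominate the corresponding LP2 capacities $g\cdot y_t$, so the restriction of the LP2 fractional solution to $\mathcal{J}_1$ remains feasible on $\mathcal{O}_1$, and integrality of max flow on $G_{\mathbf{feas}}$ again yields an integral schedule. The main obstacle is precisely the barely-open subcase of Case~2: one must check that $\lfloor Y_1\rfloor \geq 1$ (so that a dependent parent exists) and that double-charging the parent's cost absorbs both itself and its dependent within the $2Y_1$ budget. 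Both points follow cleanly from $Y_1 \geq 1$ together with the observation that during the processing of $t_{d_1}$ no earlier fully open slot has yet been earmarked as a parent, so the charging rule is unambiguous and the invariant is established as the base case for the inductive argument on subsequent deadlines.
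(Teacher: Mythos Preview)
Your proof is correct, and the counting bound $|\mathcal{O}_1|\le 2Y_1$ is handled essentially as in the paper (both rely on the preliminary claim $Y_1\ge 1$, from which $\lfloor Y_1\rfloor\ge 1$ in the barely-open subcase, yielding both a parent for the dependent and the desired inequality).

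Where you genuinely diverge from the paper is in the feasibility argument. The paper argues by contradiction via a greedy latest-release-first packing: it fills the opened slots $\{t_{d_1}-\lfloor Y_1\rfloor,\ldots,t_{d_1}\}$ backwards and observes that failure would force strictly more than $(\lfloor Y_1\rfloor+1)g$ units of work with deadline $t_{d_1}$, contradicting the LP capacity bound $\sum_t g\,y_t = gY_1\le(\lfloor Y_1\rfloor+1)g$. You instead exploit the right-shifted fractional solution $LP2$ directly: whenever $\mathcal{O}_1$ contains every slot with $y_t>0$ in the first interval, raising $y_t$ to $1$ only relaxes the constraints $x_{t,j}\le y_t$ and $\sum_j x_{t,j}\le g y_t$, so the $LP2$ assignment restricted to $\mathcal{J}_1$ is already a feasible fractional flow in $G_{\mathbf{feas}}$ on $\mathcal{O}_1$, and integrality of max flow finishes. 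Your route is shorter and avoids the somewhat informal packing description in the paper; the paper's counting argument, on the other hand, does not need to appeal to the right-shifted structure and would work even without the preprocessing step.
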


\begin{proof}
Suppose by contradiction that flow could not find an assignment of all jobs with deadline
$t_{d_1}$ in the slots opened by the rounding algorithm up to $t_{d_1}$.
Consider the following assignment. Assign the jobs with release time $t_{d_1}-1$ 
to slot $t_{d_1}$. If the slot gets full, then move on to $t_{d_1}-1$. Otherwise, next assign
jobs with release time $t_{d_1}-2$, till $t_{d_1}$ gets full, then move on to $t_{d_1}-1$.
Continue this till $t_{d_1} - \lfloor Y_1\rfloor$ slots. If flow
could not find an assignment, clearly, all slots are completely full,
and still there is at least one job left. Therefore, there are
$\geq (\lfloor Y_1\rfloor +1)g+1$ jobs with
deadline $t_{d_1}$. However, the LP cost up to $t_{d_1}$ is $Y_1$, hence the LP could not
have scheduled more than $Y_1 g \leq (\lfloor Y_1\rfloor +1)g$ jobs in $t_{d_1}$.
Therefore, this gives a contradiction.

Now, we know that $Y_1 \geq 1$, otherwise LP would not be feasible.
Moreover, by the right-shifted nature, there can be
only one barely open slot in $Y_1$. Clearly, $|\mathcal{O}_1| \leq 2 Y_1$, and if at all, a barely open
slot was rounded to integrally open, we have at least one fully open slot
to charge it to.
This completes the proof.
\qed
\end{proof}

\subsection{Processing deadline $t_{d_i}$, $i>1$}
\label{sec:proxy}

Next we describe the rounding process for an
arbitrary deadline $t_{d_i}$, $i>1$, after the previous
deadlines $\{t_{d_1}, \ldots, t_{d_{i-1}}\}$ have been processed.
Since we are working with a right shifted solution,
the slots $\{t_{d_i}-\lfloor Y_i \rfloor +1, \ldots, t_{d_i}\}$
are fully open.
Hence, these slots will remain integrally open in the rounded solution,
i.e., for these slots, $y'_t = y_t$ for
$t \in \{t_{d_i}-\lfloor Y_i \rfloor +1, \ldots, t_{d_i}\}$.

\subsubsection{Dealing with a proxy slot.}
While processing a deadline $t_{d_i}$, suppose there is a proxy of value $y''_p$ 
carried over from iteration $(i-1)$. Here, $p$ denotes the slot
pointed to by the proxy.

Before we do any rounding in iteration $i$, we merge the proxy with $Y_i$
in the following way.

\begin{customcase}{1}
$y''_p + Y_i - \lfloor Y_i \rfloor \leq 1$.
\end{customcase}

Create a new proxy $y''_{p'} = y''_p + Y_i - \lfloor Y_i \rfloor$.
If $t_{d_{i-1}} \neq t_{d_i}-\lfloor Y_i\rfloor$
(this will always hold if $Y_i - \lfloor Y_i \rfloor > 0$,
and may or may not hold otherwise), we set
$p' = t_{d_i}-\lfloor Y_i\rfloor$.
Otherwise, we set $p' = p$ (i.e., keep the pointer unchanged).
We remover the earlier proxy, or, set $y''_p = 0$.
After this, we consider $Y_i = Y_i+y''_{p'}$.
This changing of the proxy pointer is without loss of generality as
explained next. Clearly,
the proxy cost in iteration $i-1, i>1$, is incurred by the LP in accommodating jobs of a later
deadline and hence it was passed over to iteration $i$. Further, observe that
the jobs of a later deadline that were feasible in $t_{d_{i-1}}$ or earlier, are also
feasible at $t_{d_i} - \lfloor Y_i \rfloor$.

If $y''_p + Y_i - \lfloor Y_i \rfloor = 1$,
then $t_{d_i} - \lfloor Y_i \rfloor$ is
now fully open (hence gets added to $\mathcal{O}_i$) and
no proxy will get carried over
from this iteration\footnote{Note that this can only happen if
$Y_i - \lfloor Y_i \rfloor > \frac{1}{2}$, in other words,
if the slot $t_{d_i} -\lfloor Y_i\rfloor$ is half open, since by
definition a proxy is barely open.}.

Otherwise, we process $p'$
as a regular fractional slot, that is barely open if $y''_p + Y_i - \lfloor Y_i \rfloor < \frac{1}{2}$,
and alternatively, half open if $1> y''_p + Y_i - \lfloor Y_i \rfloor \geq \frac{1}{2}$.

If the slot $p'$ pointed to by the proxy
gets opened by the rounding process ($y'_{p'}$ is set to $1$),
then no proxy is carried over from the iteration $i$.
Otherwise, the new proxy carried over will be of
value $y''_{p'}$ and it will continue to point to slot $p'$.

\begin{customcase}{2}
$y''_p + Y_i - \lfloor Y_i \rfloor> 1$.
\end{customcase}
By definition of proxy, $y''_p< \frac{1}{2}$.
Therefore, this case implies that
$Y_i - \lfloor Y_i \rfloor > \frac{1}{2}$, and hence,
it holds that there exists a slot
$t_{d_i} - \lfloor Y_i \rfloor \neq t_{d_{i-1}}$, that is half open.
We create a new proxy of cost
$y''_{p'} = y''_p + Y_i - \lfloor Y_i \rfloor -1$,
setting the earlier proxy cost to $y''_p = 0$.
If a slot $t_{d_i} - \lfloor Y_i \rfloor - 1 \neq t_{d_{i-1}}$
exists, we point the new proxy to this slot,
in other words, $p'' = t_{d_i} - \lfloor Y_i \rfloor - 1$.
On the other hand, if no such slot exists, we keep $p' = p$,
that is, we do not change the slot pointed to by the proxy.
Now, we process $p'$ as a regular barely open fractional
slot of $y = y''_p + Y_i - \lfloor Y_i \rfloor - 1$. $p'$
is processed in the same manner as a fractionally open slot
$t_{d_i} - \lfloor Y_i \rfloor -1$ in the
right-shifted solution, even if the proxy points to some other slot $p'$.
If the fractional slot $p'$ gets opened ($y'_{p'}$
becomes 1 by the rounding process), then we do not pass over any proxy.
Otherwise, a proxy of cost $y''_{p'}$ gets carried over
to iteration $i+1$, pointing to $p'$.
By the above proxy merging procedure, there can be at most one proxy in an iteration.

\subsubsection{Processing $Y_i$.}
In the following discussion, we assume $Y_i$
\emph{already takes into account
any proxy from iteration $i-1$} as described above.

\begin{customcase}{1}
$Y_i -\lfloor Y_i \rfloor = 0$.
\end{customcase}

In this case, slots $\{t_{d_i} - Y_i+1,\ldots, t_{d_i}\}$ are fully open.
We set $y'_t= y_t$ for all $t \in \{t_{d_i} - Y_i+1,\ldots, t_{d_i}\}$,
and add slots $\{t_{d_i} - Y_i+1, \ldots, t_{d_i}\}$ to $\mathcal{O}_i$.

\begin{customcase}{2}
$Y_i >1$ and $Y_i - \lfloor Y_i \rfloor \geq \frac{1}{2}$.
\end{customcase}
Here, slots $\{t_{d_i} - \lfloor Y_i \rfloor+1, \ldots, t_{d_i}\}$
are fully open and the slot $t_{d_i} - \lfloor Y_i \rfloor $
is half-open.  We set $y'_t = y_t$
for $t \in \{t_{d_i} - \lfloor Y_i \rfloor+1, \ldots, t_{d_i}\}$,
and $y'_ {t_{d_i}-\lfloor Y_i \rfloor} = 1$.
We charge the cost of $y'_{t_{d_i}-\lfloor Y_i \rfloor}$ to
$y_{t_{d_i}-\lfloor Y_i \rfloor}$, charging
$y_{t_{d_i}-\lfloor Y_i \rfloor}$ at most twice in the process.
We add the slots $\{t_{d_i} - \lfloor Y_i \rfloor, \ldots, t_{d_i}\}$
to $\mathcal{O}_i$.

\begin{customcase}{3}
$Y_i >1 $ and $Y_i - \lfloor Y_i \rfloor < \frac{1}{2}$.
\end{customcase}

In this case, slots $\{t_{d_i} - \lfloor Y_i \rfloor+1, \ldots, t_{d_i}\}$
are fully open and the slot $t_{d_i}-\lfloor Y_i \rfloor$ is barely open.
%
We first close $t_{d_i}-\lfloor Y_i \rfloor$,
and check if a feasible assignment
of all jobs in $\bigcup_{j\leq i}{\textscr{J}_j}$ exists in the
slots $\bigcup_{k \in [i-1]}{ \mathcal{O}_k} \cup 
\{t_{d_i} - \lfloor Y_i \rfloor+1, \ldots, t_{d_i}\}$
using the maximum-flow construction described earlier.
If successful, we add
$\{t_{d_i} -\lfloor Y_i \rfloor+1, \ldots, t_{d_i}\}$
to $\mathcal{O}_i$, and pass on a proxy of of cost
$Y_i - \lfloor Y_i \rfloor$ and move to
the next deadline (iteration $i+1$). The pointer to this proxy would be
the slot $t_{d_i}-\lfloor Y_i \rfloor$ if this is not coincident
with $t_{d_{i-1}}$, otherwise to an earlier slot
pointed to by a proxy coming from iteration $(i-1)$, as described earlier.

Otherwise, if no feasible assignment is
found by the maximum-flow procedure,
we need to open the barely open slot $t_{d_i} - \lfloor Y_i \rfloor$,
and account for its cost.
We add $t_{d_i} -\lfloor Y_i \rfloor$ to $\mathcal{O}_i$
over and above the slots
$\{t_{d_i} -\lfloor Y_i \rfloor+1, \ldots, t_{d_i}\}$.
We charge $t_{d_i} -\lfloor Y_i \rfloor$ as a
\emph{dependent} on the \emph{earliest} fully
open slot that has no dependents\footnote{Note that here fully open slots
denote the slots $t$, where $y_t = 1$ either in the original
LP solution or after processing for proxy slot from earlier iteration.
In other words, fully open slots are those that are open in the
rounded solution, charging their costs to \emph{themselves}, and none else.
The set of fully open slots may not be the same as
$\bigcup_{k \in [i]}{\mathcal{O}_k}$ in iteration $i$.}.

Suppose all earlier fully open slots have dependents or are
parts of trios, then, we charge the cost of opening the slot
$t_{d_i} - \lfloor Y_i \rfloor$ to $t_{d_i} - \lfloor Y_i \rfloor + 1$. This
is feasible, since $Y_i > 1$, and therefore, $t_{d_i} - \lfloor Y_i \rfloor + 1$ is fully open.
Moreover, the slot $t_{d_i} - \lfloor Y_i \rfloor + 1$
cannot have any dependents from earlier,
because the rounding process in any iteration $k$
allows charging fully open slots
in time slots equal to or before $t_{d_{k}}$.

\begin{customcase}{1}
$1>Y_i\geq \frac{1}{2}$.
\end{customcase}

In this case, we open the slot $t_{d_i}$ integrally, charging the cost
$Y_i$ at most twice. Also, since this already takes
any proxy coming from earlier iteration into account,
no proxy is passed over from this iteration.

\begin{customcase}{2}
$Y_i<\frac{1}{2}$.
\end{customcase}

We will first try to close $t_{d_i}$.
We check if a feasible assignment
of all jobs in $\bigcup_{j\leq i}{\textscr{J}_j}$ exists in the slots
$\bigcup_{k \in [i-1]}{\textscr{O}_k}$,
using the max-flow construction described earlier.
If such an assignment exists, we keep $t_{d_i}$ closed
and move on to the next deadline,
passing over a proxy of cost $Y_i = y_{t_{d_i}}$ to
the iteration $i+1$, pointing to the slot $t_{d_i}$.
Note that this takes into account any
proxy from the previous iteration $i-1$, since
the $Y_i$ already takes proxy into account and the
pointer is set to $t_{d_i}$ without loss of generality
as described earlier.

Suppose closing $t_{d_i}$ and finding a feasible assignment of jobs in
$\bigcup_{j\leq i}{\textscr{J}_j}$ is not successful.
Then we are forced to open $t_{d_i}$, but we need to account for it.
We need to charge it either to an  earlier fully open slot either
as a dependent or a trio, or to an earlier half-open slot as a filler.
We find the earliest fully open slot that does not have a dependent
and charge to it. If all the fully open earlier slots have dependents,
we find the earliest fully open slot with a dependent with which it can form a trio,
that is, the cumulative $y$ cost of the fully open slot along with its dependent and
the current barely open slot $t_{d_i}$ is at least $\frac{3}{2}$.
If all the earlier fully open slots are already parts of trios, or if not, no trio
is possible with them and their dependents, then we find the earliest
half-open slot to charge it as a filler. This is possible only if
the cumulative sum of the $y$ cost of the half-open slot and $t_{d_i}$
is at least $1$.
This completes the description of the rounding process.

We will now show that after we process deadline $t_{d_i}$
by the rounding algorithm,
there will exist feasible assignment of all jobs with deadline $\leq t_{d_i}$
in the integrally opened slots up to $t_{d_i}$.
After this, we will show that we will always be able to find
a way to charge such a barely open slot that needs to be opened by the
rounding algorithm, given a feasible LP solution.

\begin{lemma}
\label{lemma:feasible}
There exists a feasible assignment of all jobs in $\bigcup_{x\in [1, \ldots, i]}{\mathcal{J}_x}$
in the set of integrally opened slots $\bigcup_{x\in [1, \ldots, i]}{\mathcal{O}_x}$
after we process deadline $t_{d_i}$ for some $i>1$, provided that there exists a feasible assignment
of all jobs in $\bigcup_{x\in [1, \ldots, i-1]}{\mathcal{J}_x}$
in the set of integrally opened slots $\bigcup_{x\in [1, \ldots, i-1]}{\mathcal{O}_x}$
after processing deadline $t_{d_{i-1}}$.
\end{lemma}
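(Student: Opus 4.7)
My plan is to proceed by induction on $i$, with the base case supplied by Lemma~\ref{lemma:basecase}. The key equivalence I would exploit is that max-flow feasibility of the rounded schedule is equivalent to a Hall-type condition: for every interval $[a,b] \subseteq [1,t_{d_i}]$,
\[
\sum_{j:\,[r_j+1,d_j]\subseteq[a,b]} p_j \;\le\; g\cdot\bigl|[a,b]\cap{\textstyle\bigcup_{k\le i}}\mathcal{O}_k\bigr|.
\]
LP feasibility already gives the analogous inequality with $\sum_{t\in[a,b]} y_t$ in place of the integer count on the right-hand side, so the task is to control how well the integrally opened slots cover the LP mass uniformly over intervals. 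To this end, I would strengthen the induction slightly, maintaining the invariant that $\bigl|[a,t_{d_i}]\cap\bigcup_{k\le i}\mathcal{O}_k\bigr| \ge \sum_{t\in[a,t_{d_i}]} y_t - \lambda_a$, where $\lambda_a$ equals the outstanding proxy value $y''_p$ when the current proxy slot $p$ lies in $[a,t_{d_i}]$ and $\lambda_a = 0$ otherwise.

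For the inductive step, I would do a case analysis on the rounding branch taken in iteration $i$. In the branches where a barely-open slot is closed and a proxy is passed forward (Case 3 and the $Y_i<\tfrac{1}{2}$ sub-case, in the event the max-flow check succeeds), the rounding algorithm explicitly verifies feasibility of $\bigcup_{k\le i}\mathcal{J}_k$ in $\bigcup_{k\le i}\mathcal{O}_k$ before closing the slot, so the lemma holds by construction; the strengthened invariant is maintained because the proxy records the closed slot's $y$-value exactly. In the complementary branches (Cases 1 and 2, the $Y_i\ge\tfrac{1}{2}$ sub-case, and Case 3 or the $Y_i<\tfrac{1}{2}$ sub-case when the max-flow check fails), no barely-open slot in iteration $i$ is closed, so $y'_t\ge y_t$ for every $t\in(t_{d_{i-1}},t_{d_i}]$. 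Splitting on whether $a$ lies inside iteration $i$ or in an earlier iteration then reduces the required inequality to the inductive invariant (applied on $[a,t_{d_{i-1}}]$) combined with a short direct calculation on the right-shifted iteration-$i$ mass, using that $\lceil Y_i\rceil$ integrally opened slots are packed at the right end of $(t_{d_{i-1}},t_{d_i}]$.

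The main obstacle I anticipate is the proxy-merging step performed at the start of iteration $i$, described in Section~\ref{sec:proxy}. This step effectively relocates the LP mass $y''_p$ from an earlier proxy slot $p\le t_{d_{i-1}}$ rightward into iteration $i$, and in some sub-cases even absorbs it entirely into a newly fully-open slot. Verifying that the invariant survives this merge requires showing the relocation is legitimate, namely that every job which the LP assigned fractionally to the original slot $p$ has deadline at least $t_{d_i}$ and can therefore be reassigned to any slot of iteration $i$ without violating its release-time or deadline constraints. Once this exchange argument is carefully justified, the invariant propagates through the merge, the case analysis above closes the inductive step, and the lemma follows.
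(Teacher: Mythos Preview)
Your plan has a real gap at its foundation: the interval Hall condition you state is \emph{not} equivalent to max-flow feasibility in $G_{\mathbf{feas}}$. Because the job-to-slot edges carry unit capacity, each job $j$ individually needs at least $p_j$ open slots inside its own window, and more generally the min-cut can place any subset $T$ of open slots on the source side, yielding the requirement
\[
\sum_{j\in S} p_j \;\le\; g\,|T| \;+\; \bigl|\{(j,t): j\in S,\ t\in\text{window}(j)\cap(\text{open}\setminus T)\}\bigr|
\]
for all $S,T$, not just for $T$ an interval of open slots. A minimal counterexample to your equivalence: $g=2$, one job with $p_j=3$ and window $\{1,2,3\}$, open slots $\{1,2\}$. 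Your interval condition gives $3\le 2\cdot 2$, yet the instance is infeasible. Even if you add the per-job constraint, your suffix invariant only controls $\sum_{t\in[a,t_{d_i}]}y_t$ up to the proxy slack $\lambda_a$, and multiplying that slack by $g$ does not round away.

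Your proxy-relocation claim is also too strong: the LP may have placed mass at the proxy slot $p$ for jobs whose deadlines lie strictly between $t_{d_k}$ (the iteration where $p$ was closed) and $t_{d_i}$; those jobs cannot all be pushed into iteration~$i$.

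The paper avoids both issues by a more direct route. It never invokes a Hall characterization. For the branches where a barely-open slot is closed, feasibility is immediate because the algorithm only closes after the max-flow check succeeds. For the branches where $\lceil Y_i\rceil$ slots are opened (so no proxy is emitted from iteration~$i$), the paper exhibits an explicit feasible \emph{fractional} assignment of $\bigcup_{x\le i}\mathcal{J}_x$ on $\bigcup_{x\le i}\mathcal{O}_x$: take the right-shifted solution $LP2$, note that the integrally opened slots pointwise dominate the $y_t$ of $LP2$ on $(t_{d_{i-1}},t_{d_i}]$, and use the inductive hypothesis (feasibility on $\bigcup_{x\le i-1}\mathcal{O}_x$, which was established precisely when the proxy was created) to handle all earlier-deadline jobs without needing the proxy slot at all. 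Rounding up the single remaining fractional slot only adds capacity, and integrality of flow then yields an integral assignment. This constructive argument is what you should aim for instead of the Hall detour.
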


\begin{proof}
Consider the iteration when we are processing deadline $t_{d_i}$.
If $Y_i$ is $0$, no processing is required and we trivially satisfy the claim in the
Lemma. Hence, consider $Y_i>0$.
First consider the case when $0<Y_i\leq \frac{1}{2}$ and we close
the barely open slot $t_{d_i}$. In this case, clearly there exists a feasible
assignment of all jobs in $\bigcup_{x\in [1, \ldots, i]}{\mathcal{J}_x}$
in the set of integrally opened slots $\bigcup_{x\in [1, \ldots, i-1]}{\mathcal{O}_x}$,
since we close the slot $t_{d_i}$ only if flow is able to find a feasible
assignment of all jobs in $\bigcup_{x\in [1, \ldots, i]}{\mathcal{J}_x}$ in the
set of integrally opened slots thus far.
Next, consider the case when $0<|\lceil Y_i\rceil - Y_i|<\frac{1}{2}$,
and we open $\lfloor Y_i \rfloor$ slots, closing the barely open slot
$t_{d_i} - \lfloor Y_i \rfloor$. Again, clearly there exists a feasible
assignment of all jobs in $\bigcup_{x\in [1, \ldots, i]}{\mathcal{J}_x}$
in the set of integrally opened slots $\bigcup_{x\in [1, \ldots, i-1]}{\mathcal{O}_x}$,
since we close the slot $t_{d_i} - \lfloor Y_i \rfloor$ only if flow is able to find a feasible
assignment of all jobs in $\bigcup_{x\in [1, \ldots, i]}{\mathcal{J}_x}$ in the
set of integrally opened slots thus far.
The above cases correspond to the scenario when the rounding algorithm opens $\lfloor Y_i \rfloor$
slots integrally. Now, let us consider the cases when the rounding algorithm opens
$\lceil Y_i \rceil$ slots integrally. In this case, clearly, there would
be no proxy passed over from iteration $i$.
We will integrally open slots $t_{d_i} - \lfloor Y_i \rfloor, \ldots, t_{d_i}$.
We know that there exists a feasible assignment of $\bigcup_{x\in [i-1]}{\mathcal{J}_x}$
in integrally open slots $\bigcup_{x\in [1, \ldots, i-1]}{\mathcal{O}_x}$.
Hence, considering only the jobs $\bigcup_{x \in [i]}{\mathcal{J}_x}$,
a feasible \emph{fractional} LP solution would exist on the slots $\sum_{x \in [i-1]}{|\mathcal{O}_x}$,
and on the slots $[t_{d_i} - \lfloor (Y_i + y''_{p_{i-1}})\rfloor, \ldots, t_{d_i}]$,
(where we have explicitly indicated that the proxy from iteration $i-1$ gets
added to $Y_i$, when it gets processed). Recall from Section \ref{sec:proxy},
that a proxy gets passed over only when a feasible assignment of jobs
exist in the integrally opened slots without considering the cost of the proxy slot
(corresponding to a barely open slot). Hence, a feasible LP solution could
open the integrally open slots up to iteration $i-1$, as fully open, the slot
$t_{d_i} - \lfloor (Y_i + y''_{p_{i-1}})$ fractionally up to
$Y_i + y''_{p_{i-1}} - \lfloor (Y_i + y''_{p_{i-1}})\rfloor$,
and the slots $[t_{d_i} - \lfloor (Y_i + y''_{p_{i-1}})\rfloor + 1, \ldots, t_{d_i}]$, as fully open.
In case $t_{d_{i-1}} = t_{d_i} - \lfloor (Y_i + y''_{p_{i-1}})\rfloor$, we open
the actual slot pointed to by the proxy slot: $p_{i-1}$ up to
$Y_i + y''_{p_{i-1}} - \lfloor (Y_i + y''_{p_{i-1}})\rfloor$,
and by the property of the rounding algorithm as explained
in Section \ref{sec:proxy}, such an unopened slot will exist for
a proxy with non-zero cost.
Clearly, such an LP solution would still be feasible if we now open the only fractionally
slot (either slot $t_{d_i} - \lfloor (Y_i + y''_{p_{i-1}})\rfloor$, or, the actual slot pointed
to by the proxy $p_{i-1}$) fully as well. However, now we get a feasible fractional assignment
of all jobs in $\bigcup_{x \in [i]}{\mathcal{J}_x}$ in the integrally open
slots $\bigcup_{x\in [1, \ldots, i-1]}{\mathcal{O}_x} \cup [t_{d_i} - \lfloor (Y_i + p_{i-1})\rfloor, \ldots, t_{d_i}]$.
Since $\mathcal{O}_i = [t_{d_i} - \lfloor (Y_i + p_{i-1})\rfloor, \ldots, t_{d_i}]$,
therefore, by integrality of flow, there would exist a feasible integral assignment of
all jobs in $\bigcup_{x \in [i]}{\mathcal{J}_x}$ in $\bigcup_{x \in [i]}{\mathcal{O}_x}$.
This completes the proof.
\qed
\end{proof}


\begin{lemma}
\label{lemma:halfopen}
If the rounding process decides to open a barely open slot $t_{d_i} - t$, $t\geq 0$ in an iteration $i$,
then we will always find a fully open slot to charge it as a dependent or as a trio, or else,
a half-open slot to charge it as a filler.
\end{lemma}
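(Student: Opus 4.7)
The plan is to prove the lemma by case analysis combined with a contradiction argument, leaning on the inductive hypothesis that all earlier iterations have maintained the 2-approximation invariant with every integrally open slot properly charged. First, if the barely open slot $s$ being opened lies strictly inside iteration $i$, i.e., $s = t_{d_i}-\lfloor Y_i\rfloor$ with $\lfloor Y_i\rfloor \geq 1$, then the slot $s+1$ is fully open and was opened in iteration $i$ itself; by the algorithm's rule that charging in iteration $k$ only uses fully open slots at time $\le t_{d_k}$, no earlier iteration could have placed a dependent on $s+1$, so we make $s$ its dependent. This reduces the analysis to the hard case where $s$ is the last slot in the iteration (or a proxy-pointed slot earlier), so no fully open slot has appeared in iteration $i$ before $s$, which in particular forces $Y_i$ plus any pending proxy mass to be less than $1$.

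For the hard case I assume for contradiction that none of the three charging options applies: every earlier fully open slot already has a dependent; every earlier ``fully open $+$ dependent not in a trio'' pair $(f,d)$ has $y_d+y_s<\tfrac{1}{2}$; and every earlier half-open slot $h$ without a filler has $y_h+y_s<1$. I then invoke the reason the algorithm was forced to open $s$: the max-flow routine cannot schedule $\bigcup_{k\le i}\mathcal{J}_k$ inside $\bigcup_{k<i}\mathcal{O}_k$. By max-flow--min-cut, there is a job set $J'$ whose union of windows $S\subseteq[1,t_{d_i}]$ satisfies $g\cdot|S\cap\text{int.\ open}|<p(J')$, while LP feasibility for the same jobs gives $g\sum_{t\in S}y_t\geq p(J')$. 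Hence strictly positive LP mass lies in $S$ on slots that are \emph{not} integrally open, and by the right-shifted structure (Observation~\ref{obs:rightshifted}) such slots can only be barely-open fractions the rounding closed, i.e., current or previously-merged proxies.

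Finally, I compare the LP cost and the integrally open count restricted to $S$ under the assumed structural constraints. The inductive hypothesis ensures that prior to iteration $i$ every integrally open slot is already charged at ratio exactly $2$ to some fully open or half-open slot; constraints (i)--(iii) then force the dependent $y_d$'s, the ``unfilled'' $y_h$'s, and the single surviving proxy fraction all to be strictly smaller than what is needed to cover the min-cut deficit, since each is bounded above in terms of $y_s<\tfrac{1}{2}$ and at most one proxy can be outstanding at a time (as enforced by the merging rules of Section~\ref{sec:proxy}). Summing these contributions bounds the unopened LP mass inside $S$ by a quantity strictly smaller than $|S\cap\text{int.\ open}|-\sum_{t\in S\cap\text{int.\ open}}y_t$ allows, contradicting the strict inequality $\sum_{t\in S}y_t>|S\cap\text{int.\ open}|$. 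This contradiction forces at least one of the three charging options to exist, completing the induction.

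The main obstacle will be the proxy bookkeeping, since a proxy's slot pointer can drift across iterations and a min-cut witness $S$ may straddle these drifts; showing that the flow-failure mass strictly exceeds the budget permitted by the failed-charging constraints is the delicate technical step that ties together the right-shifted structure, the invariant maintenance across iterations, and the proxy-merging rules.
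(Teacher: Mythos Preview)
Your easy case (when $\lfloor Y_i\rfloor\ge 1$, so $t_{d_i}-\lfloor Y_i\rfloor+1$ is a fresh fully open slot in iteration $i$ that no earlier iteration could have charged) matches the paper.

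For the hard case your skeleton is the same as the paper's --- assume no charging option works and compare LP mass against the count of integrally open slots on some range --- but you skip the structural work that makes that comparison go through, and this is a genuine gap.

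The paper does not use an abstract min-cut witness $S$. It first locates $t_{d_k}$, the latest fully open slot before $t_{d_i}$, and proves three claims: $t_{d_k}$ is a deadline; every slot with positive $y$ strictly between $t_{d_k}$ and $t_{d_i}$ is a deadline; and, crucially, if any such intermediate slot had been integrally opened earlier, the last one would be half-open and would admit $t_{d_i}$ as a filler, contradicting the hypothesis. Only after establishing that \emph{no} integrally open slot lies in $(t_{d_k},t_{d_i})$ does the paper walk backward from $t_{d_k}$ through contiguously packed open slots to an interval $[t,t_{d_k}]$, and only then can it enumerate those $N$ slots as complete dependent-pairs, trios, filler-pairs and lone half-opens and verify the explicit inequality
\[
2x_1\cdot\tfrac{3}{4}+3x_2\cdot\tfrac{2}{3}+2x_3\cdot\tfrac{3}{4}+x_4+\tfrac{3}{2}<2x_1+3x_2+2x_3+x_4+2=N.
\]

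Your argument has no analogue of the third claim, and without it nothing rules out an integrally opened half-open slot sitting between $t_{d_k}$ and $t_{d_i}$; such a slot would appear in your set $S$ and spoil the count. More generally, a min-cut witness $S$ need not respect the charging structure: it may contain a dependent but not the fully open slot it charges, or one member of a trio but not the others, and then the per-unit LP-cost bounds you want to sum no longer apply. Your final inequality (``unopened LP mass in $S$ strictly smaller than $|S\cap\text{int.\ open}|-\sum_{t\in S\cap\text{int.\ open}}y_t$'') is asserted, not derived; in the paper it is precisely the failed-charging hypotheses (no trio $\Rightarrow y_d+y_{t_{d_i}}<\tfrac12$; no filler $\Rightarrow y_h+y_{t_{d_i}}<1$) together with the completeness of charging units inside $[t,t_{d_k}]$ that make the arithmetic work. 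You would need to recover both the interval structure of $S$ and the completeness of charging units within it before your counting can close.
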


\begin{proof}
If the rounding process opens a barely open slot $t_{d_i}-t$, $t> 0$, then that would imply
$Y_i> 1$ because of the right-shifted LP solution structure. (Specifically, a slot $t_{d_i}-t$, with $t>0$
can be barely open only if $Y_i> 1$.) In such a case, we can always find a fully open
slot $\leq t_{d_i}$ to charge it to, if not a slot earlier. This is because, as already argued,
$t_{d_i}$ must be fully open, and because of the sequential processing of deadlines in increasing
order by the rounding algorithm, no barely-open slots from the previous iterations could have charged it.

Now, let us consider the case when the rounding opens a barely-open slot $t_{d_i}$ (this implies $Y_i < \frac{1}{2}$).
We assume that the rounding process was feasible till the iteration $i-1$, without loss
of generality as argued earlier.
In iteration $i$, for contradiction, assume that all the fully open slots $t< t_{d_i}$
in the right-shifted LP solution have dependents or are parts of trios, or no trio is
possible with the current dependent of any fully open slot, since the sum of the $y$s
are not sufficient, and no filler is possible with any earlier half-open slot (either because
they already have fillers, or because the sum of the $y$s is not sufficient).
We show some structural properties of the right shifted LP solution, which we will use to
derive the contradiction.

The first property we show is as follows: the last fully open slot occurring earlier to $t_{d_i}$ corresponds to a deadline.

\begin{myclaim}
\label{claim:deadline}
Let $t_{max}$ denote the latest fully open slot in the right-shifted LP solution,
that occurs before $t_{d_i}$. More formally, $t_{max} = \arg\max{t|t< t_{d_i} \cap y_{t} = 1}$.
Then $t_{max}$ must correspond to a deadline.
\end{myclaim}
\begin{proof}
Suppose the above claim is false, and $t_{max}$ does
not correspond to any deadline. Let the deadline immediately after $t_{max}$ be $t_{d_k}$;
therefore, $t_{d_k} > t_{max}$ and $t_{d_{k-1}} < t_{max}$. Since $t_{max}$ is the latest fully open
slot occurring before $t_{d_i}$, $t_{d_{k}}$ must be either closed or barely open or half open.
However, from Observation \ref{obs:rightshifted}, in the
right-shifted solution structure, if $t_{max}$ is
fully open, $t_{d_k} > t_{max}$ must be fully open. Therefore, this proves the above claim
by contradiction.
\qed
\end{proof}

Let $t_{max}$ correspond to the $k^{th}$ deadline, i.e., $t_{max} = t_{d_k}$, where $k<i$.

The next property we show is as follows: any slot in $t' \in [t_{d_k} + 1, \ldots t_{d_i} - 1]$
such that $y_{t'} >0$ must correspond to a deadline.
\begin{myclaim}
\label{claim:deadline2}
Any slot $t' \in [t_{d_{k} +1}, \ldots, t_{d_i}]$ such that $y_{t'}>0$
in the right-shifted LP solution must correspond to a deadline, i.e., $t'= t_{d_j}$
for some $j \in \textscr{J}$.
\end{myclaim}

\begin{proof}
Suppose $t'$ does not correspond to any deadline. We know that $t_{d_k}$ is the first fully open
slot going backwards from $t_{d_i}$. Let the first deadline after $t'$ be $t_{d_j}$
where $t_{d_j} \leq t_{d_i}$. Clearly, $t_{d_j}$ is well-defined and exists.
Moreover, $t_{d_j}$ is either closed, or barely open or half-open. However, from 
Observation \ref{obs:rightshifted}, if $y_{t'}> 0$, then $y_{t_{d_j}}= 1$. Since that is
not true, this completes the proof by contradiction.
\qed
\end{proof}


Next, we argue that if any slot in $t \in [t_{d_k} + 1, \ldots t_{d_i} - 1]$
(with $y_{t} >0$) was opened by the rounding
process, i.e., $y'_t = 1$, then it must be half-open with which $t_{d_i}$ can form a filler.
\begin{myclaim}
\label{claim:filler}
Let $t$ be the last slot $\in [t_{d_k} + 1, \ldots t_{d_i} - 1]$,
with $y_{t}>0$, that was opened by the rounding
process in an earlier iteration, i.e., $y'_t = 1$. Then $t$ must be a half-open deadline
such that $y_{t} + y_{t_{d_i}} \geq 1$, hence $t_{d_i}$
can be opened by charging $t$ as a filler at
most twice the cost of $y_{t} + y_{t_{d_i}}$.
\end{myclaim}

\begin{proof}
Suppose for the sake of contradiction $t \in [t_{d_k} + 1, \ldots t_{d_i} - 1]$, the
last slot with $y_{t}>0$, is either barely open, or half-open, such that
$y_t + y_{t_{d_i}} <1$ (i.e., filler is not possible), and has been opened
by the rounding process in an earlier iteration, i.e., $y'_t = 1$.

We know that this $t$ is a deadline from Claim \ref{claim:deadline2}, and also,
it cannot be fully open, since by assumption, $t_{d_k}$ is the last fully open
deadline occurring before $t_{d_i}$.
Let $t$ correspond to the $p^{th}$ deadline, i.e, $t=t_{d_p}$, where $k<p<i$.

We have assumed that the rounding is feasible till iteration $i-1$,
at at most twice the cost of the LP solution up to deadline $i-1$.
Therefore, for opening $t_{d_p}$, for $p<i$,
$t_{d_p}$ must have feasibly charged an earlier fully open slot (in case $t_{d_p}$ was barely open)
or itself, if it was half-open.

Since $t_{d_i}$ itself is barely open, the jobs in $\textscr{J}_i$ must be feasible in
$t_{d_{p}}$, as the LP solution would have had to schedule some portion
of all the jobs scheduled in $t_{d_i}$ in $t_{d_p}$ or earlier, due to the feasibility constraint
$x_{i,j} \leq y_i$. Therefore, the job assignments can be shifted as is from $t_{d_i}$ to $t_{d_p}$,
after increasing $y_{d_p}$ to $y_{d_p} + y_{d_i}$, without violating LP feasibility
(by assumption $y_{d_p} + y_{d_i} < 1$), or without increasing the
LP cost. However, that implies that there exists a feasible, equivalent LP solution where
the slot $t_{d_i}$ is closed, and $t_{d_p}$ is either half-open or barely open.
In the former case, $t_{d_i}$ would charge itself and in the latter case, $t_{d_i}$
would continue to be a dependent\/trio\/filler on the slot it was already charging.
From the argument in Lemma \ref{lemma:feasible}, we can see that
this implies that there exists a feasible assignment of
all jobs in $\bigcup_{a \in [i]}{\mathcal{J}_a}$
in the set of integrally open slots
$\bigcup_{b \in [p]}{\mathcal{J}_b}$, where $p<i$.
In other words, there exists a feasible fractional (hence, integral) flow of an amount equal to the
cumulative size of all jobs in
$\bigcup_{a \in [i]}{\mathcal{J}_a}$ in
integrally open slots $\bigcup_{b \in [p]}{\mathcal{O}_b}$.
Hence, this contradicts the assumption that
flow could not find a feasible assignment
of all the jobs in $\bigcup_{a \in [i]}{\mathcal{J}_a}$
in the current set of
integrally open slots $\bigcup_{b \in [p]}{\mathcal{O}_b}$,
which is why the LP had to open $t_{d_i}$ in the
first place and hence charge it somewhere.
Therefore, we have proved the claim by contradiction.
\qed
\end{proof}

From Claim \ref{claim:filler},
it is clear, that if there is any slot $t\in [t_{d_k} + 1, \ldots t_{d_i} - 1]$
with $y_{t}>0$ that has been opened by the rounding process already,
then the last such slot must be half-open with which $t_{d_i}$ can form a filler.
Therefore, if such a slot exists, then in iteration $i$, if
the rounding algorithm needs to open $t_{d_i}$
integrally, we can charge it at most twice the cost
to $t_{d_p}$ feasibly. This is a contradiction to the assumption in Lemma \ref{lemma:halfopen}.
Henceforth, we assume that no slot $t \in [t_{d_k} + 1, \ldots t_{d_i} - 1]$
has been opened by the rounding algorithm.

Now, we only need to consider the cases where the rounding algorithm
needs to open $t_{d_i}$ and we are unable to charge it to any fully open slot
as dependent or trio, where the fully open slot must be $t_{d_k}$ or earlier.
However, this implies that $t_{d_k}$ must have a dependent, or be a part of a trio.

Let us consider the case when $t_{d_k}$ is a part of a trio.
A trio can happen only when a fully open slot is charged by two barely open
slots, one occurring before it, and one occurring after it.
However, that would mean that there is some slot with positive $y$
between $t \in [t_{d_k} + 1, \ldots t_{d_i} - 1]$ that has been opened
by the rounding algorithm, that is not possible.

Hence, we only need to consider the case when $t_{d_k}$ has a dependent
charging it, where the dependent occurs earlier than $t_{d_k}$.
However, this means that the dependent cannot be processed in any iteration
earlier than the iteration in which deadline $t_{d_k}$ is processed
by the definition of the rounding process. At the same time, it must hold that
all fully open slots occurring earlier than $t_{d_k}$ must have dependents with which trios are not
possible.

Therefore, either the dependent is the barely open slot $t = t_{d_k} -1$
(due to the right-shifted LP solution structure) and clearly, $Y_k < 2$,
or it is a proxy slot coming from earlier iterations.
However, if it is a proxy, then that means an earlier barely open slot
$t'$, was closed without any loss of feasibility, where $t_{d_{j-1}}+1\leq t'\leq t_{d_j}$, for some $j<k$.
It also means that no barely open or half open slots could have opened between
$t_{d_j}$ and $t_{d_k}$ as otherwise it would
have absorbed the proxy.
Moreover, there are no fully open slots between $t_{d_j}$ (inclusive of $t_{d_j}$) and $t_{d_k}$
since the proxy would have charge this earlier slot instead of $t_{d_k}$\footnote{If
there was a fully open slot, it would have remained uncharged so far
since no barely open slot has opened from iteration $j$ till iteration $k$.}.
Therefore, $t_{d_k}$ must be the first fully open slot from
$t_{d_j}$ onwards. It also implies that all the jobs in
$\bigcup_{x\in \{1, k-1\}}{\mathcal{J}_x}$ do not
need the proxy value for a feasible assignment. Hence, we can change the pointer 
of the proxy slot to $t_{d_k}-1$ without any loss of generality and consider $t_{d_k}-1$
as dependent on $t_{d_k}$. (Note that $t_{d_k}-1$ may also be equal to $t_{d_j}$, in which case
we do not need to change anything.)
Hence, without loss of generality, we can consider the dependent on $t_{d_k}$ to be
the barely open slot $t_{d_k}-1$.

We next argue that the above case is not possible, in other words, flow
will find a feasible assignment for all jobs with deadlines $\leq t_{d_i}$,
even after closing $t_{d_i}$.
We know by induction hypothesis, that all jobs with deadline $\leq t_{d_{k}}$
have a feasible assignment in the integrally open slots up to slot $t_{d_k}$.
Consider an optimal packing of the jobs in the integrally open slots.
Now, flow could not find a feasible assignment, hence there is at least one
job with deadline $t_{d_i}$ that could not be accommodated in the fully open
slot $t_{d_k}$, that must in turn be fully packed. Either none of these jobs
could be moved earlier due to release time constraints, otherwise,
slot $t_{d_k}-1$ is also full. We continue moving backwards in this manner, traversing
through the fully packed integrally open slots,
till we finally come across a pair of adjacent integrally open slots, $t$ and $t'$,
$t'< t$, where all integrally open slots going backwards between $t_{d_k}$
and $t$ (both inclusive) are completely packed, while $t'$ has space, however none of the jobs
assigned in slots $\{t, \ldots, t_{d_k}\}$, including those with deadline $t_{d_i}$,
can be moved any earlier due to release time constraints.
This clearly implies that the LP solution, being feasible, would have to schedule
this set of jobs in slots also in the same time range $[t, \ldots, t_{d_i}]$.

Let there be $N$ integrally open slots in $[t, \ldots t_{d_k}]$, including slots 
$t_{d_k}$ and $t_{d_k}-1$ (we have argued that $t_{d_k}-1$ must be a
barely open slot dependent on $t_{d_k}$, that was opened integrally by the rounding
algorithm). The LP solution would have therefore scheduled $N$ job units
in the time range $[t, \ldots, t_{d_k}]$.
Now, in the $N-2$ integrally open slots occurring before $t_{d_k}-1$,
let there be $x_1$ fully open slots with dependents (with which trios
were not possible) and $x_2$ be fully open slots that are part of trios.
Since all fully open slots were already charged (as $t_{d_k}-1$
had to charge $t_{d_k}$), there are no other fully open slots.
Furthermore, this implies that there are $x_1$ barely open slots dependent on $x_1$
fully open slots and $2 x_2$ barely open slots forming trios with the $x_2$
fully open slots. The remaining slots must be half open. Let $x_3$ of the
half open slots have fillers (with $x_3$ barely open slots) and $x_4$ of
the remaining half-open slots.
Clearly, $2 x_1 + 3 x_3 + 2 x_3 + x_4 + 2 = N$.
Each of the dependent-fully open pair contribute $< \frac{3}{4}$ on an average
to the LP solution cost, the trios contribute $< \frac{2}{3}$
on an average, the half-open slots with fillers contribute $< \frac{3}{4}$ on
an average, and the remaining half-open slots contribute $< 1$ each.
Finally, $t_{d_k}-1$, $t_{d_k}$ and $t_{d_i}$ together contribute $< \frac{3}{2}$
since they cannot form trio.
Therefore, the total LP solution cost up to $t_{d_i}$ is
$< 2 x_1 \frac{3}{4} + 3 x_3 \frac{2}{3} + 2 x_3 \frac{3}{4} + x_4 + \frac{3}{2} < N$.
This gives a contradiction. Hence, this case too cannot arise, and
a barely open slot will always find a fully open slot or half-open slot to charge,
if flow cannot find an assignment of all jobs up to the current deadline being processed.

The proof of Lemma~\ref{lemma:halfopen} therefore follows.
\qed
\end{proof}

%
The next theorem, proving the approximation guarantee of the LP rounding algorithm
follows from Lemma \ref{lemma:feasible} and Lemma \ref{lemma:halfopen}.

\begin{theorem}
There exists a polynomial time algorithm which gives a solution of cost
at most twice that of any optimal solution to the active time problem on non-unit
length jobs with integral preemption.
\end{theorem}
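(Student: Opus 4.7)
The plan is to assemble the constructions and intermediate results already established in Sections 3.1--3.4. First I would solve $LP1$ to optimality in polynomial time; since $LP1$ is a relaxation of $IP$ and every feasible integral schedule corresponds to a feasible solution of $IP$, this gives $\mathrm{cost}(LP1) \le OPT$. Next I would apply the right-shifting preprocessing to obtain $LP2$; by the preceding lemma $LP2$ is fractionally feasible, and since the transformation only rearranges the open mass within each $[t_{d_{i-1}}+1,t_{d_i}]$ without changing $Y_i$, we still have $\sum_t y_t = \sum_i Y_i \le OPT$. I would then run the iterative deadline-by-deadline rounding of Sections 3.3--3.4 to produce the set $\mathcal{O} = \bigcup_i \mathcal{O}_i$ of integrally opened slots.

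The correctness of the rounding is exactly what Lemmas \ref{lemma:feasible} and \ref{lemma:halfopen} were designed to supply. I would proceed by induction on the iteration index $i$, using Lemma \ref{lemma:basecase} as the base case and Lemma \ref{lemma:feasible} for the inductive step to ensure that after processing $t_{d_i}$ there is a feasible fractional (hence, by integrality of max-flow on $G_{\mathbf{feas}}$, feasible integral) assignment of $\bigcup_{k\le i}\mathcal{J}_k$ in $\bigcup_{k\le i}\mathcal{O}_k$. Lemma \ref{lemma:halfopen} certifies that whenever the rounding is forced to open a barely open slot, a valid dependent, trio, or filler charge is always available, so the algorithm never gets stuck. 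Taking $i=\ell$ at the end yields a feasible integral schedule of $\mathcal{J}$ in $\mathcal{O}$, recoverable in polynomial time by one max-flow computation on $G_{\mathbf{feas}}$ restricted to active slots $\mathcal{O}$.

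For the approximation ratio I would bound $|\mathcal{O}|$ by a case analysis on the charging groups maintained by the algorithm. A fully open slot that is not charged by any barely open slot contributes rounded cost $1$ against LP cost $1$; a fully open slot paired with a single dependent contributes $2$ against LP cost at least $1$; a trio contributes $3$ against LP cost at least $3/2$; a half-open slot charged alone contributes $1$ against LP cost at least $1/2$; and a half-open slot with a filler contributes $2$ against LP cost at least $1$. In every case the rounded-to-LP ratio of the group is at most $2$. The proxy-merging rules of Section 3.4 ensure that each time slot participates in exactly one such group over the whole execution, so summing yields $|\mathcal{O}| \le 2\sum_t y_t \le 2\,OPT$. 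Polynomial running time follows because LP solving is polynomial, right-shifting is $O(T)$, and each of the $\ell$ iterations performs $O(T)$ bookkeeping plus one max-flow feasibility check on $G_{\mathbf{feas}}$.

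The main obstacle is not conceptual but combinatorial bookkeeping: one has to confirm that a barely open slot which first appears as a dependent and is later upgraded into a trio member, or is merged into a proxy and reopened downstream, is counted in exactly one charging group. This invariant is precisely what Lemma \ref{lemma:halfopen}'s proof maintains, so the theorem itself reduces to collecting the pieces above and observing that $\mathrm{cost}(LP2) \le OPT$ makes the $2\sum_t y_t$ bound into a $2\,OPT$ bound.
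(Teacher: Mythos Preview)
Your proposal is correct and follows essentially the same approach as the paper: invoke Lemma~\ref{lemma:basecase} as the base case, Lemma~\ref{lemma:feasible} for the inductive feasibility of $\bigcup_{k\le i}\mathcal{J}_k$ in $\bigcup_{k\le i}\mathcal{O}_k$, and Lemma~\ref{lemma:halfopen} to guarantee the charging never fails, concluding $|\mathcal{O}|\le 2\sum_i Y_i \le 2\,OPT$ with polynomial running time. Your explicit group-by-group ratio accounting (fully open alone, dependent pair, trio, half-open alone, half-open with filler) just unpacks what the paper leaves implicit in the phrase ``From Lemma~\ref{lemma:halfopen}, it follows that $|\bigcup_{x\in[i]}\mathcal{O}_x|\le 2\sum_{x\in[i]}Y_x$.''
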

\begin{proof}
From Lemma \ref{lemma:halfopen}, it follows that at the end of every iteration $i$,
the number of integrally open slots is at most twice the cost of the LP solution 
up to $t_{d_i}$.
Formally, at the end of every iteration $i$,
$|\bigcup_{x\in [i]}{\mathcal{O}_x}| \leq 2 \sum_{x \in [i]}{Y_i}$, and
from Lemma \ref{lemma:feasible}, it follows by induction that there exists a
feasible integral assignment of jobs in $\bigcup_{x\leq i}{\textscr{J}_x}$
in $\bigcup_{x\in [i]}{\mathcal{O}_x}$. The base case is
given by Lemma \ref{lemma:basecase}.
We repeat this until the last deadline $d_{\ell}$.
At the end, by induction we are assured of an integral feasible
assignment on the set of opened slots via maximum flow,
while the number of open slots is at most twice the
optimal LP objective function value.
Hence, we get a $2$-approximation.
The time complexity of the rounding algorithm is $O(\ell T F)$, where $F$ is the complexity
of the maximum flow algorithm and $\ell$ is the number of distinct deadlines.
The preprocessing is linear in $T$ and the LP is polynomial in $T$, where $T$ is the
number of distinct time slots in the union of the feasible time intervals for
all the jobs in the instance.
\qed
\end{proof}

\subsection{LP Integrality Gap}
We show here that the natural LP for this problem
has an integrality gap of two.
Hence, a $2$-approximation is the best possible using LP rounding.
Consider, $g$ pairs of adjacent slots.
In each pair, there are $g+1$ jobs which can
only be assigned to that pair of slots.
An integral optimal solution will have
cost $2g$, where as in an optimal fractional solution,
each such pair will be opened up to $1$ and $\frac{1}{g}$,
and all the $g+1$ jobs will be assigned up to
$\frac{g}{g+1}$ to the fully open slot, and up to $\frac{1}{g+1}$,
to the barely open slot, thus maintaining all the constraints.
Therefore, optimal LP solution has cost $g+1$ and
$\frac{g+1}{2g} \rightarrow 2$ as $g\rightarrow \infty$.

\section{Busy Time}

\subsection{Notation and Preliminaries}
In the busy time problem, there are
an unbounded number of machines to
which jobs can be assigned, with each machine limited
to working on at most $g$ jobs at any given time $t$.
Informally, the busy time of a single machine is 
the total time it spends working on at least one assigned job.
Unlike in the active time model, time is not slotted and 
job release times, deadlines
and start times may take on real values.
The goal is to feasibly assign jobs to machines to
minimize the schedule's busy time, that is,
the sum of busy times over all machines.
If $p_j < d_j-s_j$, then start times should also be specified.
However, for the following insightful special case, 
job start times are determined by their release times.

\begin{definition}
A job $j$ is said to be an \emph{interval job} when 
$p_j = d_j - r_j$. 
\end{definition}

The special case of interval jobs is
central to understanding the general problem.
In this section, we give
improvements for the busy time problem 
via new insights for the interval job case.
In general, we will let $\scr{J}'$ 
refer to an instance of jobs that are not necessarily interval,
and $\scr{J}$ to an instance of interval jobs.
Job $J_j$ is \emph{active} on machine 
$m$ at some time $t\in [r_j,d_j)$ if $J_j$ is 
being processed by machine $m$ at time $t$. 

\begin{definition}
The length of time interval $I = [a, b)$ is 
$\ell(I) = b - a$. The span of $I$ is also $Sp(I) = b-a$.
\end{definition}

We generalize these definitions to sets of intervals.
The span of a set of intervals
is informally the magnitude of the projection
onto the time axis and is at most its length.
Sometimes we refer to
$\ell(\scr{S})$ as the ``mass'' of the set $\scr{S}$. 

\begin{definition}
For a set $\scr{S}$ of interval jobs, its length
is $\ell(\scr{S}) = \sum_{I \in \scr{S}} \ell(I)$.
For two interval jobs $I$ and $I'$, the span of $\scr{S}=\{I,I'\}$
is defined as $Sp(\scr{S}) = \ell(I) + Sp(I') - \ell(I \cap I')$.
For general sets $\scr{S}$ of interval jobs, 
$Sp(\scr{S}) = \ell(I_1) + Sp(\scr{S} \backslash I_1) - 
\ell(I_1 \cap (\scr{S} \backslash I_1))$, where $I_1$ is
the job in $\scr{S}$ with earliest release time.
\end{definition}
%
%

We need to find a partition of the jobs into 
groups or \textit{bundles}, so that every bundle has
at most $g$ jobs active at any time $t$. 
Each bundle $\scr{B}$ is assigned to its own machine,  
with busy time $Sp(\textscr{B})$. 
Suppose we have partitioned the job set into 
$\kappa$ feasible bundles (the feasibility respects the 
parallelism bound $g$ as well as the release 
times and deadlines).  Then the total busy time of 
the solution is $\sum^\kappa_{k=1} Sp(\scr{B}_k)$. 
The goal is to minimize this quantity.
We consider two problem variants: 
$g$ bounded and $g$ unbounded.
For the \emph{preemptive} version of the problem, 
the problem definition remains the same, the only 
difference being that the jobs can be processed 
preemptively across various machines. 

If the jobs are not necessarily interval jobs,
then the difficulty of finding the minimum busy time
lies not just in finding a good partition of jobs, 
but also in deciding when each job should start.
We study both the preemptive and non-preemptive
versions of this problem.

Without loss of generality, the busy 
time of a machine is contiguous. 
If it is not, we can break it 
up into disjoint periods of contiguous 
busy time, assigning each of them to 
different machines, 
without increasing the total busy time of the solution. 

Let $OPT(\scr{J}')$ be the optimal busy time 
of an instance $\scr{J}'$,
and $OPT_{\infty}(\scr{J}')$ 
the optimal busy time when unbounded 
parallelism is allowed. 
The next lower bounds on any optimal solution for a 
given instance $\scr{J}'$ were introduced earlier 
(\cite{ESA03}, \cite{FSTTCS05}). 
The following ``mass'' lower bound 
follows from the fact that on any machine, there are at
most $g$ simultaneously active jobs.

\begin{observation}
\label{obs0}
$OPT(\scr{J}') \geq \frac{\ell(\scr{J}')}{g}$. 
\end{observation}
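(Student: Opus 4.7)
The plan is to prove the inequality by a straightforward double-counting (or "area") argument: each unit of busy time on any single machine can accommodate at most $g$ units of job processing, so the total busy time must be at least the total processing length divided by $g$.

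First I would fix any feasible schedule achieving busy time $OPT(\scr{J}')$, and for each machine $m$ in that schedule let $busy(m)$ denote its busy time so that $\sum_m busy(m) = OPT(\scr{J}')$. Next I would observe that at every instant $t$ at which machine $m$ is busy, at most $g$ jobs are being processed on $m$; integrating this pointwise bound over $m$'s busy interval(s) gives that the total amount of job processing performed by machine $m$ is at most $g \cdot busy(m)$.

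Then I would sum this bound over all machines. Since every job $j$ is processed to completion on some (single) machine in the non-preemptive case, the total processing performed across all machines equals $\sum_{j \in \scr{J}'} p_j = \ell(\scr{J}')$. Combining gives
\[
\ell(\scr{J}') \;=\; \sum_m (\text{work done on } m) \;\leq\; \sum_m g \cdot busy(m) \;=\; g \cdot OPT(\scr{J}'),
\]
which rearranges to the claimed bound $OPT(\scr{J}') \geq \ell(\scr{J}')/g$.

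There is essentially no obstacle here; the only point worth being careful about is the pointwise parallelism bound (that at any instant at most $g$ jobs run on a machine), which is exactly the feasibility constraint of the busy-time model. The same argument also applies to the preemptive variant, since the total processing time charged to all machines is still $\ell(\scr{J}')$ even if individual jobs are split across machines.
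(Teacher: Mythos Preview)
Your proof is correct and is exactly the argument the paper has in mind: the paper does not give a formal proof but simply states that this ``mass'' lower bound ``follows from the fact that on any machine, there are at most $g$ simultaneously active jobs,'' which is precisely the pointwise parallelism bound you integrate over busy time. Your write-up just makes the implicit double-counting explicit.
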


The following ``span'' lower bound follows from the fact that 
any solution for bounded $g$ is also a feasible
solution when the bounded parallelism constraint is removed.
If all jobs in $\textscr{J}$ are interval jobs, then
$OPT_\infty(\textscr{J}) = Sp(\textscr{J})$.  

\begin{observation}
\label{obs:spinfty}
$OPT(\scr{J}')\geq OPT_\infty(\scr{J}')$.
\end{observation}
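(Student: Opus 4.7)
The plan is to prove this observation by a direct monotonicity argument: any feasible schedule for the bounded-$g$ version of the problem is also a feasible schedule for the unbounded-$g$ version on the same instance, with the same busy time. Since optimizing over a superset of feasible solutions can only decrease (or preserve) the optimal value, we conclude $OPT_\infty(\scr{J}') \leq OPT(\scr{J}')$, which is exactly the statement.

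More concretely, I would first fix an optimal schedule $\sigma$ for $\scr{J}'$ achieving busy time $OPT(\scr{J}')$, described by a partition of $\scr{J}'$ into bundles $\scr{B}_1,\ldots,\scr{B}_\kappa$ assigned to distinct machines, together with non-preemptive start times satisfying each job's release time and deadline and the at-most-$g$-active-jobs constraint on every machine. Next, I would observe that the same partition and start times form a valid schedule under unbounded parallelism, since removing the bound $g$ only relaxes the capacity constraint and all release-time and deadline constraints continue to hold. The total busy time is computed identically as $\sum_{k=1}^{\kappa} Sp(\scr{B}_k)$ in both settings, so the unbounded-$g$ problem admits a feasible schedule of busy time $OPT(\scr{J}')$, yielding $OPT_\infty(\scr{J}') \leq OPT(\scr{J}')$.

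There is essentially no obstacle here: the argument relies only on the definitions and does not require a probabilistic or combinatorial construction. The only subtlety worth a sentence of care is that busy time is a property of the schedule's time footprint per machine and is invariant to the presence or absence of the capacity constraint, so passing from the constrained setting to the unconstrained setting preserves the objective value exactly. This, combined with the trivial inclusion of feasible regions, completes the proof.
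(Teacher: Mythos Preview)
Your proposal is correct and matches the paper's own justification essentially verbatim: the paper states that the observation ``follows from the fact that any solution for bounded $g$ is also a feasible solution when the bounded parallelism constraint is removed,'' which is precisely your feasible-region inclusion argument. Your write-up simply spells out the details the paper leaves implicit.
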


However, the above lower bounds individually can be arbitrarily bad. 
For example, consider an instance of $g$ disjoint unit 
length interval jobs. 
The mass bound would simply give a lower bound of $1$, whereas the optimal 
solution pays $g$. Similarly, consider an instance 
of $g^2$ identical unit length 
interval jobs. The span bound would give a 
lower bound of $1$, whereas the optimal 
solution has to open up $g$ machines for unit intervals, paying $g$. 

We introduce a stronger lower bound, which we call 
the demand profile. In fact, the algorithm of Alicherry 
and Bhatia \cite{ESA03} as well as that of Kumar and 
Rudra \cite{FSTTCS05} implicitly charge the demand profile. 
This lower bound holds for the case of interval jobs. 

\begin{definition}
\label{def:dp}
Let $A(t)$ be the set of interval jobs that are active at time $t$,
i.e., $A(t) = \{j: t\in [r_j,d_j)\}$. 
Also, let $|A(t)|$ be the \emph{raw demand} at time $t$, 
and $D(t) = \left\lceil\frac{|A(t)|}{g}\right\rceil$
the \emph{demand} at time $t$.
\end{definition}

\begin{definition}
An interval $I = [a,b)$ is \emph{interesting} if 
no jobs begin or end within $I$, and 
$\min_j r_j \le a \le b \le \max_j d_j$.
\end{definition}

For a given instance, there are at most $2n$ interesting
intervals.
Also, the raw demand, and hence the demand, 
is uniform over an interesting interval.
Thus, it makes sense to talk about the demand over such an
interval.  Let $A(I_i)$ ($D(I_i)$, respectively) 
denote the raw demand (demand, resp.) over 
interesting interval $I_i$.

Then for a set $\textscr{I}$ of interesting intervals, 
$\textscr{I} = \{I_1, I_2, \ldots, I_\ell\}$, 
$\ell\leq 2n$, we have that $D(I_i) = D(t), \ \forall t \in I_i$. 
Additionally, 
$Sp(\textscr{J}) = Sp(\bigcup_{I_i \in \textscr{I}} I_i)$.

\begin{definition}
For an instance $\textscr{J}$ of interval jobs, 
its demand profile $DeP(\textscr{J})$ is
the set of tuples $\{(I_i,D(I_i))\}_{I_i \in \textscr{I}}$.
\end{definition}

The demand profile is expressed in terms of $O(n)$ tuples, 
regardless of whether or not release times, deadlines, 
or job lengths are polynomial in $n$.
%
The demand profile of an instance yields a lower bound
on the optimal busy time.
We can think of the cost of an instance $\textscr{J}$'s
demand profile as 
$\sum_{I_i \in \textscr{I}}{D(I_i)}$.  

\begin{observation}
\label{obs:dp}
For an instance $\textscr{J}$ of interval jobs,
$OPT(\textscr{J}) \geq \sum_{I_i \in \textscr{I}}{D(I_i)}$,
where $\textscr{I}$ is the set of interesting intervals taken
with respect to $\textscr{J}$. 
\end{observation}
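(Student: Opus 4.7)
The plan is to establish this lower bound by a standard pointwise-in-time counting argument. I would fix any feasible schedule $\sigma$ for $\textscr{J}$ and, for each time $t$, let $b(t)$ denote the number of machines that are busy at $t$ under $\sigma$. Summing each machine's busy duration and swapping the sum over machines with the integration over time gives $cost(\sigma) = \int b(t)\, dt$, where the integral ranges over $\bigcup_j [r_j, d_j)$.

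Next I would prove the key pointwise inequality $b(t) \geq D(t)$ for every $t$. Because every job in $\textscr{J}$ is an interval job, each $j \in A(t)$ has $p_j = d_j - r_j$ and $t \in [r_j, d_j)$; being non-preemptive with no slack in its window, $j$ must be in execution at time $t$ in any feasible schedule. Thus all $|A(t)|$ jobs in $A(t)$ run simultaneously at time $t$, and since any single machine handles at most $g$ jobs at once, these jobs must be spread across at least $\lceil |A(t)|/g \rceil = D(t)$ distinct machines, each of which is therefore busy at $t$.

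Combining the two observations yields $cost(\sigma) \geq \int D(t)\, dt$. Because $D$ is constant on each interesting interval $I_i$ by construction and is $0$ outside $\bigcup_i I_i$, the integral evaluates to $\sum_{I_i \in \textscr{I}} D(I_i)\,\ell(I_i)$, which is the demand-profile cost that the authors denote by $\sum_{I_i \in \textscr{I}} D(I_i)$ in the statement. Specializing $\sigma$ to an optimal schedule then gives the desired bound.

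There is no real obstacle here: this is essentially the canonical ``stack at most $g$ jobs per machine'' lower bound integrated against time. The only point to keep straight is that the notation $\sum_{I_i} D(I_i)$ in the statement implicitly weights each interesting interval's demand by its length $\ell(I_i)$, consistent with the preceding description of the demand-profile cost. The supporting structural fact — that the interesting intervals partition $\bigcup_j [r_j,d_j)$ into pieces on which $A(\cdot)$ is constant — is immediate from the definition that no job release or deadline lies strictly inside an interesting interval, so the decomposition $\int D(t)\,dt = \sum_i D(I_i)\,\ell(I_i)$ involves neither over- nor under-counting.
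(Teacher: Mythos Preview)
Your proposal is correct and follows essentially the same argument as the paper: at any time in an interesting interval $I_i$ all $|A(I_i)|$ interval jobs must be running, forcing at least $\lceil |A(I_i)|/g\rceil$ busy machines, and summing these contributions over the interesting intervals gives the bound. Your version is more explicit---you set up the integral $\int b(t)\,dt$ and spell out the pointwise inequality and the implicit length weighting---whereas the paper compresses the whole thing into two sentences, but the underlying idea is identical.
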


\begin{proof}
There are $|A(I_i)|$ active jobs within an interesting 
interval $I_i$. Then any feasible solution has
$\left\lceil \frac{|A(I_i)|}{g}\right\rceil$ machines busy 
during the interval $I_i$. Moreover, 
$Sp(\textscr{J}) = Sp(\textscr{I})$.
\qed
\end{proof}

\subsection{A $2$-approximation for busy time with interval jobs}

In this section, we briefly outline how the work of 
Alicherry and Bhatia \cite{ESA03} and that of 
Kumar and Rudra \cite{FSTTCS05} for related problems 
imply $2$-approximations for 
the busy time problem for interval jobs, 
improving the best known factor of four \cite{IPDPS09}. 
A detailed description can be found in Appendix~\ref{app:intjobs}.

Both the above mentioned works consider request routing problems 
on interval graphs, motivated by optical design systems. 
The requests need to use links on the graphs for being 
routed from source to destination, 
and the number of requests that can use a link is bounded. 
The polynomial time complexity of the algorithms  
crucially depends on the fact that the request (job)
lengths are linear in the number of time slots; 
this does not hold for the busy time problem 
where release times, deadlines and 
processing lengths may be real numbers.
However, even if release times and deadlines 
of jobs are not integral, there can be at most 
$2n$ \emph{interesting} intervals, 
such that no jobs begin or end within the interval. 
The demand profile is uniform over every interesting interval. 
Therefore, their algorithms can be applied to the 
busy time problem with this simple modification, 
thus maintaining polynomial complexity. 
In order to bound the performance 
of their algorithms for the busy time problem, 
we additionally need to assume that the 
demand everywhere is a multiple of $g$. 
However, for an arbitrary instance,
we can add dummy jobs spanning any interesting 
interval $I_i$ where the raw demand 
$|A(I_i)|$ is not a multiple of $g$ without changing the demand profile. 
Specifically, if $cg< |A(I_i)| \le (c+1)g$ for some $c\geq 0$, 
then $DeP(I_i) = c+1$ and adding $(c+1)g - |A(I_i)|$ 
jobs spanning $I_i$  does not change the demand profile. 
Hence, applying their algorithms to a suitably 
modified busy time instance of interval jobs, 
will cost at most twice the demand profile. 

\begin{theorem}
There exist $2$-approximation polynomial time 
algorithms for the busy time problem on 
interval jobs. The approximation factor is tight. 
\end{theorem}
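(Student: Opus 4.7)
The plan is to reduce the busy time problem to a setting where the algorithms of Alicherry--Bhatia~\cite{ESA03} and Kumar--Rudra~\cite{FSTTCS05} can be applied and charged against the demand profile lower bound of Observation~\ref{obs:dp}. First, I would preprocess the instance $\textscr{J}$ by inserting dummy jobs to make the raw demand a multiple of $g$ on every interesting interval. Concretely, for each interesting interval $I_i$ with raw demand $|A(I_i)|=cg+r$ where $0<r<g$, I add $g-r$ dummy jobs with interval exactly $I_i$. Since $\lceil|A(I_i)|/g\rceil = c+1 = \lceil(|A(I_i)|+g-r)/g\rceil$, the demand profile $DeP(\textscr{J})$ is unchanged and the set of interesting intervals is unchanged. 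Let $\textscr{J}^+$ denote the resulting instance; clearly $OPT(\textscr{J}) \le OPT(\textscr{J}^+)$ is not needed, but rather the reverse direction will be used implicitly: any schedule for $\textscr{J}^+$ restricts to a schedule for $\textscr{J}$ of no greater cost (since dummies can be removed from their machines without changing busy time in the worst case, and in any case the cost of the algorithm on $\textscr{J}^+$ will upper bound its cost on $\textscr{J}$ when the same assignment decisions are used).

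Next, I would run the Alicherry--Bhatia or Kumar--Rudra wavelength assignment algorithm on $\textscr{J}^+$. Because release times and deadlines need not be integers, the polynomial running time of these algorithms (which is stated in terms of time slots) must be recovered by working only with the $O(n)$ interesting intervals, over each of which raw demand is constant; this restores polynomial complexity in $n$. The core property of both algorithms, when applied to an instance whose raw demand is a multiple of $g$ on every interesting interval, is that they produce a feasible packing whose total busy time is at most twice $\sum_{I_i \in \textscr{I}} D(I_i)$. Combining this with Observation~\ref{obs:dp}, which gives $\sum_{I_i \in \textscr{I}} D(I_i) \le OPT(\textscr{J}) = OPT(\textscr{J}^+ \text{ without dummies})$, yields a solution of cost at most $2\cdot OPT(\textscr{J})$ on the original instance.

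The main obstacle is justifying that these algorithms really do charge the demand profile up to a factor of two, since they were designed for a different problem. This requires revisiting their construction and recasting their analyses so that the cost is bounded by $2\sum_i D(I_i)$ rather than by other quantities native to the wavelength assignment setting; the paper defers this verification to Appendix~\ref{app:intjobs}, and I would follow the same route, verifying that each machine opened by the algorithm can be charged to a contribution of at most two in the demand profile summation.

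Finally, for tightness, I would exhibit a family of interval job instances, parameterized by $g$, where the lower bound $\sum_{I_i \in \textscr{I}} D(I_i)$ equals the optimum but the algorithm (and in fact any algorithm that charges only the demand profile) produces a schedule whose cost tends to twice this quantity as $g\to\infty$. A natural candidate is the concatenation of many short unit-length cliques of size $g$ with a small number of long interval jobs that force wasteful bundling, chosen so that pairs of bundles together cover each time unit with cost approaching $2$ per unit of demand-profile mass. Verifying that this example indeed matches the $2$-approximation bound of both algorithms concludes the proof.
\qed
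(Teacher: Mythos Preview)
Your reduction for the $2$-approximation upper bound is essentially the paper's own argument: pad each interesting interval with dummy jobs so that raw demand is a multiple of $g$ (leaving the demand profile unchanged), treat the $O(n)$ interesting intervals as the ``slots'' so that the Alicherry--Bhatia and Kumar--Rudra algorithms run in polynomial time, and invoke their analyses to bound the output by twice the demand-profile cost, which by Observation~\ref{obs:dp} is at most $2\cdot OPT(\textscr{J})$. Your handling of the dummies (removing them from the final schedule can only help) is also correct and matches the paper's reasoning.

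Where you diverge is the tightness half. You sketch a family parameterized by $g\to\infty$ with ``many short unit-length cliques of size $g$'' and long jobs forcing wasteful bundling, but this is not pinned down, and in particular you have not argued that the \emph{algorithms themselves} (as opposed to the demand-profile bound) can actually produce a schedule of cost approaching $2\cdot OPT$ on your instances. The paper instead exhibits a very small concrete example with $g=2$: two unit-length interval jobs overlapping in a segment of length $\epsilon$, together with three tiny jobs of lengths $\epsilon$, $\epsilon'$, and $\epsilon-\epsilon'$ arranged so that raw demand is everywhere a multiple of $g$. There is a valid run of both algorithms that puts the two unit jobs in separate bundles, for total cost $2+\epsilon$, whereas $OPT=1+\epsilon$; letting $\epsilon\to 0$ gives the factor $2$. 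Note the paper takes $\epsilon\to 0$ with $g$ fixed, not $g\to\infty$. To complete your proof you would need either to supply such a concrete instance and verify the algorithms' bad behavior on it, or to adopt the paper's example.
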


\subsection{A $3$-approximation for busy time 
with non-interval non-preemptive jobs}

The busy time problem for flexible jobs
was studied by Khandekar et al.~\cite{FSTTCS10}\footnote{In
their paper, the problem is called real-time scheduling.}.
They gave a $5$-approximation for this problem 
when the interval jobs can have arbitrary widths. 
For the unit width interval job case, their 
analysis can be modified to give a $4$-approximation. 
As a first step towards proving the $5$-approximation for 
flexible jobs of non-unit width, 
Khandekar et al.~\cite{FSTTCS10} prove 
that if $g$ is unbounded, then the
problem is polynomial-time solvable. 
The output of their dynamic program converts 
an instance of jobs with flexible windows 
to an instance of interval jobs,
by fixing the start and end times of every job. 

\begin{theorem}
\label{thm:khandekar}		
\cite{FSTTCS10} If $g$ is unbounded, 
the busy time scheduling problem is 
polynomial-time solvable. 
\end{theorem}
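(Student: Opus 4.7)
The plan is to exhibit a canonical form for optimal solutions that lives in a polynomial search space, and then solve via dynamic programming. With $g$ unbounded, a feasible schedule is specified by a start time $s_j \in [r_j, d_j - p_j]$ for each job $j$, and (placing everything on a single machine is allowed) the busy time equals the Lebesgue measure $\mu\bigl(\bigcup_j [s_j, s_j + p_j)\bigr)$. This union decomposes into maximal contiguous busy periods $B_1 < B_2 < \cdots < B_k$, and each job $j$ is assigned to a unique $B_i = [a_i, b_i)$ satisfying $\max(r_j, a_i) + p_j \le \min(d_j, b_i)$.

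First, I would prove a canonical-form lemma: there is an optimal solution in which every busy-period endpoint lies in a set $T$ of size $O(n)$, e.g., $T \subseteq \{r_j, d_j, r_j + p_j, d_j - p_j\}_j$. The argument is a local perturbation: if the left endpoint $a_i$ is not in $T$, then no feasibility constraint of an interior job is tight at $a_i$, so we can slide $a_i$ rightward until either (i) it hits some $r_j$ or some $d_j - p_j$ for an interior job, or (ii) the period merges with $B_{i-1}$, which can only decrease the total busy time. A symmetric argument applies to $b_i$. This yields an $O(n^2)$-size family of candidate busy periods.

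Second, I would design a DP over these canonical times. Sort the distinct values in $T$ as $\tau_0 < \tau_1 < \cdots < \tau_m$ with $m = O(n)$. Define $F(\tau)$ as the minimum total busy time of a partial schedule that uses only busy periods ending at or before $\tau$ and has scheduled every job with $d_j \le \tau$. The recurrence is
\[
F(\tau) \;=\; \min_{\sigma \le \alpha \le \tau}\bigl[F(\sigma) + (\tau - \alpha)\bigr],
\]
where $\sigma,\alpha \in T$, subject to the feasibility condition that every job $j$ with $\sigma < d_j \le \tau$ satisfies $r_j + p_j \le \tau$ and $\alpha + p_j \le d_j$, so that $j$ fits inside the new busy period $[\alpha, \tau)$. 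The final answer is $F(\tau_m)$. The table has $O(n)$ entries, each computable in $O(n^2)$ time, giving an $O(n^3)$-time algorithm; the corresponding start times can be recovered by standard DP backtracking, converting the flexible instance into one of interval jobs as described in the excerpt.

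The main obstacle is the canonical-form lemma: one must verify carefully that every sliding perturbation preserves feasibility for the interior jobs, and in particular that stopping criteria (i) and (ii) above are exhaustive. A secondary subtlety is verifying that partitioning jobs across the DP sub-problems by the threshold $d_j \le \tau$ does not exclude any optimal structure; this is exactly what the canonical-form lemma buys, because once every busy-period endpoint is in $T$, the sub-problem before $\sigma$ uses no busy period crossing $\sigma$, so no job with $d_j \le \sigma$ needs to be processed past $\sigma$.
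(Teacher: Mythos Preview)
The paper does not prove this theorem; it is quoted from Khandekar et al.\ \cite{FSTTCS10} and used as a black box (``their dynamic program''), so there is no in-paper argument to compare against. Your overall plan---restrict busy-period endpoints to a finite candidate set, then run a DP---matches the cited approach in spirit, but both steps of your proposal have genuine gaps.

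The more serious gap is in the DP: the deadline-based partition is not without loss. Your recurrence forces every job $j$ with $\sigma<d_j\le\tau$ into the last busy period $[\alpha,\tau)$, and your ``secondary subtlety'' paragraph verifies only that jobs with $d_j\le\sigma$ need nothing past $\sigma$, not the converse. But a job with a late deadline and an early release may have to sit in an \emph{earlier} busy period. Take
\[
X:(r,d,p)=(0,100,3),\qquad Y:(0,2,2),\qquad Z:(99,100,1).
\]
The optimum opens $[0,3)$ (holding $X$ and $Y$) and $[99,100)$ (holding $Z$), for busy time $4$, and all four endpoints already lie in your set $T$. In your DP the answer is $F(100)$; for every $\sigma\ge 2$ both $X$ and $Z$ (deadlines $100$) are pushed into the last period $[\alpha,100)$, and the constraint $\alpha+p_X\le d_X$ forces $\alpha\le 97$, so that period has length at least $3$. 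Combined with $F(\sigma)\ge 2$ (needed to cover $Y$) you obtain $F(100)=5>4$. The rule ``assign $j$ to the busy period whose right endpoint first reaches $d_j$'' is therefore unsound; a correct DP must allow a late-deadline job to be absorbed by an earlier busy period, which typically requires a richer state (e.g.\ tracking both endpoints of the current busy period) rather than the single threshold $\tau$.

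There is also a smaller gap in the canonical-form lemma: your $O(n)$ set $T$ is too small. With $j=(0,100,5)$ and $j'=(10,12,2)$ the unique optimal busy period is $[7,12)$. Sliding its left endpoint rightward is blocked not by any $r_j$ or $d_j-p_j$, but by the constraint $a+p_j\le b$ for the job with $d_j>b$, giving the stopping point $a=b-p_j=d_{j'}-p_j=7$, a cross term absent from your $T$. The candidate set must include quantities of the form $d_i-p_j$ and $r_i+p_j$ for $i\neq j$, so it is $O(n^2)$ rather than $O(n)$; this keeps the algorithm polynomial but your stopping criteria (i)--(ii) are not exhaustive as written.
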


From Theorem~\ref{thm:khandekar}, the busy time of 
the output of the dynamic program 
on the set of (not necessarily interval) 
jobs $\textscr{J'}$ is equal to $OPT_\infty(\textscr{J'})$. 

Once Khandekar et al.~\cite{FSTTCS10} obtain 
the modified interval instance, they apply their 
$5$-approximation to interval jobs of arbitrary
widths to get the final bound. 
However, for jobs having unit width, their algorithm 
and analysis can be modified to apply the 
$4$-approximation algorithm of Flammini 
et al.~\cite{IPDPS09} for interval 
jobs with bounded $g$ to get a final bound of four. 
Moreover, extending the algorithms of Alicherry and 
Bhatia~\cite{ESA03} 
and Kumar and Rudra~\cite{FSTTCS05} to the general busy time problem, 
by converting an instance of flexible jobs 
to an interval job instance 
(similar to Khandekar et al.~\cite{FSTTCS10}) 
also gives a $4$-approximation\footnote{The bound of four for these 
algorithms is tight, as shown in the Appendix~\ref{app:flexible}.}. 

We give a $3$-approximation for the busy time problem,
improving the existing $4$-approximation. 
Analogous to Khandekar et al.~\cite{FSTTCS10}, 
we first convert the instance $\textscr{J}'$ to an instance 
$\textscr{J}$ of interval jobs by running 
the dynamic program on $\textscr{J}'$ and then
fixing the job windows according to the start times
found by the dynamic program.  
Then we run the \textsc{GreedyTracking} algorithm 
described below on $\scr{J}$.
For the rest of this section, we work with the instance
$\scr{J}$ of interval jobs.
%
Before describing the algorithm, consider the notion of a 
\emph{track} of jobs. 

\begin{definition}
A \emph{track} is a set of interval jobs with 
pair-wise disjoint windows. 
\end{definition}

Given a feasible solution, one can think of each
bundle $\textscr{B}$ as the union of 
$g$ individual tracks of jobs. 
The main idea behind the algorithm is to 
identify such tracks iteratively, bundling the first $g$ 
tracks into a single bundle, the second $g$ 
tracks into the second bundle, etc.
\textsc{FirstFit} \cite{IPDPS09} suffers from the fact that 
it greedily considers jobs one-by-one; 
\textsc{GreedyTracking} is less myopic in that 
it identifies jobs whole tracks at a time. 

In the $i^{th}$ iteration, 
\textsc{GreedyTracking} identifies a track 
$\textscr{T}_i \subseteq \textscr{J} \setminus \bigcup_{k=1}^{i-1} 
\textscr{T}_k$ 
of maximum length $\ell(\textscr{T}_i)$ and assigns it 
to bundle $B_p$, where $p= \left\lceil\frac{i}{g}\right\rceil$. 
One can find such a track efficiently 
by considering the job lengths as their weights and finding the 
maximum weight set of interval jobs with disjoint windows via
weighted interval scheduling algorithms~\cite{CLRS}.
Denoting by $\kappa$ the final number of bundles,
\textsc{GreedyTracking}'s total busy time is 
$\sum_{i=1}^\kappa Sp(\textscr{B}_i)$.
The pseudocode for \textsc{GreedyTracking} is provided in 
Algorithm~\ref{alg:Tracking}.

\begin{algorithm}
\begin{algorithmic}[1]
\caption{\textsc{GreedyTracking}. Inputs: $\textscr{J}$, $g$.}
\label{alg:Tracking}
\STATE $\textscr{S} \leftarrow \textscr{J}$, $i\leftarrow 1$.
\WHILE{$\textscr{S} \neq \emptyset$}
	 \STATE Compute the longest track $\textscr{T}_i$ from $\textscr{S}$ and assign it to bundle $B_{\lceil\frac{i}{g}\rceil}$. 
	 \STATE $\textscr{S} \leftarrow \textscr{S}\setminus \textscr{T}_i$, $i\leftarrow i+1$.
\ENDWHILE
\STATE Return {bundles 
$\{\textscr{B}_p\}^{\lceil\frac{i-1}{g}\rceil}_{p=1}$}
\end{algorithmic}
\end{algorithm}

\begin{theorem}
\textsc{GreedyTracking} is 3-approximate.
\end{theorem}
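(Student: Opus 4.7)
The plan is to decompose the total busy time $\sum_{i=1}^{\kappa} Sp(\scr{B}_i)$ into the contribution of the first bundle and that of the remaining bundles, and to bound each separately against $OPT(\scr{J}')$. For the first bundle, since $\scr{B}_1 \subseteq \scr{J}$, span monotonicity gives $Sp(\scr{B}_1) \le Sp(\scr{J})$. Because $\scr{J}$ is the interval-job instance produced by the unbounded-$g$ dynamic program on $\scr{J}'$, Theorem~\ref{thm:khandekar} implies $Sp(\scr{J}) = OPT_\infty(\scr{J}')$, and Observation~\ref{obs:spinfty} further yields $OPT_\infty(\scr{J}') \le OPT(\scr{J}')$. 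Thus $Sp(\scr{B}_1) \le OPT(\scr{J}')$ without any further work.

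For the remaining bundles, I would establish the key structural lemma: at every iteration $i$, the track $\scr{T}_i$ chosen by \textsc{GreedyTracking} satisfies $Sp(\scr{J}'_i) \le 2\, Sp(\scr{T}_i)$, where $\scr{J}'_i = \scr{J} \setminus \bigcup_{k<i} \scr{T}_k$. The plan is to exhibit two disjoint tracks $\scr{T}^*_1, \scr{T}^*_2 \subseteq \scr{J}'_i$ whose union covers the whole span $Sp(\scr{J}'_i)$; then $Sp(\scr{J}'_i) \le Sp(\scr{T}^*_1) + Sp(\scr{T}^*_2) \le 2\, Sp(\scr{T}_i)$ by maximality of $\scr{T}_i$. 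To produce such a pair, pick two disjoint tracks maximizing the union of their spans and suppose for contradiction an interval $I = [t_I, t'_I) \subseteq Sp(\scr{J}'_i)$ is uncovered. Introducing the \emph{minimum replaceable set} $MRS(j,\scr{T})$ (the minimum-span subset of $\scr{T}$ that must be removed so $j$ can be inserted), I would split into three cases according to how a job $j$ hitting $I$ relates to its endpoints: $j$ crosses $t_I$ only, $j$ crosses $t'_I$ only, or $j$ properly contains $I$. In each case, by identifying the extremal (earliest-release or latest-deadline) job in $MRS(j,\scr{T}^*_1)\cup MRS(j,\scr{T}^*_2)$ and swapping $j$ into the track whose $MRS$ does not contain that extremal job, the union of spans strictly increases, contradicting maximality.

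Given the span lemma, $Sp(\scr{B}_i) \le 2\, Sp(\scr{T}^1_i) = 2\, \ell(\scr{T}^1_i)$ where $\scr{T}^1_i$ is the first track placed into $\scr{B}_i$; the equality uses that tracks have pairwise-disjoint job windows. Because $\scr{B}_{i-1}$ was closed with exactly $g$ tracks, each of length at least $\ell(\scr{T}^1_i)$ (since \textsc{GreedyTracking} picks tracks in non-increasing order of length), we get $\ell(\scr{B}_{i-1}) \ge g\,\ell(\scr{T}^1_i)$ and hence $Sp(\scr{B}_i) \le 2\,\ell(\scr{B}_{i-1})/g$. Summing for $i \ge 2$ telescopes to $\sum_{i \ge 2} Sp(\scr{B}_i) \le 2\,\ell(\scr{J})/g$. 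The dynamic program preserves the individual job lengths, so $\ell(\scr{J}) = \ell(\scr{J}')$, and by Observation~\ref{obs0} this quantity is at most $2\,OPT(\scr{J}')$. Combined with $Sp(\scr{B}_1) \le OPT(\scr{J}')$, this yields $\sum_i Sp(\scr{B}_i) \le 3\,OPT(\scr{J}')$.

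The main obstacle is the span lemma, and inside it, the case analysis is delicate because the displaced $MRS$ jobs from one track may need to be re-accommodated, and because the extremal job could itself sit in either track. The right pairing always exists: whichever of $\scr{T}^*_1$ or $\scr{T}^*_2$ contains the extremal element of $MRS(j,\scr{T}^*_1) \cup MRS(j,\scr{T}^*_2)$ keeps that element, while the other track absorbs $j$ (for the third case, one may additionally need to migrate the latest-deadline extremal job across tracks before inserting $j$). Checking that the swap produces a net gain in every subcase is the core calculation; once it is in hand, the rest of the proof follows the lemma chain outlined above.
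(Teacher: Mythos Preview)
Your argument is correct and follows the same high-level decomposition as the paper: bound $Sp(\scr{B}_1)$ by $OPT_\infty(\scr{J}')$ and bound $\sum_{i>1} Sp(\scr{B}_i)$ by $2\ell(\scr{J}')/g$ via the chain $Sp(\scr{B}_i)\le 2\ell(\scr{T}^1_i)\le \tfrac{2}{g}\ell(\scr{B}_{i-1})$. The difference is in how you establish $Sp(\scr{B}_i)\le 2\ell(\scr{T}^1_i)$. You prove a global ``span lemma'' $Sp(\scr{J}'_i)\le 2\,Sp(\scr{T}_i)$ by exhibiting two disjoint tracks whose union covers $Sp(\scr{J}'_i)$, using the minimum-replaceable-set machinery and a three-way case analysis; this is correct but, as you note, delicate. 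The paper instead works locally inside $\scr{B}_i$: it extracts a subset $\scr{Q}_i\subseteq\scr{B}_i$ with $Sp(\scr{Q}_i)=Sp(\scr{B}_i)$ and at most two jobs live at any time, by first discarding contained intervals (yielding a proper instance) and then greedily sweeping left to right. Since $\scr{Q}_i$ is then two-colorable into tracks, each of length at most $\ell(\scr{T}^1_i)$ by greediness, one gets $Sp(\scr{B}_i)=Sp(\scr{Q}_i)\le\ell(\scr{Q}_i)\le 2\ell(\scr{T}^1_i)$ directly. The paper's construction sidesteps your MRS swap arguments entirely and is shorter; your approach, on the other hand, proves a slightly stronger intermediate statement (the longest track captures half the span of all remaining jobs, not just of the current bundle).
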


\begin{proof}
%
By Observation~\ref{obs:spinfty}, 
$Sp(\scr{B}_1) \le OPT_\infty(\scr{J}') \le OPT(\scr{J}')$. 
Therefore, it suffices to show that
$\sum_{i>1} Sp(\scr{B}_i) \le \frac{2}{g} 
\ell(\scr{J}') \le 2OPT(\scr{J}')$.  
We will achieve this by charging the span of bundle $\scr{B}_i$
to the mass $\ell(\scr{B}_{i-1})$, for $i>1$.  
In particular, if we could identify
a subset $\scr{Q}_i$ of jobs in $\scr{B}_i$ 
with span $Sp(\scr{Q}_i) = Sp(\scr{B}_i)$ and
with the additional property that at most two jobs of $\scr{Q}_i$ are
live at any point in time, then

\[
	Sp(\scr{B}_i) = Sp(\scr{Q}_i) \le \ell(\scr{Q}_i) 
	\le 2\ell(\scr{T}^\star) \le \frac{2}{g} \ell(\scr{B}_{i-1})
\]
\noindent
where $\scr{T}^\star$ is the first track of bundle $\scr{B}_i$.
The right-most inequality follows by the greedy nature
of \textsc{GreedyTracking}, as does the inequality preceding
it: $\ell(\scr{T}^\star)$ is at least that of any other track in
$\scr{B}_i$ and at most the average over tracks in $\scr{B}_{i-1}$.

To find $\scr{Q}_i$, start with jobs of $\scr{B}_i$
and remove any job $J_j$ 
whose window is a subset of another job $J_i$'s window, i.e.
such that $[r_j,d_j) \subseteq [r_i, d_i)$.  This can be
done in polynomial time.  The subset $\scr{Q}'_i$ of remaining jobs 
has the property that for any two jobs $J_j$ and $J_i$ in $\scr{Q}'_i$,
if $r_j < r_i$, then $d_j \le d_i$.  As in the literature,
instances with this structure are 
called ``proper'' instances~\cite{IPDPS09}.
Sort jobs of $\scr{Q}'_i$ in non-decreasing order by release time.
Iteratively add to subset $\scr{Q}_i$ from these jobs, breaking ties 
in favor of jobs later in the ordering.  Initially, $\scr{Q}_i$ is empty.
Repeat the following until $\scr{Q}'_i$ is empty:
let $d_\mathbf{max}$ be the current maximum deadline of $\scr{Q}_i$, 
or 0 if $\scr{Q}_i$ is empty.  Consider the jobs 
in $\scr{Q}'_i$ that are live at $d_\mathbf{max}$.
Remove from $\scr{Q}'_i$ all but the ``last'' one (i.e., the one with
latest deadline); move this ``last'' one from $\scr{Q}'_i$ to $\scr{Q}_i$.
When this process terminates, $\scr{Q}_i$ 
will have the two properties we want.
Suppose that three jobs $J_1, J_2, J_3$
of $\scr{Q}_i$ were live at $t$ with $r_1 \le r_2 \le r_3$.
Then at the time the process added $J_1$ to $\scr{Q}_i$, 
it considered $J_3$ as a possible ``last'' job, and
$J_2$ could never have been added to $\scr{Q}_i$.
So, no more than two jobs of $\scr{Q}_i$ are live at any point in time.
Also, by construction, $Sp(\scr{Q}_i) = Sp(\scr{B}_i)$.
\qed
\end{proof}

Figure~\ref{fig:factor3} shows that the approximation factor of $3$ achieved by
\textsc{GreedyTracking} is tight. 
In the instance shown, a gadget of $2g$ interval jobs is repeated $g$ times. In this gadget, 
there are $g$ identical unit length interval jobs which overlap for $\epsilon$ amount with 
another $g$ identical unit length interval jobs. The $g$ gadgets are disjoint 
from one another, which means, there is no overlap among the jobs of any two gadgets. 
There are $2g$ flexible jobs, whose windows span the windows of all the $g$ gadgets. 
These jobs are of length $1-\frac{\epsilon}{2}$. An optimal packing would pack each set of $g$ identical 
jobs of each gadget in one bundle, and the flexible jobs in 2 bundles, giving a total busy time 
of $2g+2-\epsilon$. However, the dynamic program minimizing the span does not take capacity 
into consideration, hence in a possible output, the flexible jobs may be packed $2$ 
each with each of the $g$ gadgets, in a manner such that they intersect with all of the jobs of the gadget. 
Hence, the flexible jobs cannot be considered in the same track as any unit interval job 
in the gadget it is packed with. Due to the greedy nature of \textsc{GreedyTracking}, 
the tracks selected would not consider the flexible jobs in the beginning, and the interval jobs may also 
get split up as in Figure \ref{fig:factor3-1}, giving a total busy time of 
$4(1-\epsilon)g + (2 - o(\epsilon))g = (6-o(\epsilon))g$, hence it approaches a factor 
$3$ asymptotically.

\begin{figure}[htbp]
	\centering
		\includegraphics[width=0.75\textwidth]{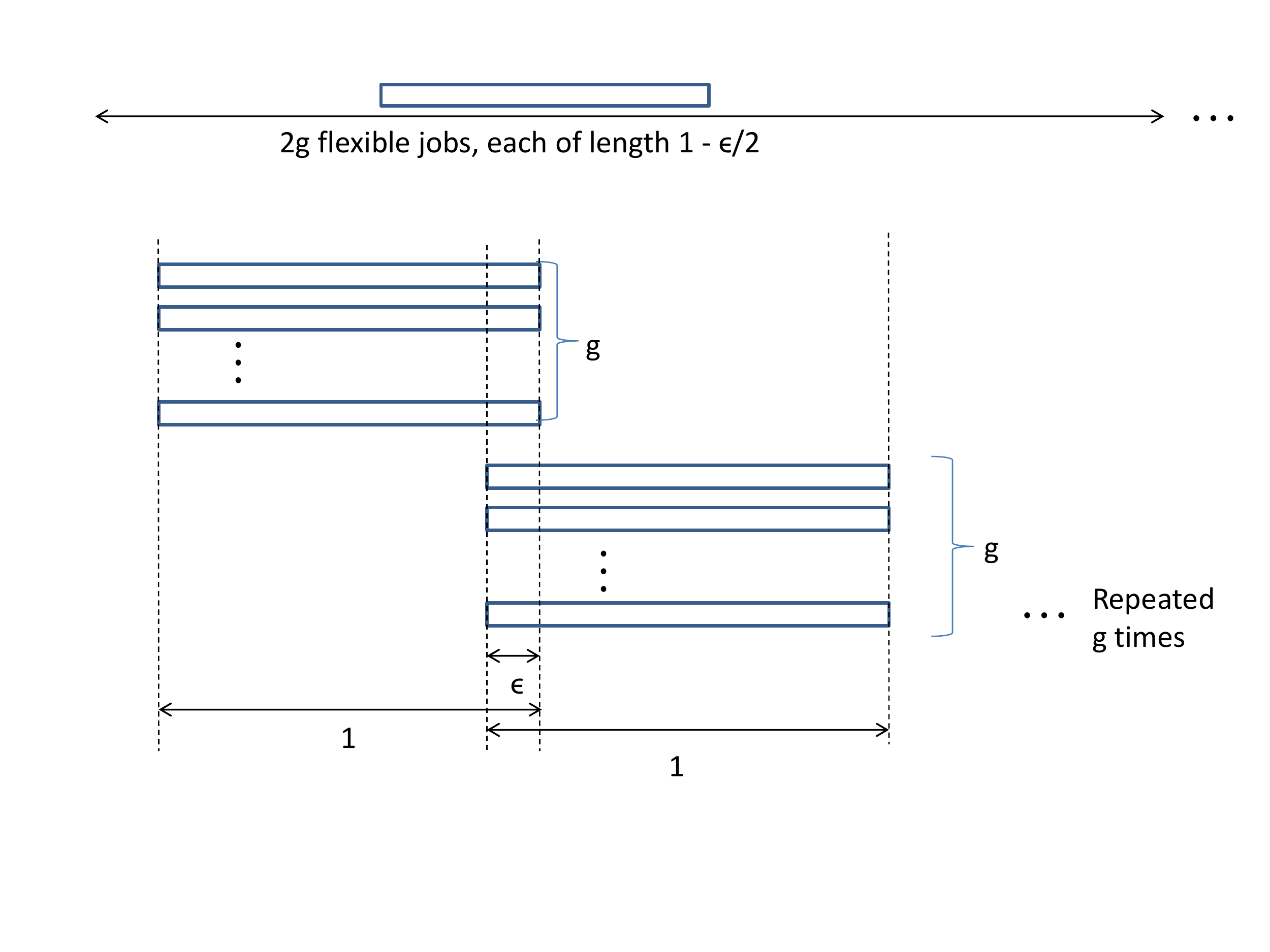}
	\caption{Gadget for factor 3 for \textsc{GreedyTracking}}
	\label{fig:factor3}
\end{figure}

\begin{figure}
	\centering
		\includegraphics[width=0.75\textwidth]{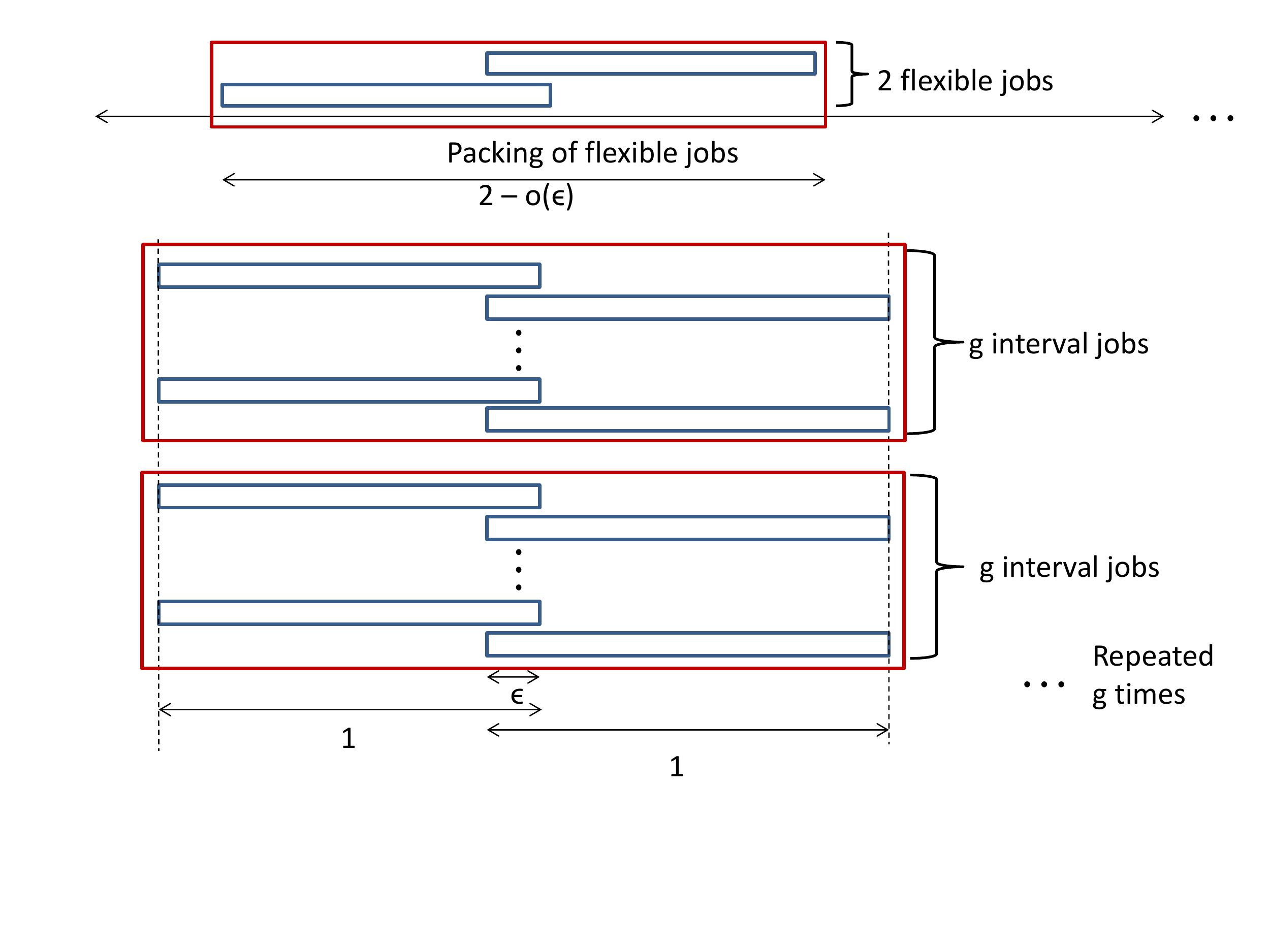}
	\caption{Possible packing by \textsc{GreedyTracking}}
	\label{fig:factor3-1}
\end{figure}


\subsection{Preemptive busy time}
In this section, we remove the restriction, a job needs 
to be assigned to a single machine. 
A job $j$ needs to be assigned a total of $p_j$ time units 
within the interval 
$[r_j, d_j)$ and at most one machine may 
be working on it at any given time.

\begin{theorem}
For unbounded $g$ and preemptive jobs, 
there is an exact algorithm to minimize busy time.
\end{theorem}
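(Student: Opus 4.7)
The structural observation driving the algorithm is that when $g=\infty$ and preemption is allowed we may assume a single machine suffices: there is no parallelism constraint, and preemption permits each job's $p_j$ units of processing to be spread arbitrarily within its window $[r_j,d_j)$. Consequently the busy time of any optimal schedule equals the measure $|S|$ of the set $S$ of times when the machine is on, and the minimum busy time is exactly
\[
\min\bigl\{\,|S|:\ S\subseteq\mathbb{R},\ |S\cap[r_j,d_j)|\ge p_j\ \text{for all } j\in\mathcal{J}\,\bigr\}.
\]
The problem thus reduces to a one-dimensional covering problem on the real line. Given such an $S$, a concrete per-job assignment can be recovered by a max-flow computation on the bipartite graph of Section~2, so it suffices to exhibit a polynomial algorithm producing an optimal $S$.

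\textbf{Algorithm.} Sort the jobs so that $d_1\le d_2\le\cdots\le d_n$ and initialize $S=\emptyset$. For $j=1,\dots,n$, compute the deficit $a_j=\max\{0,\,p_j-|S\cap[r_j,d_j)|\}$; if $a_j>0$, add to $S$ the rightmost mass-$a_j$ portion of $[r_j,d_j)\setminus S$. The set $\{r_j,d_j\}_j$ partitions the relevant timeline into at most $2n-1$ elementary sub-intervals, and within any such sub-interval all points are equivalent with respect to every job's window, so the ``rightmost portion'' is unambiguous and the whole procedure runs in polynomial time.

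\textbf{Optimality by exchange.} I would prove that greedy's output is optimal by induction on $n$. Let $j^\star$ be the first job in deadline order, and let $L\subseteq[r_{j^\star},d_{j^\star})$ be the rightmost mass-$p_{j^\star}$ portion of $[r_{j^\star},d_{j^\star})$ (this is what greedy initially places in $S$). Starting from any optimal $S^\star$, I show by iterated swaps that there is an optimal $S^{\star\star}\supseteq L$ of the same mass. Pick $t\in L\setminus S^\star$; since $|S^\star\cap[r_{j^\star},d_{j^\star})|\ge p_{j^\star}=|L|$, there exists $t'\in(S^\star\cap[r_{j^\star},d_{j^\star}))\setminus L$, and $t'<t$ because $L$ is the latest portion of the window. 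Replace $t'$ by $t$: for every other job $k$ we have $d_k\ge d_{j^\star}>t>t'$, so whenever $t'\in[r_k,d_k)$ we also have $t\in[r_k,d_k)$ (the swap preserves $k$'s coverage), and whenever $t'\notin[r_k,d_k)$ (i.e., $t'<r_k$) the swap only adds coverage of $k$ or leaves it unchanged. Iterating makes $S^{\star\star}\supseteq L$ while preserving feasibility and mass. With $j^\star$ now satisfied by $L$, the residual instance on $\{j\ne j^\star\}$ (with $L$ already open) is of the same form and is handled by greedy's subsequent iterations; the inductive hypothesis applies.

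\textbf{Main obstacle.} The only real subtlety is making the exchange rigorous in continuous time: a single ``point swap'' has zero mass. The clean fix is to carry out all swaps at the resolution of the $O(n)$ release/deadline breakpoints: within one elementary sub-interval all points are indistinguishable by every job's window, so transferring mass between two sub-intervals (one in $L$, one not) is a legitimate, measure-preserving exchange for which the case analysis above goes through verbatim. The algorithm itself can be implemented directly on this discretization.
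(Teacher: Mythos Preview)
Your proposal is correct and takes essentially the same approach as the paper: reduce to choosing a minimum-measure ``on'' set $S$ on one machine, greedily process jobs in earliest-deadline order and open the rightmost needed portion, and justify optimality by an exchange argument showing some optimal solution contains the greedy's first block. Your write-up is in fact more careful than the paper's one-paragraph proof---you make the set-covering reduction explicit, spell out the swap (including the case $t'<r_k$), and address the continuous-time issue via the $O(n)$ breakpoint discretization---but the algorithm and the exchange idea are the same.
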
	

\begin{proof}
The algorithm is a simple greedy one. 
Let $\mathcal{J}_1$ be the set of jobs 
of earliest deadline $d_1$ and let the longest 
job $j_{\max,1}$ in $\textscr{J}_1$ have length $\ell_{\max,1}$. We open 
the interval $[d_1-\ell_{\max,1}, d_1)$, and for every the job $j \in \textscr{J}$ 
such that $[r_j, d_j)\cap[d_1-\ell_{\max,1}, d_1) \neq \emptyset$, we schedule 
it up to $d_1 - r_j$ in the interval $[r_j, d_1)$. Then we shrink the interval 
$[d_1-\ell_{\max,1}, d_1)$ and adjust the windows and 
remaining processing lengths of the jobs in $\textscr{J}$ 
and then repeat till all jobs in $\textscr{J}$ have been completely scheduled. 

In the first iteration, without loss of generality, 
the optimal solution will also 
open the interval $[d_1-\ell_{\max,1}, d_1)$; 
$j_{\max,1}$ has to be scheduled completely $d_1$ and 
since $d_1$ 
is the earliest deadline, opening this length of interval 
as late as possible ensures that we can schedule the maximum length of 
any job in the instance $\textscr{J}$ with $j_{\max,1}$. 
The correctness follows by induction on the remaining iterations. 
\qed
\end{proof}

As a consequence, one can approximate
preemptive busy time scheduling for bounded $g$.  
First, solve the instance under the assumption 
that $g$ is unbounded; denote by $\textscr{S}_\infty$
this (possibly infeasible) solution.
The busy time of $\textscr{S}_\infty$ is $OPT_\infty(\textscr{J})$,
and is a lower bound on the optimal solution for bounded $g$.
The algorithm for bounded $g$ 
will commit to working on job $j$ precisely 
in the time intervals where $\textscr{S}_\infty$ had scheduled it.  
Partition the busy time of $\textscr{S}_\infty$
into the set of interesting intervals $\{I_1, \ldots, I_k\}$, 
where $k = \theta(n)$. 
 
For every interesting interval $I_i$, assign the jobs 
scheduled in $I_i$ to 
$\lceil \frac{n(I_i)}{g} \rceil$ machines in arbitrary order, 
filling the machines  
greedily such that there is at most one machine 
with strictly less than $g$ jobs. 

For each $I_i$, at most one machine contains less than $g$ jobs, which 
we charge to $OPT_\infty(\textscr{J})$ 
All other machines are at capacity, i.e., 
have exactly $g$ jobs and hence we charge them to $\frac{\ell(\textscr{J})}{g}$. 
This implies an approximation of $2$. 

\begin{theorem}
There is a preemptive algorithm
whose busy time is at most twice that
of the optimal preemptive solution, for bounded $g$.
\end{theorem}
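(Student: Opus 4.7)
The plan is to make rigorous the two-paragraph sketch already in the excerpt and to pin down exactly where the factor of two comes from. The algorithm first runs the exact unbounded-$g$ algorithm from the previous theorem to obtain a schedule $\textscr{S}_\infty$ of cost $OPT_\infty(\textscr{J})$. We then freeze the set of time instants during which each job is processed: job $j$ will be processed by the bounded-$g$ algorithm in exactly those time intervals used by $\textscr{S}_\infty$. The only remaining decision is \emph{which} machine processes it at each such time. To make this decision, partition the busy time of $\textscr{S}_\infty$ into its interesting intervals $I_1,\ldots,I_k$ (slicing at every release time, deadline, and every time instant at which $\textscr{S}_\infty$ starts or stops processing some job), so that on each $I_i$ the set of jobs being processed is constant; call this set size $n(I_i)$.

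Inside each interesting interval $I_i$, assign the $n(I_i)$ jobs greedily to $\lceil n(I_i)/g \rceil$ machines, filling machines to capacity $g$ and leaving at most one machine with fewer than $g$ jobs. Because this assignment is made independently for each interesting interval, a single job may be handed from machine to machine at the boundaries between intervals; this is legal in the preemptive model because at every instant at most one machine is working on the job. The schedule is clearly feasible: during $I_i$, every machine handles at most $g$ jobs, and each job $j$ receives a total of $p_j$ units of processing because it receives exactly what $\textscr{S}_\infty$ gave it.

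Now I bound the busy time. Since each machine used during $I_i$ is busy for exactly $\ell(I_i)$ on that interval, the total busy time is
\[
\sum_{i=1}^{k} \ell(I_i)\,\Bigl\lceil \frac{n(I_i)}{g}\Bigr\rceil
\;\le\; \sum_{i=1}^{k} \ell(I_i)\,\frac{n(I_i)}{g} \;+\; \sum_{i=1}^{k} \ell(I_i).
\]
The first sum equals $\ell(\textscr{J})/g$ because $\sum_i \ell(I_i)\,n(I_i) = \sum_j p_j = \ell(\textscr{J})$; by Observation~\ref{obs0} this is at most $OPT(\textscr{J})$. The second sum equals $OPT_\infty(\textscr{J})$ by construction of the interesting intervals, and by Observation~\ref{obs:spinfty} it is also at most $OPT(\textscr{J})$. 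Adding, the busy time is at most $2\,OPT(\textscr{J})$.

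The only delicate point, and the one I would take the most care with, is verifying that the ``at most one underfull machine per interesting interval'' guarantee really yields a clean $+\,OPT_\infty$ surplus rather than something larger. The argument above hides this inside the rounding of the ceilings, so the step I want to write out carefully is the inequality $\lceil n(I_i)/g\rceil \le n(I_i)/g + 1$ and the observation that when $g \mid n(I_i)$ no extra machine is opened at all, which keeps the charge telescoping to exactly $OPT_\infty$ across intervals rather than $k \cdot \max_i \ell(I_i)$ or some weaker bound. Everything else is routine bookkeeping, because preemption across interval boundaries lets us treat each interesting interval as an independent bin-packing-style instance.
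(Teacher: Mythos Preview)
Your proposal is correct and follows essentially the same approach as the paper: freeze $\textscr{S}_\infty$'s processing intervals, slice into interesting intervals, bin-pack each interval's jobs into $\lceil n(I_i)/g\rceil$ machines, and charge the full machines to the mass bound $\ell(\textscr{J})/g$ and the single underfull machine per interval to $OPT_\infty(\textscr{J})$. Your write-up is in fact more explicit than the paper's, which leaves the ceiling inequality and the two charging terms implicit; the ``delicate point'' you flag is handled exactly by the $\lceil x\rceil \le x+1$ step you already wrote down.
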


\bibliographystyle{plain}
\bibliography{references}

\begin{thebibliography}{10}

\bibitem{ESA03}
Mansoor Alicherry and Randeep Bhatia.
\newblock Line system design and a generalized coloring problem.
\newblock In {\em Proceedings of the 11th Annual European Symposium on
  Algorithms (ESA)}, pages 19--30, 2003.

\bibitem{Chang12}
Jessica Chang, Harold~N. Gabow, and Samir Khuller.
\newblock A model for minimizing active processor time.
\newblock {\em Algorithmica}, 70(3):368--405, 2014.

\bibitem{CLRS}
Thomas~H. Cormen, Charles~E. Leiserson, Ronald~L. Rivest, and Clifford Stein.
\newblock {\em Introduction to Algorithms}.
\newblock MIT press, 2001.

\bibitem{faigle1996randomized}
Ulrich Faigle, R.~Garbe, and Walter Kern.
\newblock Randomized online algorithms for maximizing busy time interval
  scheduling.
\newblock {\em Computing}, 56(2):95--104, 1996.

\bibitem{IPDPS09}
Michele Flammini, Gianpiero Monaco, Luca Moscardelli, Hadas Shachnai, Mordechai
  Shalom, Tami Tamir, and Shmuel Zaks.
\newblock Minimizing total busy time in parallel scheduling with application to
  optical networks.
\newblock In {\em Proceedings of the IEEE 23rd International Parallel and
  Distributed Processing Symposium}, pages 1--12, 2009.

\bibitem{EuropAr08}
Michele Flammini, Gianpiero Monaco, Luca Moscardelli, Mordechai Shalom, and
  Shmuel Zaks.
\newblock Approximating the traffic grooming problem with respect to adms and
  oadms.
\newblock In {\em Proceedings of the 14th International Euro-Par Conference on
  Parallel Processing}, pages 920--929, 2008.

\bibitem{flammini2011optimizing}
Michele Flammini, Gianpiero Monaco, Luca Moscardelli, Mordechai Shalom, and
  Shmuel Zaks.
\newblock Optimizing regenerator cost in traffic grooming.
\newblock {\em Theoretical Computer Science}, 412(52):7109--7121, 2011.

\bibitem{ISAAC05}
Michele Flammini, Luca Moscardelli, Mordechai Shalom, and Shmuel Zaks.
\newblock Approximating the traffic grooming problem.
\newblock In {\em Proceedings of the 16th International Conference on
  Algorithms and Computation (ISAAC)}, pages 915--924, 2005.

\bibitem{FSTTCS10}
Rohit Khandekar, Baruch Schieber, Hadas Shachnai, and Tami Tamir.
\newblock Minimizing busy time in multiple machine real-time scheduling.
\newblock In {\em Proceedings of the 30th Annual Conference on Foundations of
  Software Technology and Theoretical Computer Science (FSTTCS)}, pages 169 --
  180, 2010.

\bibitem{KK13}
Frederic Koehler and Samir Khuller.
\newblock Optimal batch schedules for parallel machines.
\newblock In {\em Proceedings of the 13th Algorithms and Data Structures
  Symposium (WADS)}, pages 475--486, 2013.

\bibitem{FSTTCS05}
Vijay Kumar and Atri Rudra.
\newblock Approximation algorithms for wavelength assignment.
\newblock In {\em Proceedings of the 25th Conference on Foundations of Software
  Technology and Theoretical Computer Science (FSTTCS)}, pages 152--163, 2005.

\bibitem{mertzios2012}
George~B. Mertzios, Mordechai Shalom, Ariella Voloshin, Prudence W.~H. Wong,
  and Shmuel Zaks.
\newblock Optimizing busy time on parallel machines.
\newblock In {\em Proceedings of the IEEE 26th International Parallel \&
  Distributed Processing Symposium (IPDPS)}, pages 238--248, 2012.

\bibitem{shalom2012online}
Mordechai Shalom, Ariella Voloshin, Prudence W.~H. Wong, Fencol~C.C. Yung, and
  Shmuel Zaks.
\newblock Online optimization of busy time on parallel machines.
\newblock {\em Theory and Applications of Models of Computation. Lecture Notes
  in Computer Science}, 7287:448--460, 2012.

\bibitem{WZ03}
Peter Winkler and Lisa Zhang.
\newblock Wavelength assignment and generalized interval graph coloring.
\newblock In {\em Proceedings of the 14th ACM-SIAM Symposium on Discrete
  Algorithms (SODA)}, pages 830 -- 831, 2003.

\end{thebibliography}

\appendix
\section{Appendix of full version: Interval Jobs}
\label{app:intjobs}

\subsection{Kumar and Rudra's Algorithm} 
\label{sec:kr}
Here we provide an overview of the algorithm of Kumar and Rudra 
\cite{FSTTCS05} 
for fiber minimization problem which implies a $2$-approximation for 
the busy time problem on interval jobs. 
The fiber minimization problem is as follows. An optical fiber network 
needs to 
satisfy the given set of requests, that need to be 
assigned to consecutive links or edges connected in a line. 
There are $n$ of these links. Each request needs some 
links $\{i, i+1, \ldots, j\}$, where $1\leq i<j\leq n$. 
Each segment of an optical fiber can support 
$\mu$ wavelengths over the consecutive links that it spans, 
and no two requests can be assigned the same wavelength on 
the same fiber, if they need to use the same link.  
We want a feasible assignment of the requests such that total 
length of optical fiber used is minimized. 
Notice, this is very similar to the busy time problem on interval jobs. 
Think of the requests as interval jobs. If a request needs the consecutive 
links 
$\{i, i+1, \ldots, j\}$, where $1\leq i<j\leq n$, then this can be 
equivalently 
thought of as an interval job with release time $i$ and deadline $j$, 
i.e., with a window $[i,j)$, with processing length $j-i$, 
in a discrete setting where time is slotted. The total number of links 
being $n$, the processing lengths 
of the jobs here is linear. In this case, we can think of each slot as an 
interesting 
interval (since jobs begin and end only at slots) and define the demand 
profile as the tuples $(i,D(i))$, 
where $i$ is a time slot in $\{1,\ldots, n\}$. 
Their algorithm proceeds in two phases. In the first phase they assign the 
jobs 
to levels within the demand profile (where the total number of levels 
equals  
the maximum raw demand at any point), and potentially 
allow for a limited infeasibility in this packing. Specifically, at most 
two jobs 
can be assigned to the same level anywhere within the demand profile. In 
the second phase, they give a feasible 
packing of the jobs, considering $\mu$ levels at a time, 
removing the infeasibility introduced earlier, but without 
exceeding the cost by more than a factor of $2$. This is done as follows. 
For levels $\{(i-1)\mu+1, \ldots, i\mu\}$, ($i \in \{1, \ldots, 
D_{max}\})$ where 
$D_{max}$ is the maximum height of the demand profile), 
Kumar and Rudra \cite{FSTTCS05} open two fibers, instead of one, and 
assign 
jobs to fibers, such that two jobs which were assigned to the same 
level in the demand profile, get assigned to separate fibers according to 
a 
simple parity based assignment. Their analysis assumes that  
the raw demand at every time slot $t$, $|A(t)|$ is a multiple of $\mu$ and 
charges 
to such a demand profile. It is clear that the demand profile gets charged 
at most twice, respecting the $\mu$ capacity constraint of the fibers, and 
since the demand profile is a lower bound on the cost of an optimal solution, 
this gives a $2$-approximation algorithm. 

The polynomial time complexity of the algorithm 
crucially depends on the fact that we have $n$ links, and hence the job 
lengths being linear, we need to consider only a linear number of slots. 
The above does not hold for the busy time problem for interval jobs with 
arbitrary release 
times, deadlines and processing lengths. The number of time instants to 
consider 
may not be polynomial. 
However, the key observation is that even if the release times and 
deadlines 
of jobs are not integral, there can be at most $2n$ \emph{interesting} 
intervals, 
such that no jobs begin or end within the interval. The demand profile 
is uniform over every interesting interval. Therefore, their algorithm 
can be applied to the busy time problem, with this simple modification, 
still maintaining the polynomial complexity. The assumption 
regarding multiple of $\mu$ (in the busy time case, this would be $g$) 
at every slot, would translate as a multiple 
of $g$ jobs over every interesting interval. However, note that for an 
arbitrary instance, 
we can add dummy jobs spanning any interesting interval $I_i$ where the 
raw demand 
$|A(I_i)|$ is not a multiple of $g$ without changing the demand profile. 
Specifically, if $cg< |A(I_i)|<(c+1)g$, for some $c\geq 0$, then $DeP(I_i) 
= c+1$, 
hence adding $(c+1)g - |A(I_i)|$ jobs spanning $I_i$  
does not change the demand profile. Thus we can apply their algorithm on 
the busy time 
instance, where the demand profile is defined only interesting intervals 
and 
the demand everywhere is a multiple of $g$. The assignments to the fibers 
as done by their algorithm in Phase 2, will give the bundles for the busy 
time 
problem. 

\subsection{Alicherry and Bhatia's Algorithm}
\label{sec:ab}
Now, we describe how the work of Alicherry and Bhatia \cite{ESA03}, 
implies 
another, 
elegant algorithm with a $2$-approximation for interval jobs. 
Alicherry and Bhatia study a generalized coloring and routing problem 
on interval and circular graphs, motivated by optical design systems. 
Though the problems they consider are not directly related to the busy 
time 
problem, we can use their techniques to develop the $2$-approximation 
algorithm. 
Similar to Kumar and Rudra's work, their goal is to route certain 
requests, 
which require to be assigned to consecutive links or edges in the interval 
or circular graph. At each link, we color the requests assigned to that 
link. 
The colors are partitioned into sets, which are ordered, such that colors 
in the 
higher numbered sets cost more. The total cost of the solution is the sum 
of the costs of the 
highest colors used at all the links, and the objective is to minimize 
this cost.  
Though this problem seems quite different 
from the busy time problem on interval jobs, the one of the key 
observations is that the cost 
needs to be a monotonically non-decreasing function respecting the set 
order. 
It need not be a strictly increasing function. Hence, we can think of the 
sets numbered 
in a linear order, and give each set $g$ colors. We set the number of all 
the colors 
in a set $i$ as $i$. If $c\cdot g + k$ requests use a link, the cost of 
that 
link would be the cost of the highest color used at the link, which is 
$c+1$. 
Hence, what we are really summing is the total cost of the 
demand profile defined on a set of interval jobs, which have integral 
release times, and deadlines, and linear processing lengths, since the 
number of 
links is $n$ (part of the input). Therefore, a $2$ approximation algorithm 
minimizing the cost is really providing a solution that costs at most 
twice 
the demand profile of this restricted instance. The technique used 
involves setting up 
a flow graph with a certain structure, depending on the current demands or 
requests as yet unassigned. 
It can be easily proved that the graph has a cut of size at least $2$ 
everywhere if the demand 
everywhere is at least $2$. Now, we find a flow of size two in this graph 
from the source to the sink. Each flow path will consist of a set of 
disjoint requests or demands 
(where the disjointness refers to the links they need to use), 
and the union of the two flows will reduce a demand of at least 
unity from every link. This is repeated till the demand is $0$ or $1$ 
everywhere. 

As in Section \ref{sec:kr}, we use the following observation: 
the time slots can be considered to be interesting 
intervals for a set of interval jobs. The busy time instance 
with non-polynomial job lengths and arbitrary release times and deadlines 
has a linear number of interesting intervals, and 
hence we can think of our instance in this discretized setting. Therefore, 
we can 
apply their algorithm, modified accordingly, to our problem to get a solution of cost 
within twice of the optimal solution. The algorithm will consider a busy 
time 
instance with the demand profile defined on interesting intervals and with 
a multiple of $g$ jobs everywhere without any loss of generality. 
It will first open up two bundles. The flow graph is then set up as 
defined by Alicherry and Bhatia. 
For the first $g$ 
iterations, the algorithm will find $2g$ flow paths (each consisting of 
disjoint interval jobs), the union of which removes at least 
a demand of $g$ from everywhere. We assign $g$ of these paths to one 
bundle 
and the remaining $g$ to the other. Each flow path consists of disjoint 
jobs, 
hence, each bundle 
will have at most $g$ jobs at time instant. Moreover, together, 
these bundles have removed a demand $g$ from everywhere in the demand 
profile,  
hence they have charged the lowermost level (which is also the widest 
level) of the demand 
profile at most twice. 
The demand profile is now suitably modified after removing the jobs 
already assigned. 
Once again two bundles are opened, and the same procedure is performed for 
the next $g$ 
iterations. This continues till the demand profile becomes empty 
everywhere, in other 
words, all jobs are assigned. The resultant bundles are feasible and 
charge the demand 
profile at most twice. 

\subsection{Lower bound}
\label{sec:lb}
Though the upper bound of $2$ was shown by Kumar and Rudra \cite{FSTTCS05} 
and 
Alicherry and Bhatia \cite{ESA03} 
for their algorithms, a lower bound on the performance of the algorithms 
was not provided. 
Here we show that for both these algorithms, the approximation ratio 
obtained can be arbitrarily close to $2$. 
Figure \ref{fig:example-ab-kr} shows an instance of interval jobs, for 
which both the algorithms implied by the work of Kumar and Rudra and 
Alicherry and Bhatia approach a factor of $2$ of the optimal solution. 
In this example, $g=2$ and there are two interval jobs of length $1$, one 
interval 
job of length $\epsilon$, one of length $\epsilon'<\epsilon$, and one of 
length $\epsilon - \epsilon'$. 
As required by the analysis of Kumar and Rudra and Alicherry and Bhatia, 
the demand everywhere is a multiple of $g$. 
A possible output by both algorithms (adapted to the busy time problem as 
described) 
has cost $2+\epsilon$, whereas the optimal solution has cost $1+\epsilon$. 
For $\epsilon \rightarrow 0$, the approximation factor approaches $2$. 

\begin{figure}[htbp]
	\centering
		\includegraphics[width=0.75\textwidth]{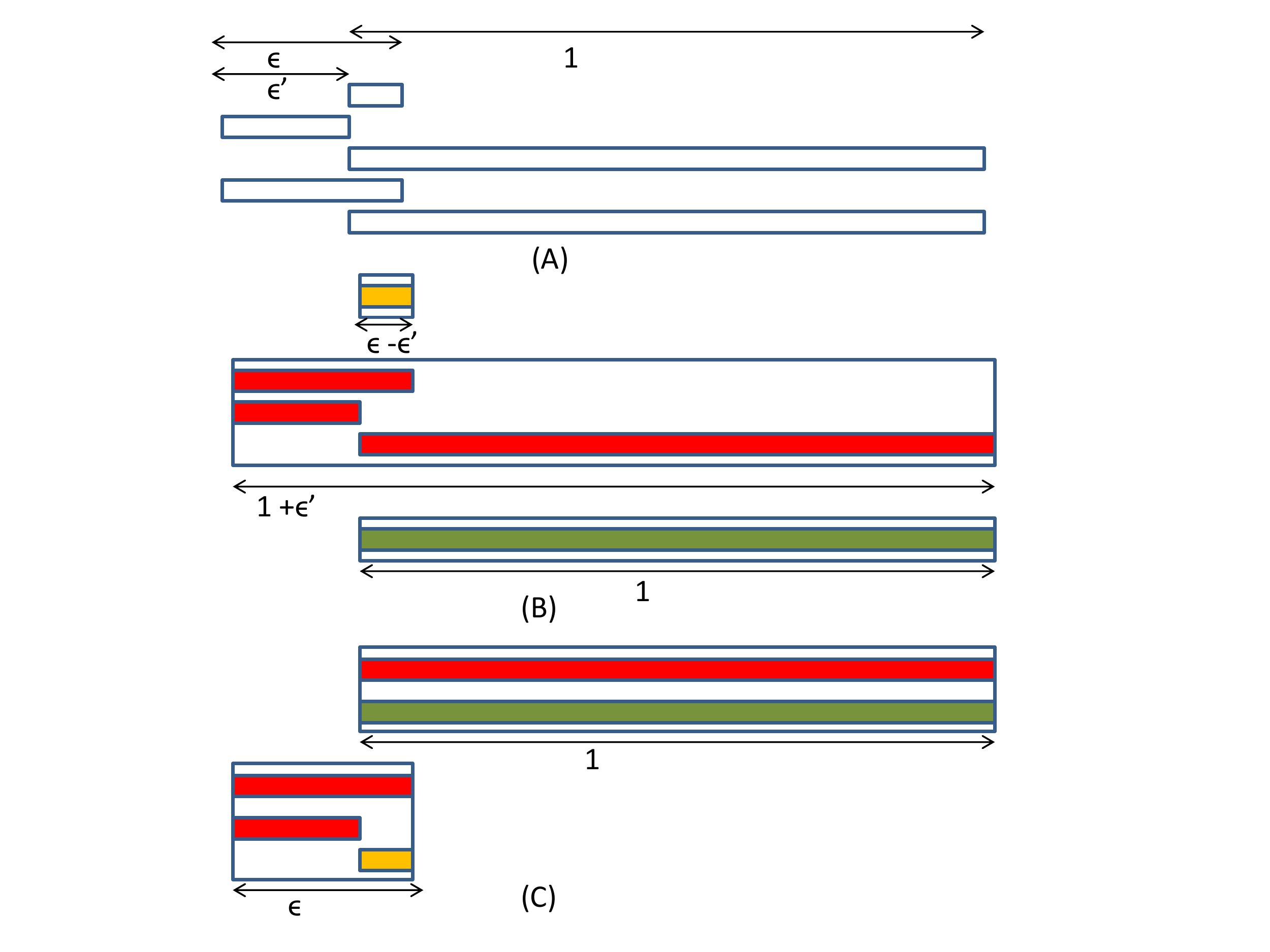}
	\caption{(A) An instance of interval jobs and $g=2$. (B) A 
possible output by the algorithms of Kumar and Rudra \cite{FSTTCS05} and 
Alicherry and Bhatia \cite{ESA03}, of cost = $2+\epsilon$. (C) The optimal 
solution of cost $1+\epsilon$.}
	\label{fig:example-ab-kr}
\end{figure}

\begin{theorem}
\label{thm:busy2}
There exist $2$-approximation polynomial time algorithms 
for the busy time problem on interval jobs. 
The approximation factor is tight. 
\end{theorem}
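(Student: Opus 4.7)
The plan is to establish Theorem~\ref{thm:busy2} in two parts: (a) show that the algorithms of Alicherry--Bhatia~\cite{ESA03} and Kumar--Rudra~\cite{FSTTCS05}, originally designed for wavelength/fiber assignment on path topologies, can be adapted to yield $2$-approximations for busy time on interval jobs, and (b) exhibit a family of instances witnessing that the ratio $2$ cannot be improved by either algorithm.

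For the upper bound, the key reduction is to replace the potentially real-valued time axis by the $O(n)$ interesting intervals of $\scr{J}$, across each of which the demand $D(I_i) = \lceil |A(I_i)|/g\rceil$ is constant. The original algorithms assume that the local demand is a multiple of $g$, so for each interesting interval $I_i$ with $cg < |A(I_i)| \le (c+1)g$, I would add $(c+1)g - |A(I_i)|$ dummy interval jobs spanning $I_i$. By Observation~\ref{obs:dp} this padding leaves $D(I_i)$, and therefore the demand-profile lower bound $\sum_{I_i \in \textscr{I}} D(I_i) \le OPT(\scr{J})$, unchanged. After this padding I would run either (i) the Kumar--Rudra two-phase packing (first assign jobs to levels with at most two per level, then open two fibers per every $g$ consecutive levels and split conflicts by parity) or (ii) the Alicherry--Bhatia flow procedure (iteratively construct a flow network whose min-cut is at least $2$ wherever demand is positive, extract two flow paths of pairwise non-overlapping intervals, and assign them to two freshly opened bundles; after $g$ such iterations, the local demand has been reduced by $g$ everywhere). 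In either case, each bundle is a union of tracks of pairwise disjoint interval jobs, so the capacity constraint $g$ is respected, and the total busy time charged to any interesting interval is at most $2D(I_i)$. Summing, the cost is at most $2 \sum_{I_i} D(I_i) \le 2 OPT(\scr{J})$. The running time is polynomial because the discretization to $O(n)$ interesting intervals introduces only $O(ng)$ dummy jobs, and the original algorithms run in time polynomial in the discretized input size.

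For tightness, I would use the instance sketched in Appendix~\ref{sec:lb} and Figure~\ref{fig:example-ab-kr}: with $g=2$, two unit-length interval jobs together with three short jobs of lengths $\epsilon$, $\epsilon'$, and $\epsilon - \epsilon'$, arranged so that the demand is a multiple of $g=2$ everywhere. An optimal packing uses a single bundle of busy time $1+\epsilon$, whereas a natural execution of either adapted algorithm commits the two unit jobs to one bundle and is then forced to open a second bundle of span $1+\epsilon$ for the remaining short pieces (or, symmetrically, splits the unit jobs across bundles), yielding total busy time $2+\epsilon$. As $\epsilon \to 0$, the ratio $(2+\epsilon)/(1+\epsilon) \to 2$, matching the upper bound.

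The main obstacle is verifying that the discretization to interesting intervals, combined with dummy padding, preserves both the feasibility of the produced schedules and the original analyses' charging argument, since those analyses were written against an integer-link model with uniform capacity. What makes this go through is the uniformity of demand on each interesting interval and the fact that each flow path (respectively each level) extracted by the adapted algorithms corresponds to a set of pairwise non-overlapping interval jobs, so the standard bundle-construction of $g$ such sets gives a feasible schedule whose busy time is bounded by twice the demand profile, which is itself a lower bound on $OPT(\scr{J})$.
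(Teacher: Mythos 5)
Your proposal follows essentially the same route as the paper: discretize the real time axis into $O(n)$ interesting intervals over which the demand is constant, pad with dummy jobs so demand is a multiple of $g$, run the Kumar--Rudra or Alicherry--Bhatia procedure to charge each interesting interval at most $2D(I_i)$, and invoke the demand-profile lower bound of Observation~\ref{obs:dp}; the tightness witness with $g=2$, two unit jobs, and three short jobs of lengths $\epsilon$, $\epsilon'$, $\epsilon-\epsilon'$ is exactly the instance the paper uses in Appendix~\ref{sec:lb}. The paper's own proof of Theorem~\ref{thm:busy2} is just a pointer to Sections~\ref{sec:kr}--\ref{sec:lb}, and your argument reproduces the content of those sections faithfully.
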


\begin{proof}
The proof follows from the discussions of Sections \ref{sec:kr}, 
\ref{sec:ab}, and \ref{sec:lb}.
\qed
\end{proof}

\section{Appendix of full version: Flexible Jobs}
\label{app:flexible}
\subsection{Prior $4$-approximation}
In this section we discuss the busy time problem with flexible jobs. 
This problem was studied by Khandekar et al.~\cite{FSTTCS10}, who 
refer to this problem as the real-time scheduling problem. 
They gave a $5$-approximation for this problem when the jobs can 
have arbitrary widths. For the unit width jobs, their 
analysis can be modified to give a $4$-approximation. 

As a first step towards proving the $5$-approximation for 
flexible jobs of non-unit width, Khandekar et 
al.~\cite{FSTTCS10} prove 
that if $g$ is unbounded, then this 
problem is polynomial-time solvable. 
The output of their dynamic program essentially converts 
an instance of jobs with flexible windows 
to an instance of interval jobs 
(with rigid windows), 
by fixing the start and end times of every job. 

\begin{theorem}
\label{thm:khandekar_app}
~\cite{FSTTCS10} If $g$ is unbounded, the real-time scheduling problem is 
polynomial-time solvable. 
\end{theorem}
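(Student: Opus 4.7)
The plan is to exploit two facts: first, unbounded $g$ means the parallelism constraint is vacuous, so we lose nothing by consolidating onto a single machine; second, an optimal single-machine schedule has enough combinatorial structure to be recovered by a polynomial dynamic program over candidate block endpoints.

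First I would reduce to the single-machine setting. With $g$ unbounded, any multi-machine schedule can be merged onto one machine without violating the capacity bound, and merging only unions busy intervals, which cannot increase total measure. So the problem becomes: pick start times $s_j \in [r_j, d_j - p_j]$ for each job on a single machine to minimize the Lebesgue measure of $\bigcup_j [s_j, s_j + p_j]$. The busy time of any such schedule decomposes into a set of disjoint \emph{maximal busy blocks} $[a_1,b_1], \ldots, [a_q,b_q]$, with each job contained in exactly one block, and within each block, jobs are packed contiguously (by definition of maximality).

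Next I would establish the structural claims that make the problem tractable. By a swap argument, we may assume that within each block, jobs are executed in non-decreasing order of deadline (EDF). By shifting each block as far right as its constituent deadlines and the following block allow, we may assume each block's right endpoint $b_k$ equals the deadline $d_j$ of some job $j$ in the block; this gives only $O(n)$ candidate block-end times. Finally, a further exchange argument shows we may take the jobs of each block to be a contiguous range in a canonical ordering, e.g., by deadline with ties broken by release time. With these reductions in hand, design a DP: sort jobs in the canonical order and define $f(i,t)$ as the minimum total busy time of a feasible schedule of jobs $1,\ldots,i$ in which the last block ends at time $t$, where $t$ ranges over the $O(n)$ candidate values. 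The transition guesses the index $k$ at which the last block begins, which together with $t$ determines the block's start $a = t - \sum_{l=k+1}^{i} p_l$ and length; feasibility (i.e., $r_j \le a + \sum_{l=k+1}^{j-1} p_l$ and $a + \sum_{l=k+1}^{j} p_l \le d_j$ for every $j$ in the block, and $a$ at least the previous block's end) is checked in $O(n)$ time. Then $f(i,t) = \min_{k,t'} \bigl\{f(k,t') + (t - a)\bigr\}$ subject to these constraints, and the optimum is $\min_t f(n,t)$. With $O(n^2)$ states and $O(n^2)$ work per transition, the total running time is polynomial.

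The main obstacle will be the contiguity claim: that, in the canonical ordering, the jobs of each optimal block form an interval of indices. Swapping jobs across blocks can change busy time by altering how far a block must shift to respect the release times and deadlines of its members, so the exchange has to be done carefully. I would argue it by starting from an optimal schedule and iteratively swapping ``crossing'' pairs of jobs --- two jobs $j < j'$ in the canonical order assigned to different blocks with $j$'s block occurring later --- while showing that the swap (possibly combined with a local re-shift of the two affected blocks) preserves feasibility and does not increase the measure of the union of block intervals. If this exchange cannot be completed directly, I would fall back on a slightly richer DP state that records additional information about the last block (for instance, its latest release time and tightest remaining deadline slack), at the cost of a higher polynomial running time but without changing the qualitative conclusion.
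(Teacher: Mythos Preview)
The paper does not give its own proof of this theorem: both occurrences (Theorem~\ref{thm:khandekar} and Theorem~\ref{thm:khandekar_app}) are stated as citations of Khandekar et al.~\cite{FSTTCS10}, with no argument supplied beyond the remark that a dynamic program fixes job positions so as to minimize the span. There is therefore nothing in the present paper to compare your proposal against line by line.

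As a standalone reconstruction, your outline is broadly sound and in the right spirit. The reduction to a single machine is correct, and restricting block right-endpoints to the $O(n)$ deadlines is the standard move. Two remarks. First, your ``EDF within a block'' normalization is phrased as if jobs are executed sequentially, but with unbounded $g$ jobs in a block may overlap arbitrarily; what you really want is the weaker (and easy) claim that within a block one may order jobs by start time so that this order agrees with the deadline order, which suffices for the feasibility checks in your transition. Second, you are right that the contiguity-in-canonical-order claim is where the real work lies; the naive swap of a crossing pair can violate release-time constraints in the earlier block or deadline constraints in the later one, so the exchange must be coupled with a re-shift of both blocks and a proof that the union measure does not grow. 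Your fallback of enlarging the DP state is a legitimate way to sidestep this and still land at a polynomial bound, which is all the theorem asserts.
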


From Theorem \ref{thm:khandekar_app}, 
the busy time of the output of the dynamic program 
on the set of (not necessarily interval) 
jobs $\textscr{J}$ is equal to 
$OPT_\infty(\textscr{J})$. 

Once Khandekar et al. obtain the modified interval instance, they 
apply their $5$-approximation algorithm for non-unit width interval 
jobs to get the final bound. 
However, for jobs 
with unit width, one can apply the same dynamic program to convert the instance 
to interval jobs and then
apply the $4$-approximation algorithm of Flammini et al. \cite{IPDPS09} 
for interval 
jobs with bounded $g$ to get the final bound of $4$. 

The $2$-approximation algorithm \cite{FSTTCS10} for interval instance 
charges the demand profile, hence it is immediately not clear how to 
extend 
it to handle flexible jobs since the demand profile cannot be 
defined analogous to the interval case. One possible natural extension 
is to follow the approach of Khandekar et al., to convert a flexible 
instance 
to an interval instance, and then apply the algorithm to this modified 
instance. 
Furthermore, the algorithm of Kumar and Rudra assumes that the demand 
profile everywhere is a multiple 
of $g$. Hence, after modifying the instance to an interval instance, 
we need to add dummy jobs accordingly to interesting intervals to bring up 
their demands to multiples of $g$.  
However, there exists an instance where this algorithm will approach 
a factor of $4$ of the optimal solution. 
This is the worst that it can do, since we prove in the following lemma 
that the demand profile of the modified instance of interval jobs 
is at most twice the demand profile of the optimal solution (note that 
once 
the jobs have been assigned in the optimal solution, their positions get 
fixed, and hence the demand profile can now be computed easily). 

\begin{lemma}
\label{lemma:dp2}
The demand profile of the output of the dynamic program converting 
the flexible jobs to interval jobs is at most $2$ times the demand profile 
of an optimal solution structure. 
\end{lemma}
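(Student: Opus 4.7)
The plan is to bound $DeP(\mathcal{J}_{DP})$ from above by $\ell(\mathcal{J}')/g + OPT_\infty(\mathcal{J}')$, and then to observe that each of these two summands is independently a lower bound on $DeP(\mathcal{J}_{OPT})$, the demand profile of the interval instance obtained by fixing jobs at their start times in an optimal solution. Putting these facts together gives $DeP(\mathcal{J}_{DP}) \le 2\, DeP(\mathcal{J}_{OPT})$.

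For the upper bound, I will apply the elementary inequality $\lceil x/g\rceil \le x/g + 1$ when $x>0$, and $\lceil x/g\rceil = 0$ when $x=0$, on every interesting interval. This gives
\[
DeP(\mathcal{J}_{DP}) \;=\; \sum_{I_i} \bigl\lceil |A_{DP}(I_i)|/g \bigr\rceil\, \ell(I_i) \;\le\; \frac{1}{g}\sum_{I_i} |A_{DP}(I_i)|\, \ell(I_i) \;+\; \sum_{I_i:\, |A_{DP}(I_i)|>0} \ell(I_i).
\]
The first sum equals $\ell(\mathcal{J}')/g$, because each job $j$ contributes precisely $p_j$ to $\sum_{I_i} |A_{DP}(I_i)|\ell(I_i)$ and processing lengths are preserved by the dynamic program. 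The second sum is exactly the span of the DP output, which by Theorem~\ref{thm:khandekar_app} equals $OPT_\infty(\mathcal{J}')$.

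For the matching lower bounds on $DeP(\mathcal{J}_{OPT})$, I would repeat these two observations on the fixed positions of an optimal solution. First, since $D_{OPT}(I_i) \ge |A_{OPT}(I_i)|/g$, summing $D_{OPT}(I_i)\,\ell(I_i)$ over all interesting intervals of $\mathcal{J}_{OPT}$ and telescoping by job again yields $DeP(\mathcal{J}_{OPT}) \ge \ell(\mathcal{J}')/g$. Second, on every interesting interval that carries any job in $\mathcal{J}_{OPT}$ we have $D_{OPT}(I_i) \ge 1$, so $DeP(\mathcal{J}_{OPT}) \ge Sp(\mathcal{J}_{OPT}) \ge OPT_\infty(\mathcal{J}')$ (the second inequality because any optimal schedule is feasible, hence its span cannot beat the unbounded-$g$ optimum). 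Consequently
\[
DeP(\mathcal{J}_{OPT}) \;\ge\; \max\!\left\{\frac{\ell(\mathcal{J}')}{g},\; OPT_\infty(\mathcal{J}')\right\} \;\ge\; \frac{1}{2}\left(\frac{\ell(\mathcal{J}')}{g} + OPT_\infty(\mathcal{J}')\right),
\]
and combining with the upper bound completes the proof.

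There is no genuine obstacle here: the argument reduces to one arithmetic estimate and two standard lower bounds (mass and span). The only care needed is in being precise about which instance each quantity refers to — the original flexible instance $\mathcal{J}'$, the DP-produced interval instance $\mathcal{J}_{DP}$, or the interval instance $\mathcal{J}_{OPT}$ extracted from an optimal schedule — and in using that the total mass $\ell(\mathcal{J}')$ is an invariant of all three since neither transformation alters processing lengths.
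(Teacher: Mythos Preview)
Your proof is correct and follows essentially the same strategy as the paper: bound the cost of the DP demand profile by $\ell(\mathcal{J}')/g + OPT_\infty(\mathcal{J}')$, and then observe that each summand is a lower bound on the target quantity. The paper reaches the same upper bound via a level-by-level charging argument (the lowest level of the DP demand profile is charged to $OPT_\infty$, and each higher level is charged to the mass of the level beneath it, which is full of $g$ jobs), whereas you obtain it directly from the pointwise inequality $\lceil x/g\rceil \le x/g + 1$ on each interesting interval; the two decompositions are equivalent, and your presentation is somewhat more explicit.
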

\begin{proof}
The objective function of the dynamic program (Theorem \ref{thm:khandekar_app}) is to 
minimize the total 
busy time of a flexible job instance assuming $g$ is unbounded. Since the 
dynamic program 
is optimal, it will pack as many jobs and as much length as possible 
together. Hence, 
if a job has a choice of being assigned to a spot where other jobs need to 
be assigned 
as well, then it will be assigned at that spot instead of at some other 
spot where 
no jobs need to be assigned. Therefore, at any level of the demand 
profile, we 
can charge it to the mass of the level below, and if it is the first (or, lowest) 
level, we 
charge it to $OPT_\infty$ bound. Hence, in total the optimal solution gets 
charged twice, 
once by the mass bound, and once by the span bound, 
giving a $2$-approximation. 
\qed
\end{proof}

There exists an instance of flexible jobs for which the demand profile 
output by the dynamic program of Khandekar et al. approaches $2$ times the 
cost of the demand profile of the optimal solution structure. We have 
shown such an 
instance in Figure \ref{fig:busydp}. The instance consists of the 
following 
types of jobs: one interval job of unit length, followed by $(g-1)$ 
disjoint sets of identical $g$ interval 
jobs, where in the $i^{th}$ set, each job is of length $1+i\epsilon$, ($i 
\in \{1, \ldots, (g-1)\}$). 
Apart from these, there are $g-1$ flexible jobs, where the $i^{th}$ job is 
of  
length $1+i\epsilon$, where $i\in \{1, \ldots, (g-1)\}$ and has a feasible 
window spanning the 
the windows of the first $i+1$ disjoint sets of interval jobs,  
as shown in the figure. 
An optimal solution would pack the $g-1$ flexible jobs with the first 
interval job, 
and the remaining $(g-1)$ disjoint sets of identical $g$ interval jobs in 
their respective windows, 
with a total busy time of $g + \left(\frac{g(g+1)}{2}-1\right)\epsilon$. 
The dynamic program however disregards capacity constraints of the 
machines, and simply 
tries to minimize the span of the solution. Hence, with a little effort it 
can be 
seen that the unique output of the dynamic program 
(as shown in Figure~\ref{fig:busydp})
would have a span of $g + \frac{g(g-1)}{2}\epsilon$, and the demand 
profile 
on imposing a capacity of $g$ is of cost $2g - 1 + g(g-1)\epsilon$, which 
approaches $2$ the cost of 
the optimal solution when $\epsilon \rightarrow 0$.

\begin{figure}[htbp]
	\centering
		\includegraphics[width=0.75\textwidth]{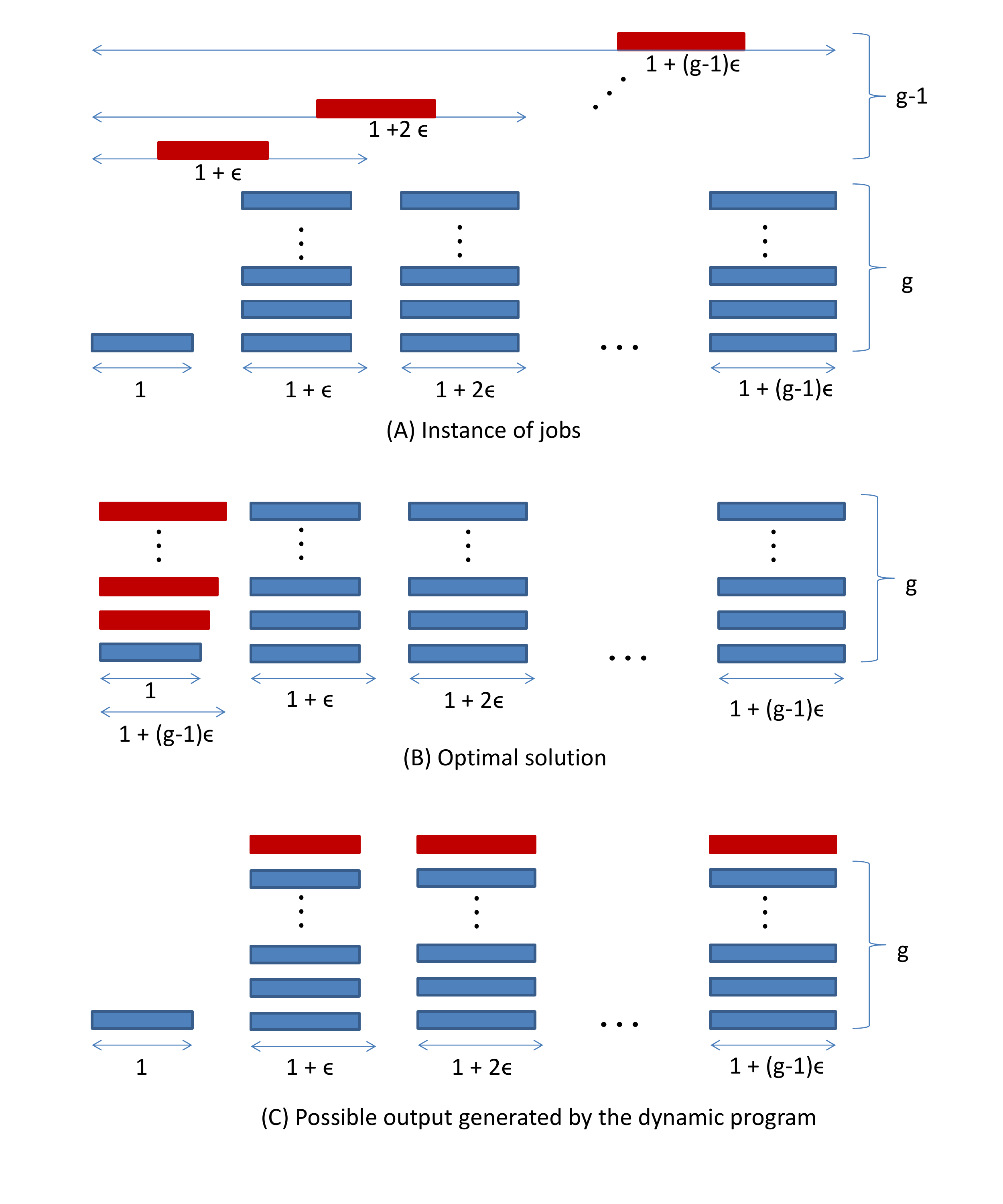}
	\caption{(A) An instance of interval and flexible jobs. (B)The 
optimal solution of busy time $g+ \frac{g^2 + g - 2}{2}\epsilon$. (C)The 
output of the dynamic program of Khandekar et al.~\cite{FSTTCS10} of busy 
time = $2g-1 +g(g-1)\epsilon$.}
	\vspace{-20pt}
	\label{fig:busydp}
\end{figure}

\begin{theorem}
A natural extension of the $2$-approximation algorithm of Kumar and Rudra 
\cite{FSTTCS05} 
(or the algorithm of Alicherry and Bhatia~\cite{ESA03}) 
for the interval jobs problem, to the flexible jobs problem, gives an 
approximation of $4$. This factor is tight. 
\end{theorem}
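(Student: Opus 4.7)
The plan is to establish the 4-approximation by chaining three factor-of-two (or trivial) estimates, and then to exhibit an instance witnessing asymptotic tightness. The ``natural extension'' under consideration takes a flexible instance $\mathcal{J}'$, applies the unbounded-$g$ dynamic program of Theorem~\ref{thm:khandekar} to fix every job's start time and thereby produce an interval instance $\mathcal{J}$, and then runs the 2-approximation of Kumar--Rudra (or Alicherry--Bhatia) on $\mathcal{J}$.

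For the upper bound, I would chain three inequalities. First, by Theorem~\ref{thm:busy2}, the 2-approximation on $\mathcal{J}$ returns a solution of cost at most $2 \cdot DeP(\mathcal{J})$. Second, by Lemma~\ref{lemma:dp2}, the demand profile of the DP's output satisfies $DeP(\mathcal{J}) \le 2 \cdot DeP(\mathcal{S}^\star)$, where $\mathcal{S}^\star$ is the interval instance realized once any optimal solution to $\mathcal{J}'$ has fixed all job positions. Third, by Observation~\ref{obs:dp} applied to $\mathcal{S}^\star$, the demand profile lower-bounds the busy time of its own instance, so $DeP(\mathcal{S}^\star) \le OPT(\mathcal{J}')$. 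Composing yields cost $\le 4 \cdot OPT(\mathcal{J}')$.

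For tightness, I would use the family of flexible instances depicted in Figure~\ref{fig:busydp}. The discussion around that figure already shows $DeP(\mathcal{J}) \to 2g$ while $OPT(\mathcal{J}') \to g$, so the factor-2 blowup in Lemma~\ref{lemma:dp2} is asymptotically tight on this family. What remains is to show that on the particular interval instance $\mathcal{J}$ produced by the dynamic program, Kumar--Rudra (or Alicherry--Bhatia) actually incurs cost approaching $2 \cdot DeP(\mathcal{J})$, rather than matching $DeP(\mathcal{J})$ itself. If the vanilla Figure~\ref{fig:busydp} instance does not already force this, I would augment each uniform demand block of $\mathcal{J}$ with a small $\epsilon$-gadget analogous to Figure~\ref{fig:example-ab-kr}, so that the parity-based fiber pairing in Phase~2 of Kumar--Rudra, or the flow-based pairing of Alicherry--Bhatia, is driven to open a second bundle where one would suffice; the two factor-2 losses then compound into a ratio approaching $4$.

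The main obstacle is precisely this compositional tightness, as opposed to tightness of each step in isolation. A single flexible instance must simultaneously (i) force the DP to produce a ``tall, narrow'' demand profile whose cost is about $2 \cdot OPT(\mathcal{J}')$, and (ii) present a demand profile on which the interval 2-approximation is genuinely forced into its own worst case. Since both algorithms make structural choices (the DP's stacking; the 2-approximations' pairings) that depend delicately on exact release times, deadlines, and lengths, the construction must carefully interleave the two bad-example patterns. I would parameterize the instance by two independent $\epsilon$-scales---one governing the stacking gap exploited by Lemma~\ref{lemma:dp2}, the other governing the pairing gap exploited by the Figure~\ref{fig:example-ab-kr} construction---and verify by direct case analysis that the composed algorithm is forced into the product of the two worst cases.
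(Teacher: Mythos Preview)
Your proposal is correct and follows essentially the same route as the paper: the upper bound is obtained exactly by composing Lemma~\ref{lemma:dp2} with Theorem~\ref{thm:busy2}, and for tightness the paper does precisely what you anticipate in your contingency. It builds a flexible instance structurally close to Figure~\ref{fig:busydp} but with each block of identical interval jobs replaced by a gadget embedding the $\epsilon,\epsilon'$ pattern of Figure~\ref{fig:example-ab-kr}, so that the dynamic program's stacking and the parity/flow pairing of the interval 2-approximation are simultaneously driven to their worst cases.
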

\begin{proof}
The approximation upper bound of $4$ follows from Lemma \ref{lemma:dp2} 
and Theorem \ref{thm:busy2}. 
However, there is a tight example as well. 
In this example, we have an instance of interval and flexible jobs. 
The instance consists of a unit length interval 
job, followed by $g-1$ disjoint occurrences of the gadget shown in Figure 
\ref{fig:factor4gadget}. 
The gadget consists of $g$ unit length interval jobs, $2g-2$ interval jobs 
of length $\epsilon$, 
$2$ interval jobs of length $\epsilon'$ and $2$ jobs of length $\epsilon - 
\epsilon'$, 
as shown in the figure. 
There are $g-1$ unit length flexible jobs, each with windows spanning the 
windows 
of the union of all of the interval jobs. 

\begin{figure}[htbp]
	\centering
		\includegraphics[width=0.75\textwidth]{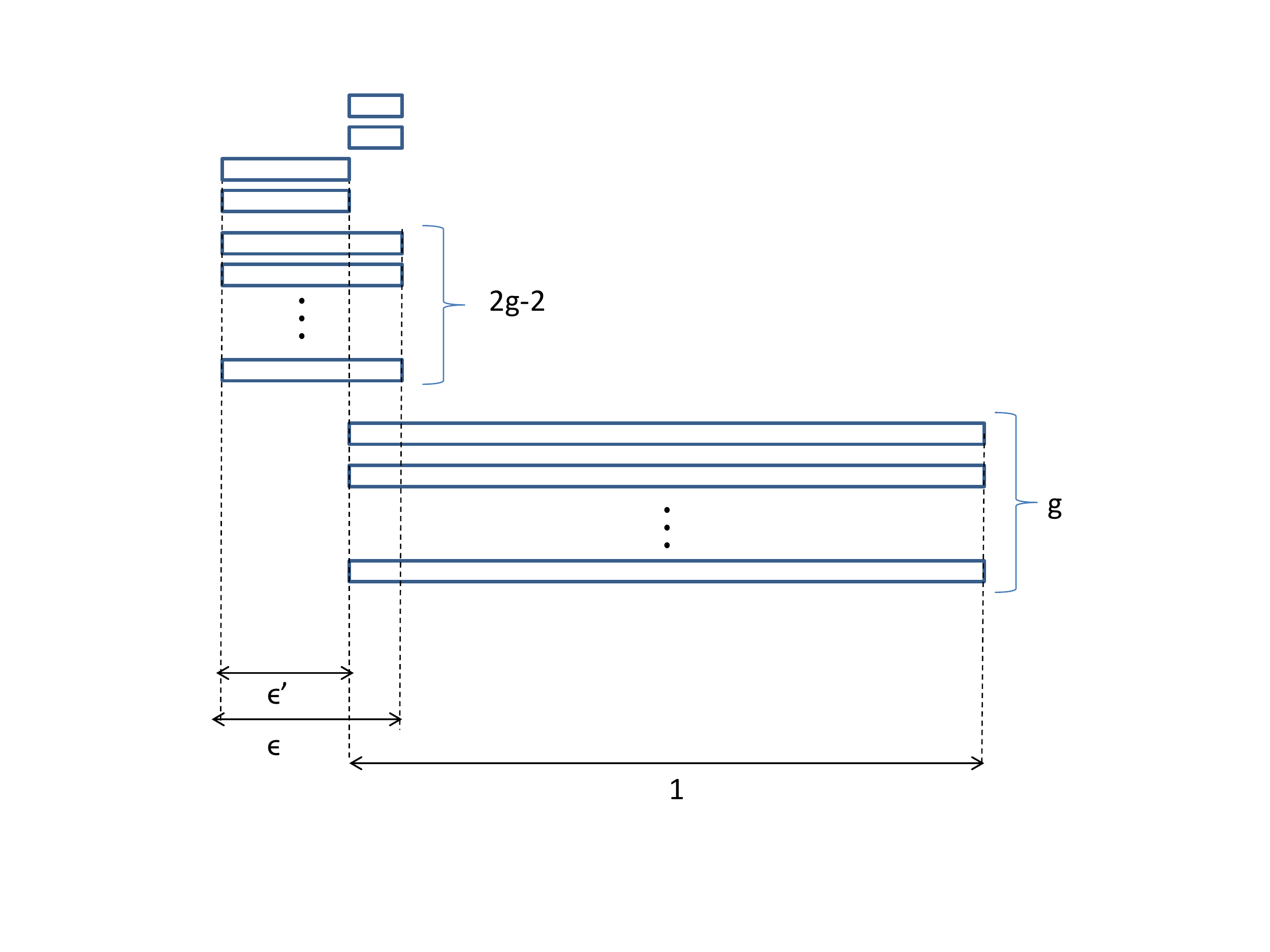}
	\caption{The gadget for the factor 4 example.}
	\label{fig:factor4gadget}
\end{figure}

On running the dynamic program to minimize span, a possible output is when 
each of 
$g-1$ flexible jobs are packed along with the $g-1$ gadgets. 
For applying the algorithms of Kumar and Rudra (or Alicherry and Bhatia), 
we need to make 
sure the demand everywhere is a multiple of $g$. Hence we add $g-1$ dummy 
jobs of unit length 
coincident with the first unit length interval job, as well as with each 
of the $g-1$ gadgets with 
a flexible job. This is shown in Figure \ref{fig:factor4dp}. 

\begin{figure}[htbp]
	\centering
		\includegraphics[width=0.90\textwidth]{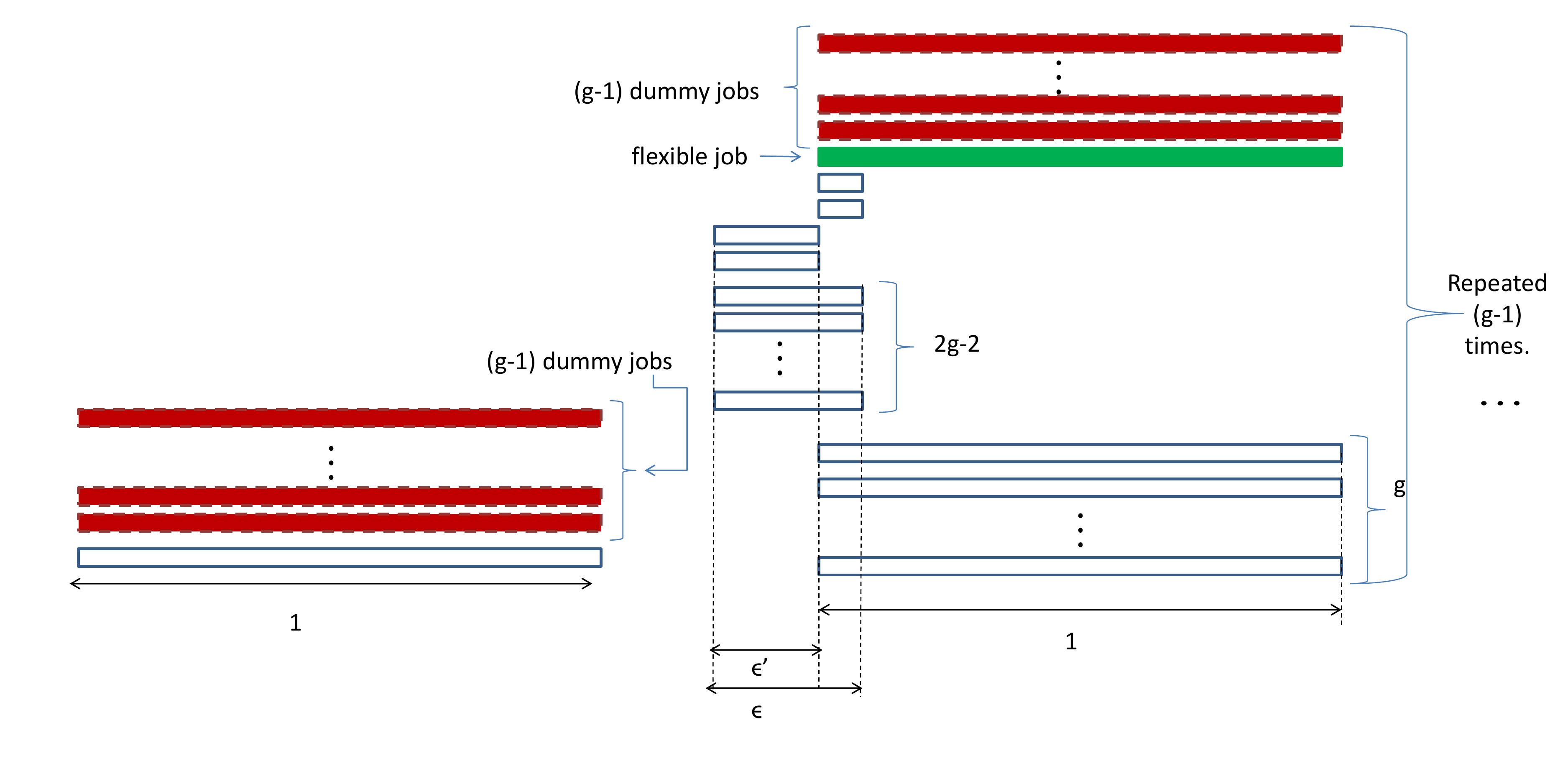}
	\caption{Output of the dynamic program on the instance of interval 
and flexible jobs for the factor 4 example.}
	\label{fig:factor4dp}
\end{figure}

Now, one possible run of the algorithm of Kumar and Rudra~\cite{FSTTCS05}
(or Alicherry and Bhatia~\cite{ESA03}) 
may result in the packing shown in 
Figure~\ref{fig:finaloutput4}, of cost $1 + 4(g-1) + O(\epsilon)$.
In contrast, the optimal solution packs the flexible jobs 
with the first unit-length interval job, 
and packs all the identical unit length jobs together, 
for a total cost of $g + O(\epsilon)$. 
Hence the ratio approaches $4$ for large $g$ and small $\epsilon$. 

\begin{figure}[htbp]
	\centering
		\includegraphics[width=0.70\textwidth]{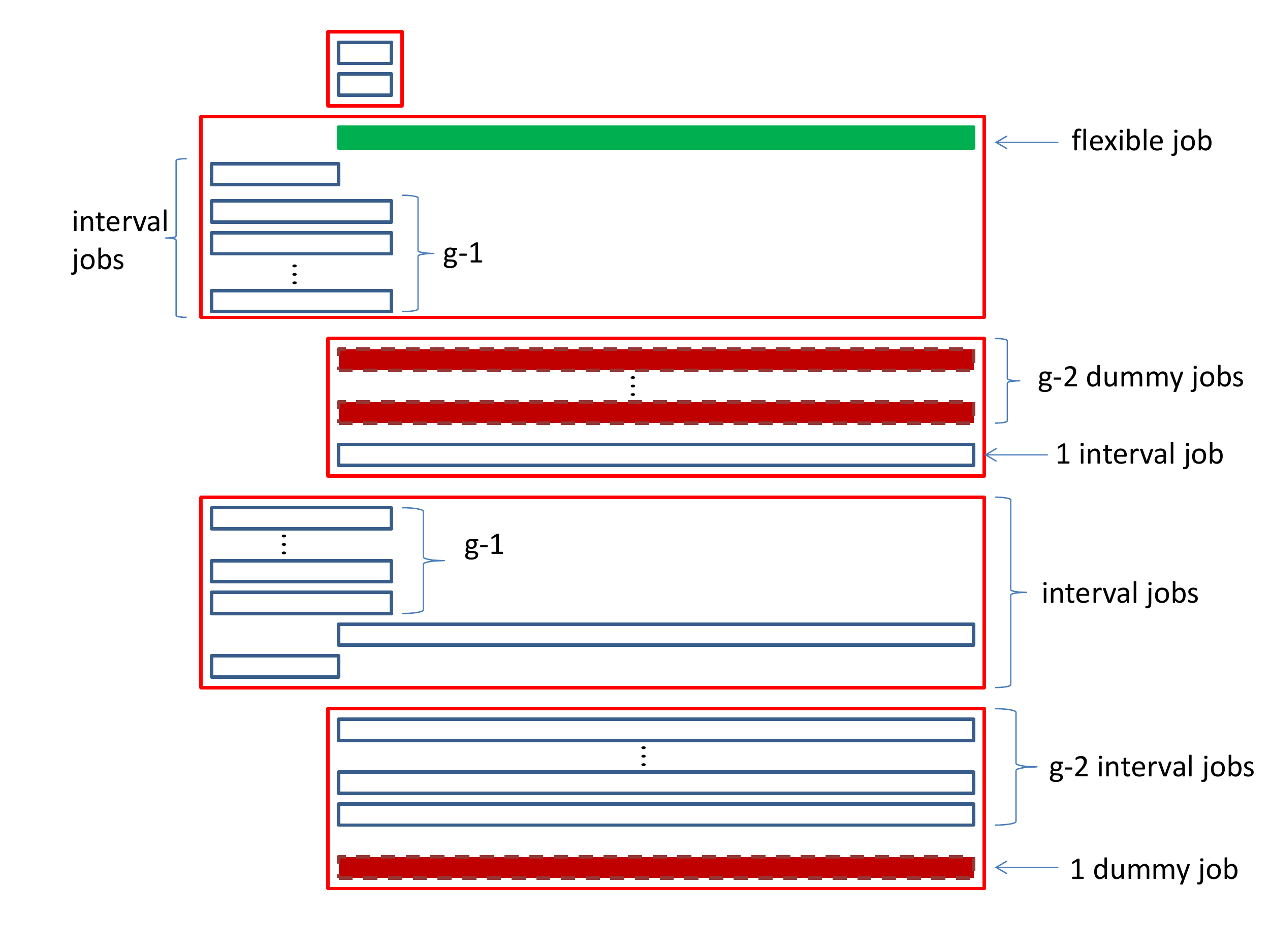}
	\caption{Possible busy time bundling produced by a 
run of Kumar and Rudra's~\cite{FSTTCS05} or 
Alicherry and Bhatia's~\cite{ESA03} algorithms on 
one gadget along with the flexible job and dummy jobs. }
	\label{fig:finaloutput4}
\end{figure}
\qed
\end{proof}

\end{document}